\newcommand{\AND}{{\sc And}\xspace}
\newcommand{\SET}{{\sc And}}
\newcommand{\point}{p}
\begin{document}

\title{($c$-)\AND: A new graph model}

\author{Mauricio Soto\inst{1}\thanks{Email: \email{mausoto@dim.uchile.cl}. Gratefully   acknowledges   the
    support of  CONICYT, Apoyo al  retorno de investigadores  desde el
    extranjero grant 82130059.}
  \and Christopher Thraves Caro\inst{2}\thanks{Email: \email{cbthraves@gsyc.es}. Christopher Thraves Caro is supported by Spanish MICINN grant Juan de la Cierva, Comunidad de Madrid grant S2009TIC-1692
and Spanish MICINN grant TIN2008--06735-C02-01.}
}
% put your affiliation here, not your full address. If you like to give
% away your email address, put it in the \thanks as above.
%\address{\addressmark{1}DIM, Universidad de Chile, Chile\\
%  \addressmark{2}GSyC, Universidad Rey Juan Carlos, Spain}
  
  \institute{DIM, Universidad de Chile, Chile
\and
GSyC, Universidad Rey Juan Carlos, Spain}
%\keywords{Graph theory, graph representation, disk graphs, intersection graphs, (max-)tolerance graphs, boxicity $2$ graphs.}
% don't try to cheat here, we will check the dates!
%\received{\today}
%\revised{}
%\accepted{}
\maketitle
\begin{abstract}
In this document, we study the scope of the following graph model: 
  each vertex is assigned to a box in a metric space and to a
  representative element that belongs to that box. 
Two vertices are connected by an
  edge if and only if its respective 
  boxes contain the opposite representative element. 
We focus our study on the case where boxes (and therefore representative elements) associated to vertices are 
  spread in the Euclidean line.
 We give both, a combinatorial and an intersection characterization of
   the model.  Based on these characterizations, we determine graph
   families that contain the model (e. g., boxicity $2$ graphs) 
   and
   others  that the  new model  contains 
   (e.  g., rooted  directed path). 
We also study the particular case where each representative element is the center of its
respective box.  In this particular case, we provide constructive representations
  for interval, block and outerplanar graphs. 
 Finally, we show that the general and the particular model are not equivalent by
  constructing  a graph  family  that separates the two cases.  
\end{abstract}
\section{Introduction}\label{sec:intro}
%Given a collection of sets, the \emph{intersection graph} 
%that represents this collection is the graph formed by assigning a different 
%vertex to each set and joining two vertices with an edge if and 
%only if their corresponding sets have nonempty intersection.  
%There exist several studies with respect to intersection graphs, 
%mainly due to their application to different fields such as biology 
%\cite{springerlink:10.1007/BF02460022} or scheduling \cite{golumbic2004algorithmic}, 
%among others. 
%%Brandst{\"a}dt et al. 
%%in \cite{brandstdt1999graph} and McKee et al. in 
%%\cite{mckee1999topics} show the significance of intersection graphs.
%On the other hand, a 
A \emph{disk graph} is a graph where the set of 
vertices corresponds to a collection of points that belong to a metric 
space and an edge connects two vertices if and only if their corresponding 
points are at a distance of at most a parameter $r$.
% Intersection graphs 
%interrelates with  geometric graphs. Indeed, 
%When the collection of 
%vertices of an intersection graph is represented by equal sized circles,
%intersection and geometric graphs are equivalent. 
%For instance, 
%unit interval graphs are intersection graphs of unit intervals in the Euclidean line as well as
%geometric graphs of points in the Euclidean line where the parameter $r$ is equal to $1$.
%graphs that represent the intersection of intervals in the
%real line. 
%The classic \emph{Circle packing} theorem \cite{koebe36} states that a graph is planar 
%if and only if it has a representation of disks that intersect at most in one point (disk contact graphs). 
%
An important application of disk graphs is in the area of \emph{sensor networks}. 
Sensor networks are networks formed by sensor nodes, little devices deployed 
in a geographic area with monitor purposes. Sensors communicate with 
each other via a radio channel. Every sensor covers with its radio signal 
a communication area around it and two sensors communicate with each other 
when they are placed %deployment places these two sensors 
within each other communication area. In an ideal model, the communication area of a sensor is a circle. 
Therefore, in the same ideal model, if every sensor covers equally sized communication areas, 
the network formed by sensors is a disk graph. That explains why researchers have used disk graphs 
to represent sensor networks, particularly unit disk graphs  \cite{DBLP:journals/talg/Farach-ColtonFM09} 
or some variations \cite{springerlink:10.1007/s11276-007-0045-6}. 

Nevertheless, it is difficult to find such an ideal situation in a real deployment, %. such an ideal situation is difficult to find in a real deployment, 
mainly due to physical or geographical restrictions. For instance, when 
the deployment area is irregular, %such as the case of sand dunes, 
the communication area of a  sensor might be shrunken in one direction
due to an obstacle, while, in the opposite direction, the area is free of any obstacle. 
On the other hand, some sensors may have directional antennas which 
produce communication areas that are far from being a circle, 
or that place the sensor location far from the \emph{center}
of its communication area. 
Therefore, the existence of a communication link between two sensors 
is not determined by the distance between them,
neither by the 
intersection of their communication areas. 
In fact, one has to be sure that the
%but by the fact that  
communication areas \emph{cover} the opposite sensor. 

Consequently, we  propose a new graph family that aims to include the different topologies that may be created due to those restrictions. 
Consider a \emph{set} $S$ and an element $\point \in S$ 
as a \emph{representative element} of $S$. Consider now a graph where each vertex corresponds 
to a pair $(S, \point)$ and an edge between two vertices exists if and only if the set associated with a vertex contains 
the representative element of its fellow and vice versa. According to this definition, nonempty 
intersection between two sets is not enough to guarantee the existence of their corresponding edge. 
Moreover, when the sets belong to a metric space, there is no positive distance between two 
representative elements that guarantees the existence of their corresponding edge. 
Therefore, this definition differs from disk graphs, as well as from intersection graphs.

%In the area of graph representation, an \emph{implicit representation} of a graph $G$ is defined as a representation of $G$ 
%that assigns $O(\log n)$ bits to each vertex, such that there is an adjacency testing algorithm that decides adjacency between two vertices $x$ and $y$
%based only on the bits stored at vertices $x$ and $y$ (cf. \cite{spinrad2003efficient}). Hence, 
%
In this document, we consider the family induced by the above definition when sets are boxes
in an Euclidean metric space. We aim to understand the properties of such a graph family.  
We focus our study on the case where boxes and representative elements associated to vertices are 
 spread in the Euclidean line. We study the extent of this definition as a graph family. We provide an intersection model and 
 a combinatorial characterization of the model. Additionally, we tackle the subfamily of graphs where all representative elements 
 are the center of its respective boxes.  

%
%Let us consider a \emph{set} $S$ and an element $\point \in S$ 
%as a \emph{representative element} of $S$. For a graph of the family, each vertex corresponds 
%to a pair $(S, \point)$ and an edge between two vertices exists if and only if the set associated with a vertex contains 
%the representative element of its fellow and vice versa. Note that, according to this definition, nonempty 
%intersection between two sets is not enough to guarantee the existence of their corresponding edge. 
%Moreover, when the sets belong to a metric space, there is no positive distance between two 
%representative elements that guarantees the existence of their corresponding edge. 
%Therefore, this new definition differs from geometric graphs, as well as from intersection graphs.

\section{Definitions}\label{sec:model}
%We start by defining the graph model that we consider in this document.  
%All  graphs 
We consider  graphs that  are  finite,  connected,
  undirected,  loopless  and  without  parallel  edges.  For  a  graph
  $G=(V,E)$, 
  we denote  by $V(G)$ and  $E(G)$ the set  of vertices  and edges,
  respectively. When the graph under  consideration is clear, we use only $V$
  and $E$. 
  The edge $\{u,v\}$ is denoted by $uv$. If $uv\in
  E(G)$ we say that $v$ is a \emph{neighbor} of $u$ and vice versa. 
  The    set     of    neighbors    of    $u$     is    denoted    by
  $\mathcal{N}(u)$.  Additionally, the \emph{closed neighborhood} of $u$ is defined as
  $\mathcal{N}[u] :=\mathcal{N}(u)\cup \{u\}$.

A \emph{box} in the $d$-dimensional Euclidean space is the Cartesian product of
  $d$ closed intervals.  A box $B$ is described
  as the set $B=\{(x_1,x_2,\ldots,x_d)\in \mathbb{R}^d: {L}_i\le x_i \le
  {R}_i\}$, where  $L_i$ and  $R_i$ denote the  extreme points  of the
  interval in the $i$-th dimension.
  The \emph{center} of a box is  the Cartesian product of the
  centers of the intervals in each dimension of the box.  Namely, the center of the box 
  $B$ as defined above %=\{(x_1,x_2,\ldots,x_d)\in \mathbb{R}^d: {L}_i\le x_i\le {R}_i\}$
  is the point $((L_1 + R_1)/2,(L_2 + R_2)/2,\ldots,(L_d + R_d)/2)$.
\begin{definition}[\AND-realization]
  An \emph{\AND-realization} of a graph $G$ in the
  $d$-dimensional Euclidean space is a 
  collection of  pairs $\{(B_v,p_v):v\in V(G)\}$  where each vertex
  $v$ is associated to a $d$-dimensional box 
  $B_v$ and to a representative element $p_v \in B_v$, 
  such that: 
  $$
  uv\in E(G)  \Leftrightarrow (\point_v \in B_u)  \wedge (\point_u \in
  B_v).
  $$
  A \emph{central} \AND-realization or $c$-\AND-realization
  of a graph is an \AND-realization in which each representative element $p_v$ is the center of its box $B_v$.
\end{definition}

  We denote by \SET$(d)$ the set of graphs that admit an \AND-realization in the
  $d$-dimensional Euclidean space.  
  The subset of \SET$(d)$ that contains the graphs that admit a
  $c$-\SET-realization in the
  $d$-dimensional Euclidean space is denoted by $c$-\SET$(d)$.
  For simplicity, all along  this document, we use notation ($c$-)\AND
  when we  say something  that concerns to  both classes  $c$-\AND and
  \AND.  

We mainly study 
 sets \SET$(1)$ and $c$-\SET$(1)$. In this context, 
  a box $B_u$ becomes an interval in the Euclidean line that we denote
  by its extreme points $[L(u), R(u)]$. 
%    \begin{definition}[Translation and scaling]
 % Let $G$ be a graph and 
   %Note that, 
   Any ($c$-)\SET$(1)$-realization
   can be modified so that it maintains the graph it represents.  
  % of $G$.
   For a given realization $\mathcal{R} =\{ ([L(u),R(u)],\point_u)\}_{u \in V(G)}$, 
   we define \emph{$\delta$-translation} and \emph{$\sigma$-scaling} of $\mathcal{R}$ as the realizations
     $\{( [L(u)+\delta,R(u)+\delta], \point_u+\delta)\}_{u\in V(G)}$ and
   %Also, we define a \emph{$r$-scaling} of $\mathcal{R}$ as the realization 
   $\{( [\sigma\cdot  L(u),\sigma\cdot R(u)],\sigma\cdot\point_u)\}_{u
     \in V(G)}$, respectively.  
   % We remark that $\delta$-translation and $\sigma$-scaling do not affect the graph under consideration. 
   %\end{itemize}
 %\end{definition}
%
% % In that case, we use $\interval_v$ to denote the interval that
% % corresponds   to  vertex  $v$.   Hence,  any   realization,  either
% % $c$-\SET$(1)$ or \SET$(1)$, becomes a collection of intervals
% % $\{\interval_v :  v \in V\}$  together with a collection  of points
% % $\{\point_v  :v \in  V \mbox{  and }  \point_v  \in \interval_v\}$,
% % where,  in  the  particular  case of  a  $c$-\SET$(1)$-realization,
% % $\point_v$ is  the center of  the interval $\interval_v$.   For any
% % interval $\interval_v$  in a ($c$-)\SET$(1)$-realization,  we shall
% % denote  $\leftinterval(v)$   and  $\rightinterval(v)$  the  extreme
% % points of $\interval_v = [\leftinterval(v),\rightinterval(v)]$. 
%% any ($c$-)\SET(1)-realization of a graph induces a natural order of its
% vertices by considering the position of points $\point_v$'s. 
%
%Furthermore,  

Any  ($c$-)\SET(1)-realization  of  a  graph  induces  a
  natural ordering of its vertices following its
  representative elements, i.e, $v<u$ according to a ($c$-)\SET(1)-realization if and only if 
  $\point_v < \point_u$ in that ($c$-)\SET(1)-realization.  
  In order to properly define this order, each representative element must 
  differ from each other. Nevertheless, it is easy to see that any ($c$-)\SET(1)-realization can be
  modified to fulfill this property.

Given an ordering $\pi$ of the vertices of a graph $G$, we denote by 
  $<_\pi$ the total  order induced by $\pi$. That is,  $u<_\pi v$ if $u$
  appears before  $v$ in $\pi$.
  The \emph{extreme}  vertices of an  order $<_\pi$ are  the vertices
  placed at the first and last position according to $<_\pi$.
  Given a vertex $u$, we denote by $\ell_{\pi}(u)$ and $\rho_{\pi}(u)$
  the leftmost and rightmost neighbors of $u$ in the order, 
  i.e., 
  $\ell_\pi(u)=\{ v \in \mathcal{N}[u] : v<_{\pi}w \: \forall \: w \in \mathcal{N}[u], w \neq v\}$
  and
  $\rho_\pi(u)=\{ v \in \mathcal{N}[u] : w<_{\pi}v \: \forall \: w \in \mathcal{N}[u], w\neq v\}$.

\section{Related work and our contributions}\label{sec:related}
  We compare  the introduced \AND family of graphs with other graph classes. Therefore, we refer the
  reader to an excellent survey authored by Brandst{\"a}dt 
  et  al.~\cite{brandstdt1999graph}  that contains  a description  of
  almost all graph families involved in this document.
This  survey  also  presents containment  relations  between  classes,
  graphs that separate one class form another, and priceless information
  in this area. 
A second  excellent book that  we refer to  the reader is  authored by
  J. Spinrad~\cite{spinrad2003efficient}. This book deals with efficient
  graph representation. For several graph classes, this book considers 
  such questions as existence  of good representations, algorithms for
  finding representations,  questions of characterization in  terms of
  representation, and how the representation affects the complexity of
  optimization problems.  
%}

There exists a vast amount of interesting literature
  related with the graph families that we mention in this document  \cite{mckee1999topics,golumbic2004algorithmic,roberts69,golumbic2004tolerance}: 
 \emph{geometric}, \emph{outerplanar}, \emph{interval}, \emph{max-tolerance} and \emph{boxicity $2$}
  graphs.  
  The  study of  intersection  graphs  dates back  a  long way.  For
  instance,  the  fact  that  all  graph  can  be  represented  as  an
  intersection   graph    was   proved    by   Marczewski    and   Sur
  in~\cite{SzpilrajnSur1945} and by    Erd\"os    et al. in~\cite{Erdos64}.
  Related with particular graph  families, the notion of \emph{boxicity}
of a graph is introduced in \cite{Roberts1969b}. 
On the other hand,  the notion of \emph{book embedding} of a graph is introduced in \cite{Bernhart1979320}
where the authors present some first properties and relations  with  other  invariants  such  as thickness,  genus, 
and  chromatic  number.
%Bipartite convex graphs were characterized in \cite{tucker1972structure} by A. Tucker. 
%The author connected the  consecutive 1's property for columns of a matrix with the family of bipartite \emph{convex} graphs. Hence, he obtained a characterization of the matrices that admit the consecutive 1's property for columns via forbidden patterns. 
 An intersection model for max-tolerance is introduced in \cite{kaufmann06}. Such a model was of great utility at the moment of determining the NP-Hardness of recognition problem for max-tolerance graphs. 
Finally, the book \cite{golumbic2004tolerance} surveys results related
with (\emph{max}-)\emph{tolerance} graphs. 
  
%Nevertheless,  since  the graph  family  introduced  in this  document
%encloses a  new definition,  it is difficult to  find literature
%directly related with ($c$-)\AND graphs.
%Here, we mention the most relevant works for this research besides the literature we reference in the rest of the document.
%in the previous sections of this document. 
% We refer the reader to  an excellent survey authored by Brandst{\"a}dt
%   et al. that contains a description of innumerable graph families
%   \cite{brandstdt1999graph}.  
% This  survey  also  presents containment  relations  between  classes,
%   graphs that separate one class form another, and priceless information
%   in this area. 
% A second  excellent book that  we refer to  the reader is  authored by
%   J. Spinrad \cite{spinrad2003efficient}. This book deals with efficient
%   graph representation. For several graph classes, this book considers 
%   such questions as existence  of good representations, algorithms for
%   finding representations,  questions of characterization in  terms of
%   representation, and how the representation affects the complexity of
%   optimization problems.  

The \SET$(1)$ family has been addressed very recently in parallel to our
work via totally independent way. 
T. Hixon  in his  Master thesis \cite{hixon13}  studies the  family of
\emph{cyclic  segment   graphs}.  This   family  corresponds   to  the
intersection graphs of segments that
lie  on  lines  tangent  to  a   parabola  and  no  two  segments  are
parallel. In his thesis, Hixon also works on the subclass called \emph{hook
  graphs}, in which all segment in the representation need to be tangent to the
parabola. 
The author proves that 
a graph is a hook graph if and only if it is the intersection graph of
a set of axis  aligned rectangles in the plane such  that the top left
corner of each rectangle lies on a unique point on the diagonal. Such 
result  is  equivalent to  the particular  case $d=1$  of Theorem
\ref{prop:boxicity2} in this document.  
The author also proves a combinatorial characterization
for    hook    graphs   which    is    is    equivalent   to    Theorem
\ref{thm:four_poitAND} in our work. 
This result  has been  also obtained independently by Feuilloley \cite{LF13}  in the
study of the gap between Minimum Hitting Size problem and Maximum Independent
Set problem for the \SET(1) family.
% According to this result introduced by Hixon and Theorem \ref{prop:boxicity2} presented here, 
% the family of hook graphs is equivalent to \SET$(1)$ family. %Hence, the results  
% We have to remark that his results were obtained in parallel to our work via totally independent ways. We have discovered his work only few days before the submission of this document, since the defense of his Master thesis was done less than one moth ago. 
% We give the corresponding value to his work even if, unfortunately for us, his work covers part of our work. 
% Nevertheless, and fortunately for us, we enhance his results.
% T. Hixon proves a combinatorial characterization for hook graphs. 
Based   on  this   characterization   Hixon   proves  that   interval,
outerplanar, and 2-directional  orthogonal ray graphs (\textsc{2DORG})
are all hook graphs.  
We extend these results by proving that interval and outerplanar
graphs are  not only \SET$(1)$  but also $c$-\SET$(1)$ graphs.  On the
other  hand,  we  extend  the  fact that  interval  graphs  belong  to
\SET$(1)$  by  proving that  a  larger  family, rooted  directed  path
graphs, belongs to \SET$(1)$.  
%(hence, they are also hook graphs). 
Moreover,  T. Hixon  proves  that  in a  hook  graph two  non-adjacent
vertices  cannot  be connected  by  three induced  disjoint paths  of
length larger than $4$.  
We prove indeed that such paths, in the case of $c$-\SET$(1)$, cannot
be all longer than $3$. 
% for
%\SET$(1)$  graphs,   and  cannot  be  all   longer  than   $2$  for
%$c$-\SET$(1)$.  
The author  gives also  polynomial algorithms  for the
Weighted Maximum  Clique problem and Weighted  Maximum Independent Set
problem; and approximations for the Chromatic Number and 
Clique Covering  Number.  Finally,  according to Hixon,  the \SET$(1)$
family is 
addressed  by  Cantanzaro et  al. \cite{CCH13}  in  the context  of  DNA
sequences. Nevertheless,  as far as we  know, their work has  not been
published. Hence,  it has been  impossible for us compare  our results
with theirs.  

% \MS{(The  authors of  \cite{hlineny01}  survey  recognition and  complexity
% results for  several intersection  graph families  (mainly represented
% with disks and balls).)}  
% 
% Moreover,  that   book
% presents many results related with intersection graphs in general.  

% An equivalent model
% to the one proposed in this document is studied in 
% the  context   of  \emph{interval   systems}  in  the   seminal  works
% \cite{gyori1984minimax} and \cite{lubiw1991weighted}. Nevertheless, in
% these works  the authors use  such model in  order to study  a totally
% different problem.  
% Particularly, the authors of these  works study the problem of finding
% a formula  for the minimum  size of a set  $S$ of intervals  such that
% each interval in a given set of  intervals $T$ is a union of intervals
% in $S$. Their results can be  used to compute maximal independent sets
% in \SET$(1)$ graphs.  
% Though, they do  not study the structure of the  set of graphs defined
% by this model.\marginpar{\MS{este parrafo ya no es necesario}} 

%%% Local Variables:
%%% TeX-master: "Full-Version.tex"
%%% End:

\subsection{Our contributions}
The main contribution of this  document is the study of the two graph
families: \SET$(d)$ and $c$-\SET$(d)$.
%With respect to results, 
We study the one-dimensional  version ($c$-)\SET$(1)$ of the families in
which the position of the representative elements induce an order of the vertices. 
\begin{itemize}
\item
We give a characterization of both \SET$(d)$ and
 $c$-\SET$(d)$ families via an intersection model in Subsection \ref{sec:intersection}.
% This result allows
%  to establish the equivalence between \SET$(1)$ and the family of hook
%  graphs introduced in \cite{hixon13}.  
\item
We give a characterization of the \SET$(1)$ family via a combinatorial characterization of the
possible orders of its vertices in any \SET$(1)$-realization in Subsection \ref{sec:comb}.
% Even though, this results was proved in parallel in \cite{hixon13}, we consider that our proof has its own value since, via this construction,  
% %\item 
% we prove that the family \SET$(1)$ admits a implicit representation. (A definition of implicit representation for a graph is given in Section \ref{sec:order}. In the literature in can be found in \cite{spinrad2003efficient}). 
% \item
% We establish links between ($c$-)\SET$(1)$ families and  
% other graph families, such as \textsc{Interval}, \textsc{Rooted directed path}, \textsc{Block}, \textsc{Bipartite convex}, %threshold graphs FINALMENTE NO HEMOS DICHO NADA DE ESTOS GRAFOS POR ESO LOS QUITO. 
% \textsc{Outerplanar}, \textsc{Max-tolerance}, and \textsc{Boxicity} $2$ graphs.
\item
%with respect to algorithms, 
We give a construction of \SET$(1)$-realizations for \textsc{Interval bigraphs} and \textsc{Rooted directed path} graphs, and 
$c$-\SET$(1)$-realizations for \textsc{Interval}, \textsc{Block} and 
\textsc{Outerplanar} graphs in Subsection \ref{sec:comb} and Section \ref{sec:contention-containment}, respectively. 

\item
Finally, in Section \ref{sec:difference}, we show differences between families \SET$(1)$ and
 $c$-\SET$(1)$,  proving  that  in  the first  case  two  non-adjacent
 vertices  cannot  be  connected   via  three  disjoint  paths  with
 edge-length strictly larger than $3$. While, in the second case, 
two  non-adjacent vertices  cannot  be connected  via three  disjoint
paths with edge-length strictly larger than $2$.
\end{itemize}

\section{Characterizations for \AND graphs}
\label{sec:order}

% An \SET$(1)$-realization of a graph induces a natural order of its
% vertices by considering the position of the points $\point_i$. 
% In order to define this order, the positions of each point must be
% different. Nevertheless, it is easy to see that any \SET$(1)$realization can be
% modified to fulfill this property. Formally, 
% 
% \begin{definition}[$R$-order]
%   Given  a  graph  in \SET$(1)$  and  a  realization  $R$ in  which  all
%   point positions are different, the 
%   \emph{$R$-order} of the vertices is the permutation
%   $\pi(\cdot)$   satisfying   $\pi(u)<\pi(v)\Leftrightarrow  \point_u<
%   \point_v$.    We   denote   this    total   order    relation   by
%   $<_R$. Furthermore,  for a vertex  $u$ we denote by  $\ell_R(u)$ and
%   $\rho_R(v)$  the  most  left  and  most  right  neighbor  of  $u$
%   with  respect to  the  $R$-order, that  is $\ell_R(i)=\min_{v  \in
%   \mathcal{N}(u)} \pi(v)$ and $\rho(u)=\max_{v \in \mathcal{N}(u)}
%   \pi(v)$.
% \end{definition}
In this section, we show that \SET$(d)$ graphs can be represented by an intersection model. Besides, we give a combinatorial
characterization   for   the   set   of   graphs   that   admit   an
\SET$(1)$-realization. 
% These characterizations  were also studied
% independently by Hixon \cite{hixon13}, 
% who introduce \SET$(1)$ from
% a geometric model of intersection segment tangent to a parabola; and 
% Cantanzaro et al.~\cite{CCH13}, who define the 
From these  characterizations, we obtain containment  relations with
other non trivial graph classes.

\subsection{Intersection graph characterization for \AND graphs}
\label{sec:intersection}

We show in  this subsection that graphs in the  \SET$(d)$ class can be
  represented   as   the   intersection   graph  of   boxes   in   the
  $2\cdot d$-dimensional Euclidean space.

  \begin{theorem}\label{prop:boxicity2}
    A graph $G$ belongs to \SET$(d)$ if
    and only if $G$ is the intersection graph of boxes in the 
    $2\cdot d$ Euclidean space, and each box can be described as 
    $\times_{i=1}^d( [p_i,R_i] \times [-p_i,-L_i] )$
    with $p_i,R_i,L_i>0$ for all $i\in \{1,..,d\}$.
  \end{theorem}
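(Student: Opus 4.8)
The plan is to exploit the fact that both the \AND-realization condition and box intersection factor coordinate-wise over the $d$ dimensions, so it suffices to understand a single dimension and then take products. The key device is a \emph{folding} map: to a one-dimensional interval $[L,R]$ together with a marked point $p$ satisfying $L\le p\le R$, I associate the planar box $[p,R]\times[-p,-L]$. Since $p\le R$ and $-p\le -L$, this is a nondegenerate box, and reading off its defining parameters recovers the triple $(L,p,R)$. The heart of the argument is the observation that two such planar boxes, coming from $(L(u),p(u),R(u))$ and $(L(v),p(v),R(v))$, intersect if and only if $p(u)\le R(v)$, $p(v)\le R(u)$, $L(v)\le p(u)$ and $L(u)\le p(v)$, i.e. if and only if $p(u)\in[L(v),R(v)]$ and $p(v)\in[L(u),R(u)]$; this follows at once from the rule that $[a,b]\cap[c,d]\neq\emptyset$ iff $a\le d$ and $c\le b$, applied to the two coordinates separately.

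For the forward implication, I would start from an \AND-realization $\{(B_v,p_v)\}$ with $B_v=\times_{i=1}^d[L_i(v),R_i(v)]$ and $p_v=(p_1(v),\dots,p_d(v))$. Translating all boxes and points by a sufficiently large common constant in each coordinate $i$ separately (an operation that does not change the represented graph), I may assume $0<L_i(v)\le p_i(v)\le R_i(v)$ for every $v$ and every $i$. Folding each coordinate independently, I assign to $v$ the $2d$-dimensional box $\beta_v:=\times_{i=1}^d\bigl([p_i(v),R_i(v)]\times[-p_i(v),-L_i(v)]\bigr)$, which has exactly the form required in the statement. Since a product of boxes meets another product of boxes iff all corresponding factors meet, $\beta_u\cap\beta_v\neq\emptyset$ iff for every $i$ we have $p_i(u)\in[L_i(v),R_i(v)]$ and $p_i(v)\in[L_i(u),R_i(u)]$, i.e. iff $p_u\in B_v$ and $p_u\in B_v$ fail simultaneously only when $uv\notin E(G)$ — more precisely iff $p_u\in B_v$ and $p_v\in B_u$, i.e. iff $uv\in E(G)$. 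Hence $G$ is the intersection graph of the boxes $\beta_v$.

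For the converse, I would suppose $G$ is the intersection graph of boxes $\beta_v=\times_{i=1}^d\bigl([p_i(v),R_i(v)]\times[-p_i(v),-L_i(v)]\bigr)$ with $p_i(v),R_i(v),L_i(v)>0$. Nondegeneracy of each factor forces $p_i(v)\le R_i(v)$ and $L_i(v)\le p_i(v)$, so putting $B_v:=\times_{i=1}^d[L_i(v),R_i(v)]$ and $p_v:=(p_1(v),\dots,p_d(v))$ yields $p_v\in B_v$, a candidate \AND-realization. The identical factor-wise intersection computation then shows $\beta_u\cap\beta_v\neq\emptyset$ iff $p_u\in B_v$ and $p_v\in B_u$, so $\{(B_v,p_v)\}$ is an \AND-realization of $G$ and $G\in\SET(d)$. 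I expect the only delicate point in the whole argument to be pinning down the four inequalities produced by the two-coordinate intersection of folded boxes and recognizing them as precisely the two membership conditions $p_u\in B_v$ and $p_v\in B_u$; once that identity is established, the rest is bookkeeping over the $d$ coordinate pairs together with the harmless positivity normalization.
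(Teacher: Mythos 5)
Your proposal is correct and follows essentially the same route as the paper: the same folding of each coordinate triple $(L_i(v),p_i(v),R_i(v))$ into the planar box $[p_i(v),R_i(v)]\times[-p_i(v),-L_i(v)]$, with adjacency translated coordinate-wise into intersection of the $2d$-dimensional product boxes. The only differences are cosmetic: you check the one-coordinate equivalence via the standard interval-intersection criterion and spell out the converse direction explicitly, while the paper verifies the same equivalence by a short min-based computation and leaves the converse implicit in the biconditional.
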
   
  \begin{proof}
 Let $G$ be a graph that belongs to \SET($d$). Consider a
   realization 
   \mbox{$\{(B_v,\point_{v})\}_{v\in V}$} of $G$, where
   $B_v=\times_{i=1}^d[L_i(v),R_i(v)]$     and     $p_v=\times_{i=1}^d\,
   p_{(v,i)}$.  
W.l.o.g, assume $L_i(v) >0$ for all $v\in V$ and $i\in\{1,..,d\}$.  
   For each $v \in V$, we define the box $B'_{(v,i)}$  in the
   Euclidean     plane      as     $[\point_{(v,i)},     R_i(v)]\times
   [-\point_{(v,i)},-L_i(v)]$. 
   Finally,  let  $S_v$  be the   $2\cdot d$-dimensional  box  defined  as  the
   cartesian  product of  boxes  $B'_{(v,i)}$, that  is $S_v=  \times_{i=1}^d
   ([p_{(v,i)},R_i(v)] \times  [-p_{(v,i)},-L_i(v)]) $.  

   Consider two vertices  $u, v \in V$, then $uv\in  E$ if and only
   if $p_{(u,i)}\in [L_i(v),R_i(v)]$ and $p_{(v,i)}\in [L_i(u),R_i(u)]$
   for  all   $i\in  \{1,..,d\}$.   For  a   fixed  $i$,   let us assume
   $p_{(u,i)}<p_{(v,i)}$. 
   Thus,                                       $p_{(v,i)}-p_{(u,i)}\le
   \min\{R_i(u)-p_{(u,i)},p_{(v,i)}-L_i(v)\}$ 
   or equivalently  $B'_{(u,i)}\cap B'_{(v,i)}\ne  \emptyset$.  Hence,
   vertices $u$ and $v$ are adjacent if and only if 
   $S_u\cap S_v\ne \emptyset$.
   %\qed
 \end{proof}

% We now prove that the reference line $\mathcal{L}$ can be chosen as
%    any non-axis 
%    parallel  straight  line. Let  consider  a  box realization  with
%    reference line $\mathcal{L}: Ax+By+C=0$. By
%    applying a reflexion and a translation we can place  all boxes in
%    the upper half-plane of a line \mbox{$\mathcal{L}':A'x+B'y=0$} with
%    $A',B'>0$. 
% We construct an equivalent realization with reference
%   line $\mathcal{L}_0$ by assigning for each box \mbox{$B'_v=[{p_v^x}',{p_v^x}'+a'_v] \times
%   [{p_v^y}',{p_v^y}'+b'_v]$}
%   a corresponding box {${B}_v=[p_v,p_v+a_v]\times[-p_v,-p_v+b_v]$} satisfying:
%  \begin{equation*}
%    {p_v^x}'\cos({\pi}/{4})=B'\cdot p_v, \quad {p_v^y}'\sin({\pi}/{4})=A'\cdot p_v, \quad   
%    a_v'\cos({\pi}/{4})=B'\cdot a_v', \quad b_v'\sin({\pi}/{4})=A'\cdot b_v'.  
%  \end{equation*}
%  In fact, let consider two boxes \mbox{$B'_u=[{p_u^x}',{p_u^x}'+a_u'] \times
%    [{p_u^y}',{p_u^y}'+b_u']$} and $B'_v=[{p_v^x}',{p_v^x}'+a_v'] \times
%    [{p_v^y}',{p_v^y}'+b_v']$ with ${p_u^x}'\le {p_v^x}'$. We have that: 
%    \begin{equation*}
%    B'_u\cap  B'_v\neq \emptyset  \Leftrightarrow ({p_v^x}'-{p_u^x}'\le
%    a'_u) \wedge ({p_u^y}'-{p_v^y}'\le b'_v)
%      \Leftrightarrow   ({p_v}-{p_u}\le  a_u)  \wedge  ({p_u}-{p_v}\le
%    b_v)
%    \Leftrightarrow B_u\cap B_v \neq \emptyset
%  \end{equation*}
%  NOTE: SHOW FAMILY (CUBE) IN \boxicity(2)- \SET(1).  

\begin{figure}[t]
  \begin{center}
     \resizebox{.85\linewidth}{!}{\begin{tikzpicture}[node distance=2cm,
  thick,main node/.style={circle,draw}, font=\Large]
	
\tikzstyle{dot}=[color=black!50,thick, dotted];

\begin{scope}[thick]%[xshift=15cm,yshift=1cm,rotate=90,thick]
  \node[main node] (1)  at (0,0) {1};
  \node[main node] (3) at  (1.25,-2.5) {3};
  \node[main node] (2) at (2.5,0) {2}; 
  \node[main node] (4) at  (1.25,-5) {4};
  \path
    (1) edge (2)
           edge (3)
    (2) edge (3)
    (3) edge (4);
\end{scope}
\begin{scope}[xshift=5cm,thick]
	\draw[dot]  (0,0) -- (5.5,-5.5);

	\def\s{1.6}
	\def\d{-0.4}
	
	\foreach \i/\c/\l/\r in {1/2/1/2.5,2/3/2.5/1.75,
													3/4/2.5/1.5,4/5/2.75/1.5} 
	{
	\draw[very thick]  (\c-\l,\i/\s+\d)  --++(\l+\r,0); 
	\draw[very thick]  (\c-\l,\i/\s-0.1+\d) -- (\c-\l,\i/\s+0.1+\d);
	\draw[very thick]  (\c+\r,\i/\s-0.1+\d) -- (\c+\r,\i/\s+0.1+\d);
	\draw (\c,\i/\s+\d) node {$\bullet$};	
	\draw (\c,\i/\s+\d) node[above] {$B_\i$};
	\draw [dashed, thick=normal] (\c,\i/\s+\d) -- (\c,-\c+\l);
	\draw (\c,- \c) rectangle ++(\r,\l) ; 
	}
	\foreach \i in {1,2,3,4}
	\draw (\i+1,-\i-1) node[left] {$(p_\i,-p_\i)$};
\end{scope}

\begin{scope}[xshift=11cm,thick]
	\foreach \i/\c/\l/\r in {1/2/1/2.5,2/3/2.5/1.75,
													3/4/2.5/1.5,4/5/2.75/1.5} 
	{
	\draw (\c,- \c) --++(\r,0)--++(-\r,\l)-- (\c,- \c);
	\draw[dot] (\c,-\c+\l)--++(\r,0)-- ++(0,-\l);
	}
\end{scope}
\end{tikzpicture}}
  \end{center}
     \caption{An example of a graph in \SET$(1)$ (left) with  its
       intersection model with boxes (center) and triangles (right hand side). }
     \label{fig:boxicity2}
\end{figure}
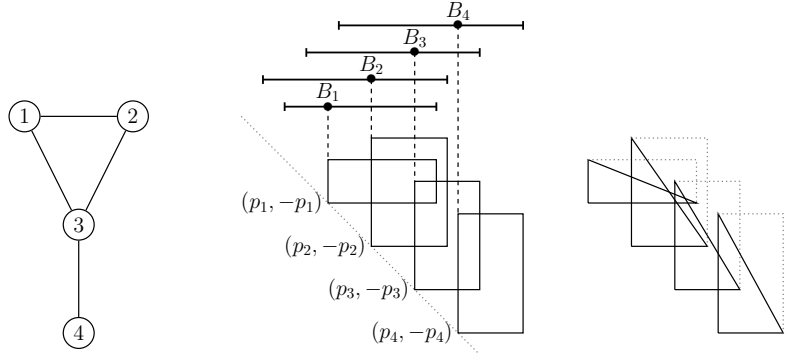

%\begin{remark}\label{rem:decomposition}
  For the one-dimensional case ($d=1$), Theorem \ref{prop:boxicity2} states that
  \SET$(1)$ graphs  correspond exactly  to the intersection  graphs of
  boxes in the Euclidean plane with its left-lower corner lying on the
  diagonal  $\mathcal{L}:  x+y=0$  (an example is shown in Figure~\ref{fig:boxicity2}).  
  Kaufmann et al.\ in \cite{kaufmann06} proved that  
  max-tolerance graphs correspond to the class of
  intersection of isosceles, axis parallel,
  right triangles (or lower halves of a square). % \emph{semi-squares} for short.  
  A  different representation of \SET$(1)$ graphs can be obtained
  by  keeping  the  left  lower  half of  the boxes  in  the  intersection
  model (an example is shown in Figure~\ref{fig:boxicity2}). 
  Particularly, when this intersection model is applied to $c$-\SET$(1)$ graphs, we obtain 
  an intersection model of isosceles, axis parallel,
  right triangles (or lower halves of a square).
Therefore, the following corollary holds.
 
  \begin{corollary}
    $c$-\SET$(1)$ $\subset$ \textsc{Max-tolerance}.
  \end{corollary}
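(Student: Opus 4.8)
The plan is to combine the intersection model of Theorem~\ref{prop:boxicity2} (specialized to $d=1$) with the Kaufmann et al.\ characterization of \textsc{Max-tolerance} graphs as intersection graphs of isosceles, axis-parallel right triangles (lower halves of axis-parallel squares). First I would take an arbitrary $G \in c$-\SET$(1)$ and fix a $c$-\SET$(1)$-realization $\{([L(v),R(v)],p_v)\}_{v\in V}$. Since the realization is \emph{central}, $p_v = (L(v)+R(v))/2$, so the box produced in the proof of Theorem~\ref{prop:boxicity2}, namely $[p_v,R(v)]\times[-p_v,-L(v)]$, is a square: its horizontal side has length $R(v)-p_v = (R(v)-L(v))/2$ and its vertical side has length $p_v - L(v) = (R(v)-L(v))/2$, which are equal. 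Now, replacing each such square by its lower-left triangle (the part below the anti-diagonal through its corners), the argument in the proof of Theorem~\ref{prop:boxicity2} shows that two vertices are adjacent iff the corresponding triangles intersect: the key inequality $p_v - p_u \le \min\{R(u)-p_u,\, p_v - L(v)\}$ (for $p_u < p_v$) is exactly the condition that the two lower-left triangles meet, since both triangles have their ``lower-left'' corner on the line $x+y=0$ and their hypotenuses on lines of slope $-1$. This gives a representation of $G$ as an intersection graph of isosceles axis-parallel right triangles, hence $G \in \textsc{Max-tolerance}$, establishing the inclusion $c$-\SET$(1) \subseteq \textsc{Max-tolerance}$.

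For the strictness of the inclusion I would exhibit a max-tolerance graph that is not in $c$-\SET$(1)$. The natural candidate is a graph already known to separate \textsc{Max-tolerance} from neighboring classes, or one ruled out of $c$-\SET$(1)$ by the path-obstruction proved later in the paper (Section~\ref{sec:difference}): in $c$-\SET$(1)$ two non-adjacent vertices cannot be joined by three disjoint paths all of edge-length strictly larger than $2$, whereas such a configuration is easily realizable by triangles. So I would take, e.g., the subdivision-type graph consisting of two non-adjacent vertices joined by three internally disjoint paths of length $3$, verify directly (a small explicit triangle drawing) that it is a max-tolerance graph, and invoke the forbidden-configuration result to conclude it is not in $c$-\SET$(1)$.

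The main obstacle is making the geometric equivalence ``two lower-left triangles with corners on $x+y=0$ intersect $\iff$ the governing inequality holds'' fully rigorous, including the degenerate cases (triangles sharing the corner, one triangle's corner lying on the other's hypotenuse) and confirming that the triangle obtained is genuinely the standard ``lower half of an axis-parallel square'' used in Kaufmann et al.'s characterization, rather than merely some right triangle. A secondary, milder obstacle is pinning down a clean separating example and checking it is max-tolerance; if citing the path obstruction from Section~\ref{sec:difference} feels circular in the exposition, an alternative is to name a concrete small graph known in the literature to lie in \textsc{Max-tolerance} but outside the relevant interval-like subclasses and argue directly that no central realization exists.
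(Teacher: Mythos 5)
Your inclusion argument is exactly the paper's: specialize Theorem~\ref{prop:boxicity2} to $d=1$, observe that centrality makes each box $[p_v,R(v)]\times[-p_v,-L(v)]$ a square, and keep the lower-left halves to obtain precisely the isosceles axis-parallel right-triangle (semi-square) intersection model of Kaufmann et al., hence membership in \textsc{Max-tolerance}. The paper gives no argument for strictness (it simply writes $\subset$ after the containment reasoning), so your separating-example sketch goes beyond the paper; as you yourself note, it is incomplete there, since you would still have to exhibit an explicit semi-square representation of the candidate graph before invoking the obstruction of Section~\ref{sec:difference}.
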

  
%    \begin{remark}
%    A family that is closely related with max-tolerance graphs is the family of tolerance graphs (cf. \cite{golumbic2004tolerance}).
%    Though, definitions are similar, tolerance graphs do not contain cycles of length longer than $4$. 
%    Therefore, as consequence of Lemma \ref{lem:cycle}, we know that the family of tolerance graphs does 
%    not contain neither \SET$(1)$ nor $c$-\SET$(1)$. 
%    Hence, a family of graphs that a priori might  be similar to  \SET$(1)$ differs from it. 
% \end{remark}
 
    % The inclusion is proper since $K_{2,3}$ is in \textsc{Max-tolerance}
    % but not in c-\SET(1).
  
  % \begin{figure}[h!]
  %   \centering
  %   \includegraphics[width=\textwidth]{Containment2}
  %   \caption{Graph classes relations ($|G|>4$). }
  %   \label{fig:class_relations}
  % \end{figure}

%%% Local Variables:
%%% TeX-master: "Full-Version.tex"
%%% End:

\subsection{A combinatorial characterization for the \SET$(1)$ graphs}
\label{sec:comb}

We recall that any \SET$(1)$-realization of a
graph induces a natural ordering of its
vertices by considering their respective representative 
elements.
% \footnote{Indeed, as we mention in the Introduction, such a natural ordering is also valid for any $c$-\SET$(1)$-realization. 
% Nonetheless, in order to be simple and clear, we state Definition \ref{def:r-order}  only for  \SET$(1)$-realizations.}.
% points $\point_v$'s.
This ordering needs to have different representative elements % assigned to different positions 
in order to be totally defined.  
Nevertheless, it is easy to  see that any ($c$-)\SET$(1)$-realization can be
modified to fulfill this property. %Formally, 

\begin{definition}\label{def:r-order}($\mathcal{R}$-order)
  Given  a  graph $G $ that belongs to \SET$(1)$  and  an  \SET$(1)$-realization  
  $\mathcal{R}$ of $G$ such that all representative elements are different. 
  % all point positions are different, 
  The \emph{$\mathcal{R}$-order}  of the set $V$, denoted by $<_{\mathcal{R}}$, is
  the total order induced by the representative elements. That is, for any pair of vertices $u$ and $v$: 
  $u<_{\mathcal{R}}v \Leftrightarrow  \point_u<
  \point_v.$
  % 
  % the permutation
  % $\pi_{\mathcal{R}}(\cdot)$   that satisfies  $\pi(u)<\pi(v)\Leftrightarrow  \point_u<
  % \point_v$, for a pair of vertices $u$ and $v \in V$.    
  % We denote this total order relation by $<_{\mathcal{R}}$. 
\end{definition}
% 
% 
% The characterization given in this section is based on the $\mathcal{R}$-order.  
% The $\mathcal{R}$-order determined by an \SET$(1)$-realization has a properties with
% respect to vertex connections that we use in the characterization.  
% \begin{definition}

Consider  an $\mathcal{R}$-order of a graph $G$ and two vertices 
$u<_\mathcal{R} v$ in $V$. If vertex $u$ has a neighbor $y$ after $v$ ($v<_\mathcal{R} y$) and
$v$ has a neighbor $x$ before $u$ ($x<_\mathcal{R} u$). Then, vertices $u$ and $v$ are
mutually contained in its corresponding intervals. Thus, vertices $u$ and $v$ must be
connected. Indeed, this property characterizes graphs that belong to the set \SET$(1)$. 
Therefore, we introduce the following definition for any ordering of the set of vertices of a graph.  

\begin{definition}\label{def:fourpointcondition}
  Given a graph $G=(V,E)$ and an order $<_\pi$ of its set of vertices. 
  We say that $<_\pi$ satisfies the \emph{four point condition} for \SET$(1)$
  if  and only  if  for  every quadruplet  of  vertices $x,u,v,y$,  it
  holds:

  $$\mbox{If } \: x <_{\pi} u  <_{\pi} v  <_{\pi} y \mbox{ and } xv, uy \in E 
  \Rightarrow uv \in E.$$
  
  Figure~\ref{fig:point_condition} shows a graphic representation of the four point condition for \SET$(1)$. 
\end{definition} 
% 
% The case when $\mathcal{R}$ is a $c$-\SET$(1)$-realization of a graph $G$ is similar. 
% Let  $q<_\mathcal{R}  u  <_\mathcal{R} v<_\mathcal{R}  w$  be  four
% vertices with  $(u,w)$ and $(v,q)\in E(G)$. Then,  the edges $(u,v)$
% must exists in 
% $E(G)$ and  at least one of  the edges $(q,u)$ or $(v,w)$ must also exists in
% $E(G)$   (Figure~\ref{fig:point_condition}  right).   In   fact,  if
% $|\point_u-\point_q|<  |\point_u-\point_w|$,  then vertices  $v$ and  $q$  are
% neighbors. Otherwise, since $u <_\mathcal{R} v$, then
% $|\point_v-\point_q|>|\point_u-\point_q|>
% |\point_u-\point_w|>|\point_v-\point_w|$, thus $v$ and  $w$  are
% neighbors (Figure~\ref{fig:point_condition} right).
% We name this property as
% \emph{the four point condition for $c$-\SET$(1)$}. 
We prove that for any graph $G$  the existence of an ordering of its
set  of  vertices  that  satisfies  the  four  point  condition  for
\SET$(1)$ 
is necessary and sufficient to decide if $G$ belongs to \SET$(1)$.
%This   characterization  has   been  also   proved  independently   by
%Hixon~\cite{hixon13}, %Cantanzaro et al.~\cite{CCH13} 
%and
%Feuilloley~\cite{LF13}.  

\begin{figure}[t!]
  \centering
  \resizebox{0.9\linewidth}{!}{\begin{tikzpicture}[node distance=1.5cm]
\begin{scope}
     \node[label=below:$x$](A){$\bullet$};
     \node[right of= A,label=below:$u$](B){$\bullet$};
     \node[right of =B,label=below:$v$](C){$\bullet$};
     \node[right of =C,label=below:$y$](D){$\bullet$};
     \tikzset{style = {bend left}}
     \draw (A.center) edge[out=45,in=135]
 (C.center);
     \draw(B.center) edge[out=45,in=135] (D.center);

\end{scope}

\begin{scope}[xshift=5.3cm]
	 \draw  node {$\Rightarrow$};
\end{scope}

\begin{scope}[xshift=6cm]
     \node[label=below:$x$](A){$\bullet$};
     \node[right of= A,label=below:$u$](B){$\bullet$};
     \node[right of =B,label=below:$v$](C){$\bullet$};
     \node[right of =C,label=below:$y$](D){$\bullet$};
     \tikzset{style = {bend left}}
     \draw (A.center) edge[out=45,in=135]
 (C.center);
     \draw(B.center) edge[out=45,in=135] (D.center);
     \draw(B.center) edge (C.center);

\end{scope}

%\begin{scope}[scale=.7]
%	 \draw (1.1,-0.7) node {$<_\mathcal{R}$};
%	 \draw (3.25,-0.7) node {$<_\mathcal{R}$};
%	 \draw (5.3,-0.7) node {$<_\mathcal{R}$};
%\end{scope}

  \end{tikzpicture}}
  \caption{Graphic representation of the four point condition for \SET$(1)$.} %and $c$-\SET$(1)$ vertex ordering.}
  \label{fig:point_condition}
\end{figure}
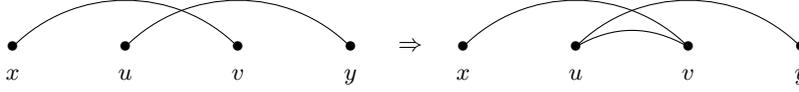

\begin{theorem}\label{thm:four_poitAND}
  A graph $G$ belongs to \SET$(1)$ if  and only if there exists an ordering of its set of
  vertices that satisfies the four point condition for \SET$(1)$.
\end{theorem}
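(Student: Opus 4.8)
The plan is to prove both directions of the equivalence by relating an \SET$(1)$-realization to the $\mathcal{R}$-order it induces.

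\medskip

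\noindent\textbf{Necessity.} First I would show that if $G\in$ \SET$(1)$, then any $\mathcal{R}$-order (obtained after perturbing the realization so all representative elements are distinct) satisfies the four point condition. Take vertices with $x<_\mathcal{R} u<_\mathcal{R} v<_\mathcal{R} y$ and $xv,uy\in E$. By the definition of an \SET$(1)$-realization, $xv\in E$ means $\point_v\in B_x=[L(x),R(x)]$ and $\point_x\in B_v$; since $\point_u<\point_v$ and $\point_x<\point_u$, we get $L(x)\le \point_x$ and, because $x<_\mathcal{R} u <_\mathcal{R} v$ and $\point_v\le R(x)$ is not quite what I need — the useful consequence is that $B_v$ reaches left past $\point_x$, hence past $\point_u$: from $uy\in E$ we have $\point_u\in B_y$ and $\point_y\in B_u$, so $R(u)\ge \point_y>\point_v$, i.e.\ $\point_v\in[L(u),R(u)]=B_u$. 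Symmetrically, from $xv\in E$ we have $\point_x\in B_v$, so $L(v)\le\point_x<\point_u$, and from $x<_\mathcal{R} u$ together with $\point_v>\point_u$ we need $\point_u\le R(v)$; this follows because $v<_\mathcal{R} y$ and $xv\in E$ gives $\point_v\le R(x)$, which is not directly it either, so the clean argument is: $\point_x\in B_v$ gives $L(v)\le\point_x\le\point_u$, and $\point_u\le\point_v\le R(v)$ trivially since $\point_v\in B_v$. Hence $\point_u\in B_v$. Combining, $\point_u\in B_v$ and $\point_v\in B_u$, so $uv\in E$.

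\medskip

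\noindent\textbf{Sufficiency.} Conversely, suppose $\pi$ is an order of $V$ satisfying the four point condition. I would construct an \SET$(1)$-realization whose induced order is $\pi$. Enumerate $V=\{v_1<_\pi v_2<_\pi\cdots<_\pi v_n\}$ and set $\point_{v_i}=i$. For each $v_i$, define $L(v_i)=\ell$ where $v_\ell=\ell_\pi(v_i)$ is the leftmost vertex of $\mathcal{N}[v_i]$ in the order, and $R(v_i)=r$ where $v_r=\rho_\pi(v_i)$ is the rightmost. Thus $B_{v_i}=[\ell_\pi(v_i),\rho_\pi(v_i)]$ with endpoints taken as integer positions (a small perturbation fixes the distinctness technicalities). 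It remains to check $v_iv_j\in E \iff \point_{v_i}\in B_{v_j}\wedge\point_{v_j}\in B_{v_i}$. The forward implication is immediate from the definition of $\ell_\pi,\rho_\pi$: if $v_iv_j\in E$ with $i<j$, then $v_j\in\mathcal{N}[v_i]$ so $j\le \rho_\pi(v_i)$, i.e.\ $\point_{v_j}\in B_{v_i}$, and symmetrically $\point_{v_i}\in B_{v_j}$. For the reverse implication, suppose $i<j$, $\point_{v_j}=j\le\rho_\pi(v_i)$ and $\point_{v_i}=i\ge\ell_\pi(v_j)$. Let $v_y=\rho_\pi(v_i)$, so $v_iv_y\in E$ and $j\le y$; let $v_x=\ell_\pi(v_j)$, so $v_xv_j\in E$ and $x\le i$. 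Then $x\le i<j\le y$, and if all four are distinct we are exactly in the situation $v_x<_\pi v_i<_\pi v_j<_\pi v_y$ with $v_xv_j,v_iv_y\in E$, whence the four point condition yields $v_iv_j\in E$. The degenerate cases ($x=i$ or $j=y$) give $v_iv_j\in E$ directly since then one of the edges $v_xv_j$ or $v_iv_y$ \emph{is} $v_iv_j$.

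\medskip

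\noindent\textbf{Main obstacle.} The nontrivial part is the reverse implication in the sufficiency direction, specifically handling the boundary/degenerate cases cleanly and making sure the choice of $B_v=[\ell_\pi(v),\rho_\pi(v)]$ is the right one — one must verify it does not accidentally create edges not present in $G$, which is precisely where the four point condition is invoked, and that it creates all edges of $G$, which is where the definitions of $\ell_\pi$ and $\rho_\pi$ as leftmost/rightmost \emph{closed}-neighborhood elements are used. I would also state explicitly the perturbation argument ensuring distinct representative elements and strict containment, so that the realization is genuine; this is routine but should be mentioned for completeness.
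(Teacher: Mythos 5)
Your proposal is correct and follows essentially the same route as the paper: necessity by showing the mutual containment $\point_u\in B_v$ and $\point_v\in B_u$ forced by the witnesses $x$ and $y$, and sufficiency by placing representative elements in the order $\pi$ and taking $B_v=[\ell_\pi(v),\rho_\pi(v)]$, with the four point condition closing the reverse implication. Your explicit treatment of the degenerate cases ($x=i$ or $j=y$) is a small point the paper glosses over, but the argument is the same.
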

\begin{proof}
  As we have seen previously, the four point condition is necessary  for  any
  \SET$(1)$-realization of $G$. For  the converse, let $<_\pi$ be
  any ordering of the vertices of $G$ which
  satisfies the four point condition. % for all quadruplets. 

  Let $\mathcal{R}_\pi$ be a realization constructed in the following way: 
  representative elements $\point_v$ are embedded in the Euclidean line arbitrarily but respecting the order $<_\pi$. %(for instance, $\point_v=\pi(v)$). 
  % (For simplicity, in this proof) 
  % we set the representative element of the $i$th vertex in the order $\pi$, say vertex $v$, at position $i$,
  % $\point_v=\pi(v)$. 
  For  each $v \in  V$, we define $B_v$  as the
  interval covering from the  leftmost to the rightmost neighbors of
  $v$        according       to        $<_\pi$,        that       is
  $B_v=[\ell_{\pi}(v),\rho_{\pi}(v)]$.  

  In order to verify that $\mathcal{R}_\pi$ is an \SET$(1)$-realization of  $G$, 
  consider an edge $uv\in E$ with $u<_\pi v$. By definition
  of $\mathcal{R}_\pi$, it holds that $u\in B_v$ and $v\in B_u$. On the other hand,
  if $u\in B_v$ and $v\in  B_u$, then there exist vertices $y\in \mathcal{N}(u)$ and
  $x\in \mathcal{N}(v)$ such that $x<_\pi u<_\pi v<_\pi y$. Thus, vertices $u$
  and $v$ are neighbors by the four point condition.
 % \qed
\end{proof}

\begin{remark}
  Note that the above construction allows us to place the representative elements of the vertices in the integers ranging from $1$ to $n$.
  Hence, any \SET$(1)$ graph can be represented as the collection of $B_v=\{[\ell_{\pi}(v),\rho_{\pi}(v)], \point_v\}$ for all $v \in V$, 
  where $\ell_{\pi}(v),\rho_{\pi}(v)$ and $\point_v$ are integers ranging from $1$ to $n$. On the other hand, in any \SET$(1)$-realization, 
  adjacency between two vertices $u$ and $v$ can be tested by performing four operations in order to check $\point_u \in [\ell_{\pi}(v),\rho_{\pi}(v)]$
  and $\point_v \in [\ell_{\pi}(u),\rho_{\pi}(u)]$. Therefore, we can conclude that the family of graphs \SET$(1)$ admits an \emph{implicit representation} as 
  defined in \cite{spinrad2003efficient}, which is: an \emph{implicit representation} of a graph $G$ is defined as a representation of $G$ 
  that assigns $O(\log n)$ bits to each vertex, such that there is an adjacency testing algorithm that decides adjacency between two vertices $u$ and $v$
  based only on the bits stored at vertices $u$ and $v$.  
\end{remark}

The four point condition is a useful tool to recognize %we obtain
% information about 
graph families that belong to the set \SET$(1)$ as well as families that do
not belong to it.  %We first give the following definition. 
We  now present  three graph  families that  belong to  \SET$(1)$, for
which we  show the existence  of an  ordering that satisfies  the four
point condition.  

%We present two definitions of graph families that we prove they belong to \SET$(1)$.
% \begin{definition}
A graph is a \emph{rooted directed path graph} (also known as directed path
graphs) if it has an intersection
model consisting of directed paths in a rooted directed tree, where
every     arc    is     oriented    from     the    root     to    the
leaves. Figure~\ref{fig:rooted_tree} shows an example of a rooted path
tree and its corresponding intersection model. 
% Another family that we prove it belongs to \SET$(1)$ is the family of  interval bigraphs. A bipartite graph $B=(X,Y,E)$ is an \emph{interval bigraph} 
% if every vertex can be assigned an interval on the real line such that for all $x \in X$ and $y \in Y$, it holds $xy \in E$ if and only if the corresponding intervals intersect.
% Note that in this representation, intersection of intervals corresponding to two vertices of the same partition does not create an edge. We omit the definition of \emph{outerplanar} graphs because we consider it is well known. 
% On the other hand, an ordering $<$ of $X$ in a bipartite graph $B=(X,Y,E)$ has the adjacency property if for every vertex $y$ in $Y$, $\mathcal{N}(y)$ consists of vertices that are consecutive (an interval) in the ordering $<$ of $X$.
% A bipartite graph $(X,Y,E)$ is \emph{convex} if there is an ordering of $X$ or $Y$ that fulfills the adjacency property.

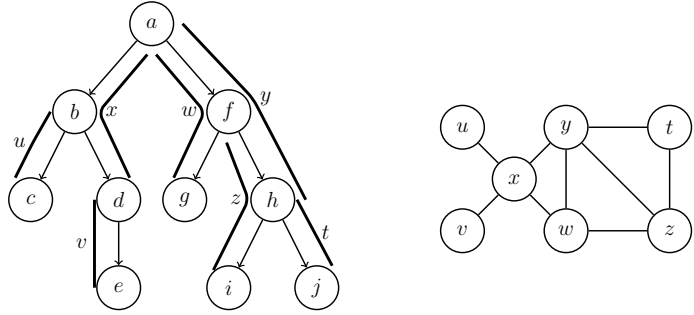
\begin{figure}[t]
  \centering
    \resizebox{.77\linewidth}{!}{%\usetikzlibrary{trees}

% Set the overall layout of the tree
\tikzstyle{level 1}=[level distance=2cm, sibling distance=3.5cm]
\tikzstyle{level 2}=[level distance=2cm, sibling distance=2cm]

% Define styles for bags and leafs
\tikzstyle{bag} = [draw, circle, thick,minimum size=24pt,text centered]

% The sloped option gives rotated edge labels. Personally
% I find sloped labels a bit difficult to read. Remove the sloped options
% to get horizontal labels. 
\begin{tikzpicture}[grow=down, sloped, edge from parent/.style={draw,->=stealth, thick}, font=\large]

\begin{scope}[scale=.85]
\node[bag](A) {$a$}
    child {
        node[bag] (B) {$b$}        
            child {
                node[bag] (C) {$c$}
            }
            child {
                node[bag] (D) {$d$}
                child {
                      node[bag](E) {$e$} 
                }
           }
    }
    child {
        node[bag] (F) {$f$}
        child {
                node[bag] (G) {$g$}
            }
            child {
                node[bag] (H) {$h$}
			       child {
                        node[bag] (I) {$i$}
                    }
                   child {
                        node[bag] (J) {$j$}
                   }
            }
    };

 \draw[ultra thick, rounded corners] (-.1,-.7)--(-1.2,-2) node[right] {$x$}--(-0.5,-3.5) ;
 \draw[ultra thick,rounded corners](-2.3,-2) -- (-2.7,-2.7) node[left]{$u$}--(-3.1,-3.5);
 \draw[ultra thick,  rounded corners] (.7,0)--(2.3,-1.7) node[right]{$y$}--(3.5,-4);
 \draw[ultra thick] (-1.3,-4)  -- (-1.3, -5) node[left]{$v$} --  (-1.3,-6) ;
 \draw[ultra thick, rounded corners] (.1,-.7)--(1.2,-2) node[left]{$w$}--(0.5,-3.5);
 \draw[ultra thick, rounded corners] (1.7,-2.7)--(2.2,-4) node [left] {$z$}--(1.4,-5.6);
\draw[ultra thick,  rounded corners] (3.3,-4)-- (3.7,-4.75) node [right]  {$t$}  -- (4.1,-5.5);

\end{scope} 

\begin{scope}[xshift=6cm, yshift=-2cm, thick]
  \node[bag] (U)  at (0,0) {$u$};
  \node[bag] (X) at  (1,-1) {$x$};
  \node[bag] (V) at (0,-2) {$v$}; 
  \node[bag] (Y) at  (2,0) {$y$};
  \node[bag] (W) at (2,-2) {$w$};
  \node[bag] (Z) at (4,-2) {$z$};
  \node[bag] (T) at (4,0) {$t$};
  \path
    (X) edge (U)
           edge (V)
		  edge (W)
          edge (Y)
    (W) edge (Y)
		edge (Z)
	(Y) edge (Z)
		edge (T)
	(Z) edge (T);
\end{scope} 

\end{tikzpicture}}
%  \scalebox{.8}{\input{images/PathinRootedTree}}
  \caption{At the right hand side, this figure shows an example of a rooted directed path graph. At the left hand side, this figure shows a rooted directed path representation of the graph. An inverse DFS on that tree
   is $(j\,i\,h\,g\,f\,e\,d\,c\,b\,a)$ inducing the following order
    of the vertices  of the graph: $(t\,z\,y\,w\,v\,x\,u)$.}
  \label{fig:rooted_tree}
\end{figure}

\begin{corollary}\label{cor:families-in-and}
  \textsc{Rooted directed path} graphs belong to the set \SET$(1)$. 
\end{corollary}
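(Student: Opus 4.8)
The plan is to apply Theorem~\ref{thm:four_poitAND}: it suffices to produce a linear order on $V(G)$ that satisfies the four point condition for \SET$(1)$. Fix a rooted directed path representation of $G$, i.e.\ a rooted tree $T$ with all arcs oriented away from the root together with directed paths $\{P_v : v\in V(G)\}$ in $T$ such that $uv\in E(G)\Leftrightarrow P_u\cap P_v\neq\emptyset$. Since every arc points away from the root, each $P_v$ is a descending path, so it has a topmost node $a_v$ and a bottommost node $b_v$ with $a_v$ an ancestor of $b_v$ (possibly $a_v=b_v$), and $P_v$ is exactly the set of nodes on the tree path joining $a_v$ and $b_v$. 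Write $w\preceq_T w'$ for ``$w$ is an ancestor of, or equal to, $w'$'' and let $\mathrm{LCA}(w,w')$ denote the least common ancestor of $w,w'$ in $T$.

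The first step is an adjacency criterion. Viewing $P_v$ as the chain $\{w : a_v\preceq_T w\preceq_T b_v\}$ in the ancestor order, I claim $uv\in E(G)$ if and only if $a_u\preceq_T\mathrm{LCA}(b_u,b_v)$ and $a_v\preceq_T\mathrm{LCA}(b_u,b_v)$. Indeed, any node common to $P_u$ and $P_v$ is an ancestor of both $b_u$ and $b_v$, hence $\preceq_T\mathrm{LCA}(b_u,b_v)$, and it lies $\succeq_T a_u$ and $\succeq_T a_v$; conversely, if $a_u,a_v\preceq_T\mathrm{LCA}(b_u,b_v)$, then $a_u,a_v$ are comparable, and the deeper one, say $a_u$, satisfies $a_v\preceq_T a_u\preceq_T\mathrm{LCA}(b_u,b_v)\preceq_T b_v$, so $a_u\in P_u\cap P_v$.

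Now run a depth-first search of $T$ from the root and let $\mathrm{dfs}(w)$ be the resulting preorder index of a node $w$. Order $V(G)$ by $u<_\pi v\Leftrightarrow\mathrm{dfs}(b_u)<\mathrm{dfs}(b_v)$, breaking ties among vertices with equal bottoms arbitrarily (reversing this order reproduces the one in Figure~\ref{fig:rooted_tree}, and the four point condition is invariant under reversal). The only other ingredient is the standard fact that the descendants of a node occupy a contiguous preorder interval: from it, if $\mathrm{dfs}(p_1)\le\mathrm{dfs}(p_2)\le\mathrm{dfs}(p_3)$ then $p_2$ lies in the subtree of $q:=\mathrm{LCA}(p_1,p_3)$, since its preorder index is trapped between those of $p_1$ and $p_3$, both of which are in that subtree; hence $q$ is a common ancestor of $p_1,p_2$ and of $p_2,p_3$, so $\mathrm{LCA}(p_1,p_3)\preceq_T\mathrm{LCA}(p_1,p_2)$ and $\mathrm{LCA}(p_1,p_3)\preceq_T\mathrm{LCA}(p_2,p_3)$.

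Finally I verify the four point condition. Suppose $x<_\pi u<_\pi v<_\pi y$ with $xv,uy\in E(G)$; I want $uv\in E(G)$. If $b_u=b_v$ then $b_u\in P_u\cap P_v$ and we are done, so assume $b_u\neq b_v$; in any case $\mathrm{dfs}(b_x)\le\mathrm{dfs}(b_u)\le\mathrm{dfs}(b_v)\le\mathrm{dfs}(b_y)$. From $xv\in E(G)$ the adjacency criterion gives $a_v\preceq_T\mathrm{LCA}(b_x,b_v)$, and the preorder fact applied to $(b_x,b_u,b_v)$ gives $\mathrm{LCA}(b_x,b_v)\preceq_T\mathrm{LCA}(b_u,b_v)$, so $a_v\preceq_T\mathrm{LCA}(b_u,b_v)$. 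Symmetrically, $uy\in E(G)$ gives $a_u\preceq_T\mathrm{LCA}(b_u,b_y)$, and the preorder fact applied to $(b_u,b_v,b_y)$ gives $\mathrm{LCA}(b_u,b_y)\preceq_T\mathrm{LCA}(b_u,b_v)$, so $a_u\preceq_T\mathrm{LCA}(b_u,b_v)$. By the adjacency criterion, $uv\in E(G)$. Hence $<_\pi$ satisfies the four point condition, and $G$ belongs to \SET$(1)$ by Theorem~\ref{thm:four_poitAND}. I expect the bookkeeping in the adjacency criterion to be the only delicate point; the edge cases ($a_v=b_v$, and ties in $<_\pi$) are harmless because equal bottoms already force an edge, and the closing step is then just a short chain of ancestor inequalities.
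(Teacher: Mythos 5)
Your proof is correct and follows essentially the same route as the paper: both reduce to Theorem~\ref{thm:four_poitAND} and order the vertices by a DFS index of the bottom node of each directed path (the paper's ``minimum of the inverse DFS labels'' of $K_v$ is exactly that bottom node, so your order is the paper's up to reversal, which the four point condition tolerates). The only difference is presentational: you verify the condition directly via an explicit LCA adjacency criterion and the contiguity of preorder intervals, whereas the paper argues by contradiction with the minima of the label sets; your bookkeeping is a more detailed rendering of the same idea.
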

\begin{proof} 
  In order  to prove  the
  Corollary,  we  give an  ordering  of  the  vertices of  any  rooted
  directed  path graph  $G$ such  that  the four  point condition  for
  \SET$(1)$ is satisfied.  
  Let $G = (V,E)$ be a rooted directed path graph and $T = (K,F)$ be a
  tree with an intersection model  of $G$ consisting of directed paths
  in $T$.  
  For each $v$ in $V$, let us denote by $K_v$ the directed path in $T$
  corresponding to vertex $v \in V$. Note that for every $v$, $K_v$ is
  a subset of $K$.  
  % Let $G$ be a directed path graph and let $T$ be a tree as described above. Note that the set $\{K_v\}_{v \in V}$ is an intersection model of $G$ consisting 
  % of directed paths in $T$. We first order the leaves of $T$ using a walk in $T$. %that starts in the root of $T$. 
  We order $K$, the vertex set  of $T$, using an inverse DFS ordering,
  i.e.,     first     order     $K$     according     to     a     DFS
  (cf. \cite{golumbic2004algorithmic}), and then inverse the ordering.  
  Let  us   denote  by  $\pi:K\rightarrow  \{1,2,\ldots,   |K|\}$  the
  permutation given  by the ordering of  $K$, i.e., $\pi(k) =  i$ when
  $k$ is the $i$-th vertex in the inverse DFS ordering.  
  We define as  well the following notation: $\pi(K_v) =  \{\pi(k) : k
  \in K_v\}$.  
  % Given this ordering of $K$, we remark the following facts: 
  % 
  Since $K$ has an inverse DFS ordering and the fact that $G$ is a rooted directed path graph, for every vertex $v \in V$, $\pi(K_v)$ is increasing when $K_v$ is traversed bottom-up in the tree. 
  % 
  % Note that for each vertex $v \in V$, the set $K_v$ is enumerated decreasingly when it is traversed dawn starting from its nearest point to the root of $T$.
  % This situation holds if and only if $G$ is a directed path and $\{K_v\}_{v \in V}$ is one intersection model of $G$. 
  % Let us define the following notation: $\hat{K}_v = \min \pi(K_v)$.
  % We remark as well that, for any two neighboring vertices $u$ and $v$ 

  Now, vertex  set $V$ is  ordered according  to the minimum  value of
  $\pi(K_v)$. If required, break ties randomly.  
  % 
  % 
  % as follows. Set the smallest available label as $1$. 
  % Repeat recursively for $i$ in  $\{1,2,\ldots |K|\}$:
  % \begin{itemize} 
  % \item
  %   Pick  vertex $k \in K$ such that $\pi(k)=i$. %Vertex $k$ is a maximal clique of $V$, hence, it consists of vertices of $G$.   
  % \item 
  %   Give the smallest available label to  the vertex $v$ in $k$ with largest $\max \pi(K_v)$ that has not been labeled earlier. 
  %%   Enumerate $2$ the vertex  $v'$ in $k$ with second largest $\max \pi(K_{v'})$.
  % \item
  %   Set the smallest available label as the previous smallest plus $1$ and repeat the labeling process in $k$ until all vertices in $k$ has been labeled. If required, break ties randomly.  
  % \end{itemize}
  % 
  % From vertices $k \in K$ such that $\pi(k)= 2,3,4, \ldots, |K|$, repeat the enumeration process according to the previously stated criteria giving numbers to non enumerated vertices in $V$. 
  % 
  % 
  In order  to conclude the proof  of the Corollary, we  show now that
  the  ordering  of  $V$  satisfies   the  four  point  condition  for
  \SET$(1)$. 
  The proof is by contradiction. Assume that there exist four vertices
  in $V$ such that they violate the condition. I.e., 
  consider  four  vertices $x<u<v<y$  in  $V$  such that  $\{xv,  uy\}
  \subseteq E$ but $uv \notin E$.  
  Since $x<u<v<y$ in the ordering of $V$, it holds $\min \pi(K_x) \leq
  \min \pi(K_u) \leq\min \pi(K_v) \leq  \min \pi(K_y)$ in the ordering
  of $K$.  
  % Indeed, we can assume that $\hat{K}_x < \hat{K}_u < \hat{K}_v < \hat{K}_y$, otherwise the four point condition is easily satisfied.  Since $\{xv\} \in E$, 

  Given that  $xv \in E$, it  holds that $\pi(K_x) \cap  \pi(K_v) \neq
  \emptyset$. Since $x<u<v$, $\min  \pi(K_u) \leq \min \{\pi(K_x) \cap
  \pi(K_v)\}$.  
  Furthermore, due to $uv \notin E$, %$j < \pi(K_x) \cap \pi(K_v)$ for all $j \in \pi(K_u)$. 
  % Indeed, due to the same reasons something stronger holds: 
  for every $j \in \pi(K_u)$ it holds that $j < \min \pi(K_v)$. 
  Now,   since   $uy   \in   E$,   hence   $\min   \pi(K_y)   <   \min
  \pi(K_v)$. Therefore, we obtain $y < v$, which is a contradiction.
% the fact that $v<y$.  
  % it holds that $K_y$ has to decrease to intersect $K_v$, when traversed bottom-up. Which contradicts the fact that 
  % $G$ is a rooted directed path. 
  % Since $\{uy\} \in E$ and $\hat{K}_v \leq \hat{K}_y$, it holds that $\hat{K}_v \leq \max \pi(K_u)$. 
  % Since $T$ is a tree, $\pi(K_u)$ is increasing when $K_u$ is traversed bottom-up, $\{uy\} \in E$ and $\hat{K}_v \leq \max \pi(K_u)$, then $K_u$ must intersect $K_v$. 
  % Therefore, it holds $\{uv\} \in E$. % which is a contradiction with the fact that vertices $x<u<v<y$ violate the four point condition for \SET$(1)$.
%\qed 
\end{proof}
\begin{corollary}
  \textsc{Outerplanar} graphs belong to \AND(1).
\end{corollary}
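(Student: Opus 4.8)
The plan is to apply the combinatorial characterization of Theorem~\ref{thm:four_poitAND}: I will produce, for an arbitrary outerplanar graph $G$, an ordering of $V(G)$ that satisfies the four point condition for \SET$(1)$. In fact I expect that ordering to satisfy the condition \emph{vacuously}, that is, to admit no quadruple $x<u<v<y$ at all with $xv,uy\in E$.

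The ordering I would use is the one coming from a convex drawing of $G$. Recall the standard fact that every outerplanar graph on $n\ge 3$ vertices is a spanning subgraph of a maximal outerplanar graph, and that a maximal outerplanar graph is a triangulation of a convex polygon; consequently $G$ can be drawn with its vertices placed on a circle in some cyclic order $v_1,v_2,\dots,v_n$ and every edge drawn as the straight chord joining its endpoints, so that no two of these chords cross (the cases $n\le 2$ are trivial, since $K_1$ and $K_2$ belong to \SET$(1)$). I take $<_\pi$ to be the linearization $v_1<v_2<\cdots<v_n$ of this cyclic order.

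I then check that $<_\pi$ leaves the four point condition vacuous. Take any $x<_\pi u<_\pi v<_\pi y$. The chord between the circle positions of $x$ and $v$ separates the remaining positions into the two arcs determined by $x$ and $v$; since $x<u<v$ the vertex $u$ lies on the arc strictly between $x$ and $v$, while since $v<y$ the vertex $y$ lies on the complementary arc. Hence the chord $uy$ would have to cross the chord $xv$. Because all edges of $G$ are drawn as pairwise non-crossing chords, $xv$ and $uy$ cannot both be edges of $G$, so the hypothesis of the four point condition is never met; by Theorem~\ref{thm:four_poitAND} the graph $G$ belongs to \SET$(1)$. Moreover, exactly as in the proof of that theorem, setting $p_{v_i}=i$ and $B_{v_i}=[\ell_\pi(v_i),\rho_\pi(v_i)]$ then yields an explicit \SET$(1)$-realization.

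The only nonroutine ingredient is the structural fact that an outerplanar graph embeds as a subgraph of a triangulated convex polygon on the same vertex set; I would invoke it as standard (see, e.g.,~\cite{brandstdt1999graph}), the single mildly delicate point being the reduction of a connected outerplanar graph with cut vertices to a $2$-connected one by adding boundary edges while preserving outerplanarity. Everything else is the short crossing argument above, so I do not anticipate any real obstacle.
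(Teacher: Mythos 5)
Your proof is correct and follows essentially the same route as the paper: both reduce to exhibiting a linear ordering of the vertices in which no two edges cross, which makes the four point condition of Theorem~\ref{thm:four_poitAND} hold vacuously. The only difference is that the paper obtains such an ordering by citing Bilski's result that outerplanar graphs are exactly the graphs with pagenumber one, whereas you derive it directly from a convex outerplanar embedding and verify the non-crossing (interleaving chords) claim by hand.
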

\begin{proof}
  In order to prove that outerplanar graphs belong to \SET$(1)$, let us
  recall the definition of \emph{page embedding} of a graph
  (cf.  \cite{Bernhart1979320}).  A $k$-page embedding, or book
  embedding, of a graph $G$ consists 
  in an linear ordering of the vertices of $G$ which are drawn on a
  line (the  spine of  the book) together  with a partition  of the
  edges into $k$ pages such that  two edges in the same page do not
  cross. The  \emph{pagenumber} of a  graph is the smallest  $k$ for
  which the graph has a $k$-page embedding. In \cite{bilski92},
  Bilski proved that outerplanar graphs are exactly the  graphs with
  pagenumber one. Therefore, for any outerplanar graph there exists
  an ordering of its vertices in which the edges do not cross. Such an
  ordering %(by emptiness, since no two edges cross) 
  satisfies the four
  point condition for  \SET$(1)$. % Thus, outerplanar graphs belong
  %to \SET$(1)$.
 %   \qed 
 % \end{proof}   
% %%%%%%%%%%%%%%%%%%%BIPARTITE CONVEX%%%%%%%%%%%%%%%%%%
% %\begin{proof}
% \noindent{\textsc{Interval bigraph}: }
% Consider an interval bigraph $B=(X,Y,E)$. We define the following ordering $<_\pi$  for the vertex sets $X$ and $Y$. For any two vertices $x_i$, $x_j$ that belong to $X$, it holds $x_i <_\pi x_j$ if and only if $R[x_i] < R[x_j]$. On the other hand, 
% $<_\pi$ is defined in the opposite way for vertices in $Y$, i.e., consider two vertices $y_i$, $y_j$ that belong to $Y$, it holds $y_i <_\pi y_j$ if and only if $R[y_j] < R[y_i]$. 
% Finally, $<_\pi$ is defined so that every vertex in $X$ is smaller than every vertex in $Y$. 
% Hence, if we denote vertices in $X$ and $Y$ according to $<_\pi$, it holds: $x_1 <_\pi x_2 <_\pi \cdots <_\pi x_{|X|} <_\pi y_1 <_\pi y_2 <_\pi \cdots < y_{|Y|}$. 
% We claim that $<_\pi$ satisfies the four point condition. 
%
% Since the graph is bipartite, in order to have two crossing edges, the four points should be of the form $x_i <_\pi x_j <_\pi y_k <_\pi y_l$, for some $i<j$ and $k<l$. 
% Since we assume that  $x_jy_l$ is an edge in $B$, then it holds $L[x_j]< R[y_l]$. Now, since  $y_k <_\pi y_l$, it holds $R[y_l] < R[y_k]$. Therefore, we can conclude that $L[x_j]< R[y_k]$.
% On the other hand, according to $<_\pi$, it holds $R[x_i] < R[x_j]$. Since we assume $x_iy_k \in E$, it also holds $L[y_k] < R[x_i]$ Therefore, we can conclude $L[y_k] < R[x_j]$. 
% Hence, it must hold that the intervals corresponding to $x_j$ and $y_k$ have non empty intersection. Therefore, the edge $x_jy_k$ belongs to $E$. 
%\qed
\end{proof}

In contrast to the previous corollary, the four point
condition helps as well to discard a graph from the \SET$(1)$ set. 
% A first example is the following corollary:
\begin{corollary}\label{cor:two-universal-no-and} 
  Let  $G$ be a graph such that all pairs of  vertices $u,v\in V$ have
  at  least two  non adjacent  common neighbors.  Then $G$  does not
  belong to \SET$(1)$.
\end{corollary}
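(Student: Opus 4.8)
The plan is to contradict Theorem~\ref{thm:four_poitAND}. Suppose $G\in$ \SET$(1)$; then there exists an ordering $<_\pi$ of $V$ that satisfies the four point condition for \SET$(1)$. First I would single out the two extreme vertices of this order, say $a$ (the first) and $z$ (the last); since $G$ is connected and the hypothesis is only non-trivial when $|V|\ge 4$, these are well defined and distinct. Then I would apply the hypothesis to the particular pair $\{a,z\}$: by assumption $a$ and $z$ admit two common neighbors $u,v$ with $uv\notin E$, and after relabelling we may assume $u<_\pi v$.

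The key observation is that $u$ and $v$ lie strictly between $a$ and $z$ in $<_\pi$. Indeed, $a$ is the first and $z$ the last vertex of the order, and $u,v\notin\{a,z\}$ because the graph is loopless while $u,v$ are neighbors of both $a$ and $z$; hence $a<_\pi u<_\pi v<_\pi z$. Moreover $av\in E$ and $uz\in E$, again because $u$ and $v$ are common neighbors of $a$ and of $z$. Applying the four point condition to the quadruplet $x=a,\ u,\ v,\ y=z$ then forces $uv\in E$, contradicting the choice of $u$ and $v$. Therefore no ordering of $V$ satisfies the four point condition, and by Theorem~\ref{thm:four_poitAND} we conclude $G\notin$ \SET$(1)$.

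There is essentially no hard step in this argument; the only points that need a little care are verifying that $a,u,v,z$ are pairwise distinct and correctly ordered (which follows from looplessness and from $a,z$ being the extreme vertices). I would also note that the proof actually uses the hypothesis for a single pair only — the extreme pair of the chosen order — but since that pair depends on the (unknown) ordering, the ``all pairs'' form of the assumption is what makes the argument go through uniformly.
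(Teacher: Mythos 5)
Your proof is correct and follows essentially the same route as the paper: take the two extreme vertices of an order satisfying the four point condition, apply the hypothesis to that pair to obtain two non-adjacent common neighbors, and observe that the resulting quadruplet violates the condition, contradicting Theorem~\ref{thm:four_poitAND}. Your added care about distinctness and placement of $u,v$ (via looplessness) is a harmless elaboration of what the paper leaves implicit.
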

\begin{proof}
  The proof is by contradiction. Let us assume that there exists a graph $G$ that belongs to \SET$(1)$  
  such that all pairs of  vertices $u,v\in V$ have
  at  least two  non adjacent  common neighbors. 
  In order to reach the contradiction, we give four vertices in $V$ that do not satisfy the four point condition for  \SET$(1)$. 
  Let $\mathcal{R}$ be an \SET$(1)$-realization for $G$. Consider the two extreme vertices of $\mathcal{R}$, say vertices $x$ and $y$.  
  There exist two vertices $u$ and $v$ such that $uv \notin E$ and $\{xu, xv, uy, vy\} \subseteq E$. 
  Now, for any order of vertices $u$ and $v$, it holds that the quadruplet $x,u,v,y$ does not satisfy the four point condition for \SET$(1)$.
  %\qed
\end{proof}

\section{Subclasses of  $c$-\SET$(1)$}
\label{sec:contention-containment}

In this section, we establish the relation between
the  $c$-\SET$(1)$ family and other well-known graph families. 
  Particularly, we enhance the result by Hixon \cite{hixon13} %of Corollary
 % \ref{cor:outeroplanar-in-AND} 
 by showing that interval and
  outerplanar graphs belong not only to \SET$(1)$ but also to $c$-\SET$(1)$.  
%  We also prove that interval graphs belong to $c$-\SET$(1)$ family. 
  % Both proofs  are constructive and  can be  found in the  Appendix of
  % this document.
   % give an algorithm to construct $c$-\SET$(1)$-realizations for these
   % classes. 
   %We first define two operations over a ($c$-)\SET$(1)$-realization that
  %produce a new ($c$-)\SET$(1)$-realization for the same graph.
%
%  \begin{definition}[Translation and scaling]
%  Let $G$ be a graph and 
%   $\mathcal{R}=\{ ([L(u),R(u)],\point_u)\}_{u \in V(G)}$ be a (c-)\SET$(1)$-realization
%   of $G$.
%   \begin{itemize}
%   \item A $d$-\emph{translation} of $\mathcal{R}$ is the realization
%     $\{( [L(u)+d,R(u)+d], \point_u+d)\}_{u\in V(G)}$.
%   \item A $r$-\emph{scaling} of $\mathcal{R}$ is the realization $\{( [r\cdot
%     L(u),r\cdot R(u)],r\cdot\point_u)\}_{u \in V(G)}$.  
%   \end{itemize}
% \end{definition}
%Note that none of these two operations change the graph originally represented by the realization. 
%We begin by showing the contention of interval graphs.

   \begin{theorem}\label{thm:interval_cAND}
     The set of \textsc{Interval} graphs is a subset of $c$-\SET$(1)$.
   \end{theorem}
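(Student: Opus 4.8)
\emph{Proof idea.} Since we work on the line and must produce a \emph{central} realization, for each vertex $v$ we only need to choose a point $p_v$ and a radius $r_v$, set $B_v=[p_v-r_v,\,p_v+r_v]$, and then the defining condition reads $uv\in E\iff|p_u-p_v|\le\min\{r_u,r_v\}$. The plan is to extract $(p_v,r_v)$ from an interval model of $G$. First I would take intervals $\{[a_v,b_v]\}_{v\in V}$ realizing $G$ and, after a harmless perturbation, assume the left endpoints are pairwise distinct; list the vertices as $v_1,\dots,v_n$ with $a_{v_1}<\dots<a_{v_n}$. This ``interval order'' has the standard property that if $i<j<k$ and $v_iv_k\in E$, then $a_{v_k}\le b_{v_i}$, so $a_{v_j}\in[a_{v_i},b_{v_i}]\cap[a_{v_j},b_{v_j}]$ and hence $v_iv_j\in E$; consequently every right neighborhood $\{j>i:v_iv_j\in E\}$ is a contiguous block $\{i+1,\dots,\rho(i)\}$ with $\rho(i)\ge i$. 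Putting $\ell(i)=\min\{j:v_j\in\mathcal N[v_i]\}$ (so $\ell(i)\le i\le\rho(i)$), we get, for $i<j$: $v_iv_j\in E\iff j\le\rho(i)$, and $v_iv_j\in E$ forces $i\ge\ell(j)$.

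Next I would place the representative elements at a super-increasing sequence, concretely $p_{v_i}:=2^{i}$ (any sequence with $p_{v_i}-p_{v_1}<p_{v_{i+1}}-p_{v_i}$ for all $i$ works), and set
\[
R(v_i):=\max\bigl\{\,p_{v_{\rho(i)}},\ 2p_{v_i}-p_{v_{\ell(i)}}\,\bigr\},\qquad L(v_i):=2p_{v_i}-R(v_i),
\]
so that $B_{v_i}=[L(v_i),R(v_i)]$ is centered at $p_{v_i}$ by construction. Then I would check, for $i<j$: (a) if $v_iv_j\in E$, then $p_{v_j}\le p_{v_{\rho(i)}}\le R(v_i)$ and $L(v_i)\le p_{v_i}<p_{v_j}$, so $p_{v_j}\in B_{v_i}$; (b) if $v_iv_j\in E$, then $\ell(j)\le i$ gives $L(v_j)\le p_{v_{\ell(j)}}\le p_{v_i}\le p_{v_j}\le R(v_j)$, so $p_{v_i}\in B_{v_j}$; hence every edge of $G$ is reproduced; and (c) if $v_iv_j\notin E$, then $j\ge\rho(i)+1$, so it suffices to prove $R(v_i)<p_{v_{\rho(i)+1}}$, which then gives $p_{v_j}\notin B_{v_i}$ and destroys that edge. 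For (c): $p_{v_{\rho(i)}}<p_{v_{\rho(i)+1}}$ is immediate, and $2p_{v_i}-p_{v_{\ell(i)}}<p_{v_{\rho(i)+1}}$ rewrites as $p_{v_i}-p_{v_{\ell(i)}}<p_{v_{\rho(i)+1}}-p_{v_i}$, whose left side is at most $p_{v_i}-p_{v_1}$ and whose right side is at least $p_{v_{i+1}}-p_{v_i}$; that is exactly what the super-increasing spacing guarantees. (When $\rho(i)=n$ there is no non-neighbor to the right of $v_i$, so (c) is vacuous.) With (a)--(c) established, $\{(B_{v_i},p_{v_i})\}$ is a $c$-\SET$(1)$-realization of $G$, so \textsc{Interval} $\subseteq c$-\SET$(1)$.

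The genuinely delicate point is (c), and underneath it the effect of centrality: forcing $B_{v_i}$ to be symmetric about $p_{v_i}$ makes it extend to the left exactly as far as to the right, so a vertex with a far-away right neighbor gets a box whose left end overshoots many of the earlier points $p_{v_1},\dots,p_{v_{i-1}}$. The super-increasing placement does not prevent $B_{v_i}$ from covering those points; what it buys us is that each such earlier $v_j$ has $R(v_j)$ too small to cover $p_{v_i}$ (again inequality (c), now read at $v_j$), so the mutual-containment rule of \AND never produces a spurious edge on the left. Controlling this left/right interaction -- rather than any single endpoint inequality -- is the heart of the argument, and it is what forces the exponential spacing of the representative elements.
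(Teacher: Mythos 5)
Your construction is correct, and it reaches the theorem by a genuinely different route than the paper. Both arguments start from the same combinatorial input: your left-endpoint order of an interval model is exactly an ordering with the property Olariu's result provides (if $u<v<w$ and $uw\in E$ then $uv\in E$), which is what makes the right neighborhoods contiguous and lets you speak of $\ell(i)\le i\le\rho(i)$. The difference is in how the centered boxes are produced. The paper builds the realization greedily along that order, maintaining four invariants (monotone placement of the $p_i$'s, right endpoints ordered like $\rho(\cdot)$, left endpoints reaching $p_{\ell(\cdot)}$, and consistency of right endpoints with adjacency), and it must retroactively enlarge the intervals of earlier vertices by $2r_i$ at each step to keep the invariants alive. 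You instead give a one-shot closed-form realization: $p_{v_i}=2^i$ and radius equal to the distance to the farthest neighbor, so that the single inequality $R(v_i)<p_{v_{\rho(i)+1}}$ (guaranteed by the super-increasing spacing, since $p_{v_i}-p_{v_1}<p_{v_{i+1}}-p_{v_i}$) kills every potential spurious containment to the right, while the left overshoot caused by centrality is harmless because adjacency in the model requires mutual containment and the earlier vertex's right endpoint is too short -- a point you correctly identify as the crux. Your checks (a)--(c), including the vacuous case $\rho(i)=n$ and the fact that (c) applied at the smaller index also disposes of non-neighbors to the left, are complete, so the proof stands. What each approach buys: yours is shorter, explicitly verifiable, and immediately gives a realization computable from the interval model with coordinates of $O(n)$ bits; the paper's incremental scheme avoids fixing coordinates in advance and is stylistically aligned with its other constructions (gluing realizations, extending safe vertices), at the cost of the bookkeeping with four conditions and the final extension step.
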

   \begin{proof}     
   Let $G$  be an interval  graph.  In \cite{Olariu199121}  Olariu
   proves that for any interval graph there exists an ordering $<_\pi$ of its
   vertex set $V$ such that
   for all triplet $u,v,w\in V$ with $u<_\pi v<_\pi w$ and $uw\in E$
   then $uv\in E$. Moreover, this order can be obtained in linear time.
   Consider such an ordering for the vertex set $V$. 
   For  the sake  of  simplicity, we  relabel  vertices in $V$ from  $1$ to  $n$
   according to the ordering  $<_\pi$.

   We construct a $c$-\SET$(1)$-realization of $G$ greedily. 
   At   the  $i$-th   step,  we   include  the   vertex  $i$   in  the
   $c$-\SET$(1)$-realization. 
   The inclusion is performed in such a way that, at the end of
   the step $i$, it holds, for all $ j,k,w$ in $\{1,\ldots,i\}$, that:

\begin{minipage}{0.5\linewidth}
       \begin{enumerate}
         \item $p_{k-1}<p_k$\label{item:order}
         \item $\rho(j)<_\pi\rho(k)\Rightarrow R(j) < R(k)$. \label{item:rho}
      \end{enumerate} 
\end{minipage}
\begin{minipage}{0.5\linewidth}
       \begin{enumerate}
         \setcounter{enumi}{2}
         \item $L(j)<p_{\ell(j)}$ \label{item:left}
       \item $\rho(k)<_\pi j\Leftrightarrow R(k)< p_j$. \label{item:right}
      \end{enumerate} 
\end{minipage}
\\[0pt]
  %  \begin{multicols}{2}{
  %      \begin{enumerate}
  %        \item $p_{k-1}<p_k$\label{item:order}
  %        \item $\rho(j)<\rho(k)\Rightarrow R(j) < R(k)$. \label{item:rho}
  %        \item $L(j)<p_{\ell(j)}$ \label{item:left}
  %      \item $j<_\pi \rho(k) \Leftrightarrow p_j<R(k) $ \label{item:right}
  %     \end{enumerate} }
  % \end{multicols}

  Condition \ref{item:order} ensures that representative elements
   are  placed according to  order $<_\pi$.  Condition \ref{item:rho}
   ensures that right extremes of intervals 
   are in the same order than the values of $\rho(\cdot)$.  
   Finally, conditions \ref{item:left} and \ref{item:right} guarantee 
   that the partial realization at  the end of step $i$ corresponds to
   the subgraph induced by vertices $1,2,\ldots, i$.  Thus, at the end of
   the construction a $c$-\SET$(1)$-realization of $G$ is obtained.

   At the first  step, vertex $1$ is included so that $p_1 = 0$ and $[L(1), R(1)] = [-1,1]$.
   At the end of the first step
   all conditions are satisfied.   Let us suppose that all
   conditions hold at the end of the step
   $i-1$.  
   We include vertex $i$ in the $c$-\SET$(1)$-realization in two phases: 
   \begin{itemize}
   \item First, we set the position of representative element
     $p_{i}$ respecting conditions \ref{item:order} and
     \ref{item:right}.  That is, the representative element 
     is placed after $p_{i-1}$ and it
     is contained only by  intervals associated to its previous 
     neighbors.  
   \item Second,  we set the interval  associated to $i$  such that it
     contains all its previous neighbors, according to
     condition \ref{item:left}. Finally, we
     modify, if necessary, the  interval of previous vertices in order
     to satisfy conditions \ref{item:rho}. 
   \end{itemize}
For the first phase we remark that if two vertices $j,k$ have
  labels   smaller   than   $i$  and   $j\notin\mathcal{N}(i)   \wedge
  k\in\mathcal{N}(i)$ then $\rho(j)<i\le \rho(k)$.
  Therefore,  by condition  \ref{item:rho},  we have  that  $R(j) <  R(k)$.
  Thus, by defining $L=\max\{R(j):\:
  j\notin\mathcal{N}(i)\}$ and 
  $R=\min\{R(k):\:k\in\mathcal{N}(i)\}$, it holds $L<R$.
  Notice that in between $L$ and $R$ there might exist some
  representatives elements.  Hence, by setting $p_{i}$ as $(\max\{p_{i-1},
  L\}+R)/2$, conditions \ref{item:order} and \ref{item:right} hold and
  first phase is concluded. 

In order to set the extremes of interval $B_{i}$, let define  
   $P_i=\{j<i:\: \rho(j)<\rho(i)\}$,  the set  of all  vertices having
   its last neighbor before $i$. We recall that
  condition  \ref{item:rho} imposes  that $R(j)<R(i)$  for  all vertex
  $j$ in $P_i$.  If $R'$ denotes the $\max\{R(j):\:j\in P_i\}$ then 
  it must holds that $R'<R(i)$.  
  On the other hand, the  interval $B_i$ must contain $p_{\ell(i)}$ so
  that condition \ref{item:left} is satisfied.  Then, let define $r_i$ as 
  $\max\{p_i-p_{\ell(i)}, R'-p_i\}+1$. We set $L(i)=p_i-r_i$ and
  $R(i)=p_i+r_i$ so all conditions are satisfied
  for vertices in $P_i$.  However, condition \ref{item:rho},
  does not necessary hold for vertices that do not belong to $P_i\cup \{i\}$.  
To overcome this problem, we extend the
  intervals of those vertices by $2r_i$. That is, we re-define 
  $B_j$  as $[L(j)-r_i,R(j)+r_i]$  for  all  $j\notin P_i\cup  \{i\}$.
  Thus,  since  $R(i)=p_i+r_i<R(j)+r_i$  condition  \ref{item:rho}  is
  satisfied for all vertices in $V$.
  %\qed
\end{proof}

The   rest   of   the   section   aims   to   prove   that
  \textsc{Outerplanar}  graphs belong to  $c$-\SET$(1)$. We  first show
  that cycles belong to $c$-\SET$(1)$.  Moreover, we
  show that any realization of a cycle has a specific structure. 
  Secondly, we construct a procedure to combine biconnected components
  and show how to ``glue'' two different cycles by an edge. 

\begin{lemma}\label{lem:cycle}
  Let  $C_n$  be  a  cycle  of  length  $n$,  then  $C_n$  belongs  to
  $c$-\SET$(1)$.  Furthermore, let $\mathcal{R}$ be 
  an \SET$(1)$-realization  of $C_n$ and $\pi$ be the  permutation induced by
  $<_\mathcal{R}$. Then, there
  exists a clockwise (or anticlockwise) labeling $l:V\rightarrow \{1,2, \ldots, n\}$ such that: \\
  % \begin{enumerate}
  % \item \label{item:cycleExt}Extreme vertices are adjacent and $\pi(l^{-1}(1))=1 \wedge \pi(l^{-1}(n))=n$.
  % \item \label{item:cycleAND}For all $u\in V, \, |l(u)-\pi(u)|\le 1$. 
  %    \item     \label{item:cyclecAND}If    $\mathcal{R}$     is    a
  % $c$-\SET$(1)$-realization then for all $u\in V, \,l(u)=\pi(u)$. 
  % \end{enumerate}
  \noindent 1. \label{item:cycleExt}Extreme vertices are adjacent and
  $\pi(l^{-1}(1))=1                                             \wedge
  \pi(l^{-1}(n))=n$.  \\
  2. \label{item:cycleAND}For all $u\in V, \,
  |l(u)-\pi(u)|\le 1$ \\ 
  3. \label{item:cyclecAND}If $\mathcal{R}$ is
  a $c$-\SET$(1)$-realization then for all $u\in V, \,l(u)=\pi(u)$. 
  % \end{enumerate}

   \end{lemma}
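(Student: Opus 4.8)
The plan is to prove the three assertions of the lemma separately: membership of $C_n$ in $c$-\SET$(1)$, the structural normal form for an arbitrary \SET$(1)$-realization, and the rigidity of that normal form for $c$-\SET$(1)$-realizations.

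\medskip
\noindent\emph{Membership in $c$-\SET$(1)$.} I would exhibit a realization explicitly. Writing $C_n=v_1v_2\cdots v_nv_1$, put $\point_{v_i}:=i$ for all $i$, $B_{v_i}:=[\,i-1,\ i+1\,]$ for $2\le i\le n-1$, $B_{v_1}:=[\,2-n,\ n\,]$ and $B_{v_n}:=[\,1,\ 2n-1\,]$. Every interval is centred at its representative, so this is a candidate $c$-\SET$(1)$-realization, and it remains only to check adjacencies. For $i<j$ one has $\point_{v_j}\in B_{v_i}$ iff $j\le R(v_i)$ and $\point_{v_i}\in B_{v_j}$ iff $i\ge L(v_j)$; a short case split on whether $i=1$ and whether $j=n$ shows that both hold simultaneously exactly when $|i-j|=1$ or $\{i,j\}=\{1,n\}$, i.e.\ exactly for the edges of $C_n$. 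Since all representatives are distinct and $C_n$ is connected, this is a valid realization.

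\medskip
\noindent\emph{The structural claim (items 1 and 2).} Let $\mathcal R$ be any \SET$(1)$-realization; after a harmless perturbation the representatives are pairwise distinct, and by Theorem~\ref{thm:four_poitAND} the order $<_{\mathcal R}$ satisfies the four point condition. Identify each vertex with its position $1,\dots,n$ in $<_{\mathcal R}$. I would prove: \emph{(a)} positions $1$ and $n$ are adjacent; \emph{(b)} traversing $C_n$ from the position-$1$ vertex in the direction not using the edge $\{1,n\}$, the sequence of positions visited is $1=c_1,c_2,\dots,c_n=n$ with $|c_j-j|\le 1$ for every $j$. Granting this, $l(\text{$c_j$-vertex}):=j$ defines a clockwise (or anticlockwise) labeling with $\pi(l^{-1}(1))=1$, $\pi(l^{-1}(n))=n$, extreme vertices adjacent, and $|l(u)-\pi(u)|\le 1$, which is items~1 and~2. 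For \emph{(a)} I would argue by contradiction: if positions $1$ and $n$ were non-adjacent, let $b<n$ be the rightmost neighbour of position $1$ and $c>1$ the leftmost neighbour of position $n$; $2$-regularity and the four point condition rule out all placements other than $1<c<b<n$, whence the edges $\{1,b\}$ and $\{c,n\}$ cross and force $\{c,b\}\in E$, and iterating this inside the resulting rigid configuration produces a quadruple $x<u<v<y$ with $xv,uy\in E$ but $uv\notin E$, contradicting Theorem~\ref{thm:four_poitAND}. For \emph{(b)} I would induct on $n$ (base cases $n\le 4$ by inspection): once $(a)$ holds, deleting the position-$1$ vertex leaves a smaller \SET$(1)$-instance supported on the path joining its two former neighbours; a path analogue of the structural claim (proved in the same induction) determines the traversal of that path up to a shift of at most one, and the four point condition limits where the deleted extreme vertex may be reinserted, preserving $|c_j-j|\le 1$.

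\medskip
\noindent\emph{Rigidity for $c$-\SET$(1)$ (item 3).} Now let $\mathcal R$ be a $c$-\SET$(1)$-realization, so $B_i=[\point_i-r_i,\ \point_i+r_i]$ with $r_i>0$ and $\{i,j\}\in E\iff|\point_i-\point_j|\le\min(r_i,r_j)$. By item~2 the permutation $i\mapsto l(i)$ of $\{1,\dots,n\}$ moves each element by at most one and fixes $1$ and $n$, hence is a product of disjoint transpositions of consecutive integers. Suppose it contains one; take the transposition exchanging positions $k$ and $k+1$ with $k$ least, so positions $1,\dots,k-1$ are fixed. Then the cyclically consecutive labels $k-1,k,k+1,k+2$ correspond, in this order, to positions $k-1,\,k+1,\,k,\,A$ with $A\ge k+2$; consequently $\{k-1,k+1\},\{k+1,k\},\{k,A\}\in E$ while $\{k-1,k\}\notin E$ and $\{k+1,A\}\notin E$. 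Adjacency of $k-1$ and $k+1$ gives $r_{k-1}\ge\point_{k+1}-\point_{k-1}>\point_k-\point_{k-1}$, so the non-edge $\{k-1,k\}$ forces $r_k<\point_k-\point_{k-1}$; together with $r_k\ge\point_A-\point_k$ this yields $2\point_k>\point_{k-1}+\point_A$. Finally the non-edge $\{k+1,A\}$ gives $\point_A-\point_{k+1}>\min(r_{k+1},r_A)$: if the minimum is $r_A\ge\point_A-\point_k$ we get $\point_k>\point_{k+1}$ at once, and if it is $r_{k+1}\ge\point_{k+1}-\point_{k-1}$ we get $2\point_{k+1}<\point_{k-1}+\point_A$, hence $\point_k>\point_{k+1}$ by the previous inequality. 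Either way this contradicts $\point_k<\point_{k+1}$, so no such transposition exists and $l=\pi$.

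\medskip
\noindent\emph{Main obstacle.} Parts one and three are short once set up; the real work is item~2, namely deriving the global ``almost sorted'' form of the cyclic order from the purely local four point condition. The delicate points are showing the extreme vertices must be adjacent and handling the inductive step — in particular formulating and proving the path analogue and bounding how a deleted extreme vertex can be reinserted.
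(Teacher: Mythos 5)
Your explicit realization of $C_n$ is correct (it is essentially the paper's construction, with radius $1$ instead of $1+\epsilon$), and your rigidity argument for item~3 is complete \emph{given} items~1 and~2: the reduction of $l\circ\pi^{-1}$ to a product of disjoint adjacent transpositions and the metric contradiction $p_k>p_{k+1}$ are both sound, and are in fact more explicit than the paper's own treatment, which derives the contradiction from one quadruplet $v<_{\mathcal R}w<_{\mathcal R}x<_{\mathcal R}y$ by noting that $p_x$ lies in one half of $[p_v,p_y]$, forcing either $vw$ or $xy$ to be an edge.

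The genuine gap is in items~1 and~2, which you yourself flag as ``the real work.'' For item~1 you assert that $2$-regularity and the four point condition ``rule out all placements other than $1<c<b<n$''; this is not a local consequence: if $b\le c$ the edges $\{1,b\}$ and $\{c,n\}$ are nested or disjoint in the order, so no single application of Theorem~\ref{thm:four_poitAND} fires, and excluding this case already requires a global argument along the two internally disjoint paths of the cycle joining the extremes --- and ``iterating this inside the resulting rigid configuration'' is precisely the part that must be carried out, not assumed. For item~2, your induction deletes the leftmost vertex, but what remains is a path, not a cycle, so the induction hypothesis does not apply; the ``path analogue'' you invoke is neither formulated nor proved, and it cannot be a naive transcription, since paths admit realization orders far from the path order (whatever rigidity survives deletion comes from which vertices are extremes of the restricted realization, and these need not be the endpoints of the path), and the ``reinsertion'' bound for the deleted extreme is likewise unproved. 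For comparison, the paper avoids induction altogether: it pins the labeling by letting $l^{-1}(1)$ be the leftmost vertex and $l^{-1}(2)$ its leftmost neighbour, proves item~1 by exhibiting a single violating quadruplet if $l^{-1}(n)$ is not the right extreme, and proves items~2--3 by a left-to-right sweep with a ``pre-condition'' invariant showing that any discrepancy between $l$ and $\pi$ is a swap of two consecutive positions after which the invariant is restored. As it stands, the central structural claim of the lemma is only a plan in your write-up.
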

   \begin{proof} 
Let  $C_n$  be a  cycle.  We  prove  that  $C_n$ belongs  to
  $c$-\SET$(1)$ by constructing a realization.  
  Let us label the vertex set $V$ clockwise starting in 
  an arbitrary vertex.  Given $0<\epsilon<1$, we associate 
  to each vertex $i\in \{2,\ldots,n-1\}$ the interval
  $([i-(1+\epsilon),i+(1+\epsilon)]$ and the representative element $p_i=i$. 
  Extreme vertices are 
  assigned to pars (interval, representative element) $([2-n-\epsilon, n+\epsilon], 1)$ and
  $([1-\epsilon, 2n-1+\epsilon],n)$, respectively.  It is easy to check that
  the previous  defined  realization  is  actually  a  $c$-\SET$(1)$-realization  for
  $C_n$. 
 % Notice that in this realization the extreme vertices are safe .

 Consider  an \SET$(1)$-realization $\mathcal{R}$  of the cycle
   $C_n$.  If  $n=3$  the  representative  elements are  always  in  a
   (anti-)clockwise order.  Assume then that $n>3$.
   We define a clockwise (or anticlockwise) labeling $l$ of $V$ as follows: 
   (1) the vertex with label 1 has the minimum value of $\point_u$,
   i.e., ($\pi\circ l^{-1}(1)=1$) and, 
   (2) the vertex with label 2 is the neighbor of 1 with the smaller  
   position in the order: $\pi\circ l^{-1}(2)<\pi\circ l^{-1}(n)$.

Condition \ref{item:cycleExt} is proved by contradiction. Note that by definition $l^{-1}(1)$ is an extreme vertex.
Hence, assume
  that $l^{-1}(n)$ is not a extreme vertex. 
   Define  $w$ as follows: $p_{l^{-1}(n)} < p_w$ and $l(w) \leq l(w')$ for all $w'$ such that $p_{l^{-1}(n)} < p_{w'}$. 
   %the vertex  placed at the  right of  vertex with
  %label $n$ with the smallest label. 
  By the definition of the
  labeling,  it holds  that  $l(w)>2$, moreover  $w$  has a  neighbor
  placed between the vertices with labels 1 and $n$, which we
  denote by $v$. 
  We conclude that quadruplet 
  $l^{-1}(1)  <_\mathcal{R} v<_\mathcal{R}  l^{-1}(n) <_\mathcal{R}  w$  violates four
  point condition, which is a contradiction. Hence, vertex  $l^{-1}(n)$ is an extreme vertex. 

We  prove \ref{item:cycleAND}  and  \ref{item:cyclecAND} greedily.  
 %We check that both conditions holds when we verifies vertices
  %according to the order $<_\mathcal{R}$.  
  First, let us introduce
  some definitions. We say that a vertex $u$
  satisfies the \emph{pre-condition} if for all $v$ such that $p_v < p_u$ it holds $l(v) < \pi(u)$.
  %every vertex placed before $u$
  %has a label smaller than $\pi(u)$. 
  Clearly, extreme vertices
  satisfy the pre-condition. 
  % Moreover, if a vertex $u$ satisfies the
  % pre-condition then $l(u)>\pi(u)$. 
% We will prove that if a vertex $v$
%  has a label bigger than its position then 
  Let $w$ be a vertex that satisfies the pre-condition but such that
  $l(w)\neq \pi(w)$, then it must hold that $l(w)> \pi(w)$. Let us denote by $v$ the vertex such that 
  $p_v < P_w$ and $l(v') \leq l(v)$ for all $v'$ such that $p_{v'} < P_w$.
%  with the biggest label at the left of $w$.  
 By the definition of $w$,
  it holds that $l(v)<n-1$.  We denote by $x$ the neighbor of $v$ with label
  $l(v)+1$,  thus $w<_\mathcal{R}  x$.  Let $w'$  be  the vertex  in
  between $v$ and $x$ with the maximum label. Since $l(v)<l(x)<n$ then $w'$
  must have a neighbor $y$ (with label $l(w')+1$) such that $p_x < p_y$. 
  %at the right hand side of $x$.
  %$x<_\mathcal{R} y$.  
  Thus,  by  the  four point  condition  in  the
  quadruplet  $v <_\mathcal{R}  w'<_\mathcal{R} x<_\mathcal{R}  y$, 
  vertices $w'$ and $x$ must be neighbors.  We conclude that $w=w'$ and
  $l(w)=l(x)+1=l(v)+2$.   Additionally, the  vertex  immediately after
  $x$, that is, in the position $\pi(x)+1$, satisfies the pre-condition. 

As    we    state   before,    extreme    vertices   satisfy    the
  pre-condition.   Let  $w$   be   the  first   vertex  according   to
  $<_\mathcal{R}$ such that $l(w)\neq \pi(w)$. By definition, $w$
  satisfies the pre-condition.  Thus, by the previous discussion, we have that
  $l(w)=\pi(w)+1$. Furthermore, the next  vertex in the ordering, say
  $x$, has label $l(w)-1$ and then $l(w)-\pi(w)=1 \wedge l(x)-\pi(x)=-1$. 
Furthermore, the next vertex in the ordering
  must satisfy the pre-condition. 
  By iterating over vertices according to the order $<_\mathcal{R}$,
  we verify that Condition \ref{item:cycleAND} holds. 
Finally, consider the case when  $\mathcal{R}$ is a
  $c$-\SET$(1)$-realization.    Let   $v
  <_\mathcal{R}  w<_\mathcal{R} x<_\mathcal{R}  y$  be the  quadruplet
  previously constructed, where $l(y)=l(w)+1$. 
If $p_{x}$ is placed in the left half of the interval $[p_v,p_y]$ then 
  $vw\in E$, otherwise $xy \in E$ which yields a contradiction. Thus,
  for all vertices in the $c$-\SET$(1)$-realization $l(w)=\pi(w)$.
   \end{proof}

%   \begin{remark}
%     A  weaker   result  of   point  \label{item:cycleAND}   has  been
%     independently obtained by Hixon in \cite{hixon13}. for 
%   \end{remark}

 \begin{definition}[Safe vertex]
  Let $G$ be a graph in ($c$-)\SET$(1)$.  We say that a vertex $v\in V$ is
  \emph{safe} in $G$ if there exists a ($c$-)\SET$(1)$-realization $\mathcal{R}=\{ (B_u,\point_u)\}_{u
    \in V(G)}$ such that $v\in B_w$ if and only if $v=w \vee vw\in E(G)$.  
\end{definition}
   
   A safe vertex allows the union of two
   different biconnected components. 
   This important property comes from the fact that in a
   realization where  a vertex $v$ is  safe, the interval  $B_v$ can be
   extended as much as required without modifying the original graph.  
   % The following lemma, 
   % whose detailed  proof can be found in Appendix,
   % %\ref{sec:appendix}, 
   % justifies the utility of safe vertices. 

   \begin{lemma}\label{lem:glue}
     Consider two graphs $G_1, G_2\in$ ($c$-)\SET$(1)$ and two vertices $w_1\in
     V(G_1)$ and $w_2\in V(G_2)$. 
     Let  $G$  be the  graph  obtained  by  identifying $w_1$  and
     $w_2$. If $w_2$ is safe in $G_2$, then it holds that $G\in$ ($c$-)\SET$(1)$.
   \end{lemma}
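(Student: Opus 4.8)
The plan is to keep a realization of $G_1$ essentially untouched and to insert a \emph{shrunken} copy of a realization of $G_2$ into a tiny empty neighbourhood of the representative element of $w_1$, using the safety of $w_2$ to guarantee that the only new adjacencies created are the correct ones between the merged vertex and the former neighbours of $w_2$. Throughout I write the merged vertex as $w$, so that $V(G)=V(G_1)\cup(V(G_2)\setminus\{w_2\})$ with $w=w_1$, and $G[V(G_1)]=G_1$ (with $w_1$ relabelled $w$), $G[V(G_2)]=G_2$ (with $w_2$ relabelled $w$); I treat the case $|V(G_2)|\ge 2$, the other being trivial.

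First I would fix any ($c$-)\SET$(1)$-realization $\mathcal{R}_1=\{(B^1_u,\point^1_u)\}_{u\in V(G_1)}$ of $G_1$; by a $\delta$-translation I may assume $\point^1_{w_1}=0$, and after a harmless perturbation that the representative elements of $\mathcal{R}_1$ are pairwise distinct and that $0$ lies in the interior of every interval $B^1_u$ that contains it. Since $\mathcal{R}_1$ has finitely many intervals, there is $\varepsilon>0$ such that $(-\varepsilon,\varepsilon)$ contains no representative element of $\mathcal{R}_1$ other than $\point^1_{w_1}$, is contained in the interior of every $B^1_u$ with $0\in B^1_u$, and is disjoint from every $B^1_u$ with $0\notin B^1_u$. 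Next I would take a ($c$-)\SET$(1)$-realization of $G_2$ in which $w_2$ is safe (this is the hypothesis), $\delta$-translate it so that the representative element of $w_2$ is $0$, and then $\sigma$-scale it with $\sigma>0$ small enough that all its representative elements and all its intervals lie inside $(-\varepsilon/2,\varepsilon/2)$; call the outcome $\mathcal{R}_2=\{(B^2_v,\point^2_v)\}_{v\in V(G_2)}$, and note $B^2_{w_2}\subseteq(-\varepsilon/2,\varepsilon/2)\subseteq B^1_{w_1}$.

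I would then define $\mathcal{R}$: every $u\in V(G_1)\setminus\{w_1\}$ keeps $(B^1_u,\point^1_u)$, every $v\in V(G_2)\setminus\{w_2\}$ keeps $(B^2_v,\point^2_v)$, and the merged vertex $w$ gets $\point_w:=0$ and $B_w:=B^1_{w_1}$; thus the restriction of $\mathcal{R}$ to $V(G_1)$ is exactly $\mathcal{R}_1$ (up to relabelling $w_1$ as $w$), and the restriction to $V(G_2)\setminus\{w_2\}$ is the scaled copy of $\mathcal{R}_2$. Because $B^1_{w_1}$ is centered at $\point^1_{w_1}=0$ when $\mathcal{R}_1$ is central, and $\delta$-translation and $\sigma$-scaling preserve centering, $\mathcal{R}$ is a $c$-realization as soon as $\mathcal{R}_1$ and $\mathcal{R}_2$ are. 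To see that $\mathcal{R}$ realizes $G$, I would check the pairs. Pairs inside $V(G_1)$ are realized correctly since that restriction is $\mathcal{R}_1$ and $\mathcal{R}_1$ realizes $G_1$ (this already handles all pairs $\{w,u\}$ with $u\in V(G_1)$), and likewise pairs inside $V(G_2)\setminus\{w_2\}$ are realized correctly. For a mixed pair $\{u,v\}$ with $u\in V(G_1)\setminus\{w_1\}$ and $v\in V(G_2)\setminus\{w_2\}$ one has $\point_u=\point^1_u\notin(-\varepsilon,\varepsilon)$ while $B_v=B^2_v\subseteq(-\varepsilon/2,\varepsilon/2)$, so $\point_u\notin B_v$ and the pair is non-adjacent, which is correct since $G$ has no edge between $V(G_1)\setminus\{w_1\}$ and $V(G_2)\setminus\{w_2\}$.

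The remaining case, a pair $\{w,v\}$ with $v\in V(G_2)\setminus\{w_2\}$, is the one that needs the hypothesis and is the main point. Here $\point_v=\point^2_v\in(-\varepsilon/2,\varepsilon/2)\subseteq B^1_{w_1}=B_w$ always holds, so mutual containment reduces to $\point_w\in B^2_v$, that is, to $\point^2_{w_2}\in B^2_v$; since $w_2$ is safe in $\mathcal{R}_2$ this holds if and only if $v=w_2$ or $vw_2\in E(G_2)$, hence (as $v\ne w_2$) if and only if $vw_2\in E(G_2)$, which is exactly when $v$ and $w$ are adjacent in $G$. Note that without safety $\point^2_{w_2}$ could lie in $B^2_v$ for a non-neighbour $v$, and since $\point_v\in B_w$ is automatic this would force a spurious edge $vw$, so the hypothesis is used in an essential way. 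Combining the cases shows $\mathcal{R}$ realizes $G$, and $\mathcal{R}$ is central whenever $\mathcal{R}_1$ and $\mathcal{R}_2$ are, so $G\in$ ($c$-)\SET$(1)$. I expect the only delicate point to be the order of the choices — first extract $\varepsilon$ from $\mathcal{R}_1$, then pick $\sigma$ from $\varepsilon$ — and that every adjacency check is immediate once these are fixed.
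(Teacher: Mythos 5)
Your proof is correct and follows essentially the same strategy as the paper: shrink a realization of $G_2$ in which $w_2$ is safe, translate it into a small neighbourhood of $\point_{w_1}$, merge the two vertices, and use safety to ensure the only new adjacencies are those of the merged vertex with the former neighbours of $w_2$. If anything, your choice of $\varepsilon$ (avoiding \emph{all} other representative elements of $\mathcal{R}_1$, not just those of $w_1$'s neighbours, which is what the paper's $\Delta$ controls) and your explicit treatment of the mixed pairs make the verification slightly more careful than the paper's.
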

  \begin{proof}
    Let $G$  be a  graph obtained by  the identification  of vertices
    $w_1$ and $w_2$ of two different graphs $G_1$ and $G_2$. 
    Consider two ($c$-)\SET$(1)$-realizations 
    $\mathcal{R}_1=\{(B_u,p_u)\}_{u\in V(G_1)}$ and 
    $\mathcal{R}_2=\{(B_u,p_u)\}_{u\in V(G_2)}$ 
    of $G_1$ and $G_2$, respectively, such that $w_2$ is safe in
    $\mathcal{R}_2$.  
   % We  will construct  a  realization of  $G$  by overlapping  the
   % representations $R_1$ and $R_2$ with respect to $w_1$ and $w_2$.
   We denote  by $\Delta$ the  minimum distance between $p_{w_1}$  and the
   representative  elements of its  neighbors, that  is $\Delta=\min_{u\in
     \mathcal{N}(w_1)}\{|p_{w_1}-p_u|\}$.
   Let $B$ be an interval
   such that $\cup_{v\in V(G_2)}\{B_v\}\subseteq B$ and
   denote by $L$ its length.  
   We construct the realization $\mathcal{R}'_2=\{(B'_v,p'_v)\}_{v\in V(G_2)}$ 
   from $\mathcal{R}_2$ by the following procedure:
   \begin{itemize}
   \item apply a $(-p_{w_2})$-translation in order to place the representative element of $w_2$ in the origin, 
   \item scale the realization by a factor $\Delta/(2L)$,
   \item perform a $(p_{w_1})$-translation in order to
     equals the position of representatives elements of $w_1$ and $w_2$.
   \end{itemize}
   Let $B_w$  be the  interval with center  in $\point_{w_1}$  and of
   length equal to the maximum between $B_{w_1}$ and $B'_{w_2}$. 
   Then, let us define 
   $\mathcal{R}=\mathcal{R}_1\cup \mathcal{R}'_2 \smallsetminus
   \{(B_{w_1},\point_{w_1}),(B'_{w_2},\point'_{w_2})\}            \cup
   (B_w,\point_{w})$.  
   We see that $\mathcal{R}$ is a ($c$-)\AND(1) realization for $G$. 
   In fact, all edges $uv \in E(G_1)\cup
   E(G_2)$ are induced by $\mathcal{R}$. Furthermore by the
   definition of $\mathcal{R}'_2$ and the fact that  $w_2$ is
   safe, no new edges are generated by $\mathcal{R}$. 
   %\qed
  \end{proof}
   
     Given a graph $G$, the \emph{block tree} of $G$ is the
     graph having two types of vertices: 
     blocks and cut-vertices (cf. \cite{bondy2007graph}). A block vertex represents a maximal
     biconnected component of $G$ while cut-vertices are the 
     articulation points between blocks. The edges of the block tree
     join blocks with cut-vertices. A block is adjacent to a
     cut-vertex if the block contains the cut-vertex. 
     Figure~\ref{fig:block_graph} shows an
     example of a graph and its block tree. 

\begin{figure}[t!]
  \begin{center}
     \resizebox{.7\linewidth}{!}{\begin{tikzpicture}[ thick,main node/.style={circle,draw,fill=black}, font=\Large]
\tikzstyle{block}=[circle,draw]

\begin{scope}[thick,scale=1.75]%[xshift=15cm,yshift=1cm,rotate=90,thick]
  \node[main node]  (1)  at (0,0) {};
  \node[main node] (3) at  (0,1) {};
  \node[main node] (2) at (1.3,1) {}; 
  \node[main node] (4) at  (0.2,2) {};

  \node[main node] (5) at  (2,2) {};
  \node[main node] (6) at  (3,1.2) {};
  \node[main node] (7) at  (2.3,0) {};

  \node[main node] (9) at  (4,1) {};

  \node[main node] (10) at  (4.5,2) {};
  \node[main node] (11) at  (5,1) {};
  \node[main node] (12) at  (4.2,0) {};
  \path
    (1) edge (2)
           edge (3)
    (2) edge (3)
    (2) edge (4)

    (2) edge (5)
    (5) edge (6)
    (6) edge (7)
    (2) edge (7)
    (2) edge (6)

    (6) edge (9)

    (9) edge (10)
    (10) edge (11)
    (12) edge (11)
    (12) edge (9);

\end{scope}

\begin{scope}[xshift=13cm, scale=1.4]
\begin{scope}[thick]%[xshift=15cm,yshift=1cm,rotate=90,thick]
  \node[block]  (1)  at (0,0) {};
  \node[main node] (2) at (1.3,1) {}; 
  \node[block] (4) at  (0.2,2) {};

  \node[block] (6) at  (2.5,1.2) {};
  \node[main node] (9) at  (3.7,1.3) {};
  \node[block] (11) at  (4.8,1.1) {};
  \node[main node] (12) at  (6,1) {};
  \node[block] (10) at  (7,1.1) {};

  \path
    (1) edge (2)
    (2) edge (4)

    (2) edge (6)
    (6) edge (9)
    (9) edge (11)
    (11) edge (12)
    (12) edge (10);
\end{scope}

\end{scope}

\end{tikzpicture}}
  \end{center}
     \caption{This figure shows a graph (left hand side) and its block tree (right hand side). In the block tree representation, white vertices represent 
       maximal  biconnected components, while black  vertices represent
       cut-vertices. } 
     \label{fig:block_graph}
\end{figure}
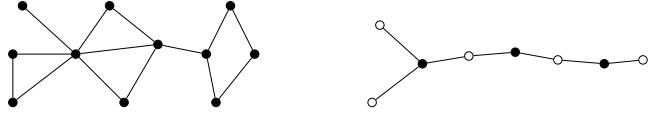

   \begin{theorem}\label{lem:biconnected}
     Let $G$  be a  connected graph  and $T$ be its  block tree.  If all
     maximal biconnected components of $G$ belong to ($c$-)\SET$(1)$ and 
      $T$ can be rooted such that every cut-vertex is safe in 
      its descendants, then $G$ belongs to ($c$-)\SET$(1)$.
   \end{theorem}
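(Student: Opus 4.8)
The plan is to build a ($c$-)\SET$(1)$-realization of $G$ by traversing the rooted block tree $T$ from its leaves upward, using Lemma~\ref{lem:glue} to attach to each block, one cut-vertex at a time, the portion of $G$ hanging below that cut-vertex; each single attachment will be exactly one application of Lemma~\ref{lem:glue}.

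First I would fix the rooting of $T$ promised by the hypothesis and, for a node $\nu$ of $T$, write $G_\nu$ for the union (taken inside $G$) of all the blocks that are nodes of the subtree $T_\nu$ rooted at $\nu$. Two observations organize the argument: $G=G_r$ for the root $r$ of $T$; and, by the hypothesis, every cut-vertex $c$ is safe in $G_c$, so in particular each such $G_c$ already belongs to ($c$-)\SET$(1)$. If the root $r$ is itself a cut-vertex then $G=G_r$ and the hypothesis gives $G\in$ ($c$-)\SET$(1)$ directly; so from now on I assume $r$ is a block, and it suffices to prove the claim that $G_B\in$ ($c$-)\SET$(1)$ for every block $B$ of $T$, and then apply it with $B=r$.

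To prove the claim, let $c_1,\dots,c_k$ be the cut-vertices that are children of $B$ in $T$ (the case $k=0$ being allowed). Using the standard structure of block trees (see~\cite{bondy2007graph}) --- each non-cut-vertex lies in exactly one block, and the blocks containing a fixed cut-vertex are exactly its neighbors in $T$, whence the subtrees $T_{c_1},\dots,T_{c_k}$ are pairwise disjoint and each meets $B$ only in the corresponding $c_j$ --- I would record the decomposition $G_B = B\cup G_{c_1}\cup\cdots\cup G_{c_k}$, in which $V(B)\cap V(G_{c_j})=\{c_j\}$, $V(G_{c_i})\cap V(G_{c_j})=\emptyset$ for $i\neq j$, and no two of these pieces share an edge. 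Now set $H_0:=B$ (which lies in ($c$-)\SET$(1)$ by hypothesis), and for $j=1,\dots,k$ let $H_j$ be obtained from $H_{j-1}$ and $G_{c_j}$ by identifying the two occurrences of $c_j$. Since $H_{j-1}$ and $G_{c_j}$ are disjoint except at $c_j$, this identification is exactly the operation of Lemma~\ref{lem:glue}; as $c_j$ is safe in $G_{c_j}$, that lemma gives $H_j\in$ ($c$-)\SET$(1)$. Iterating over $j$ yields $H_k\in$ ($c$-)\SET$(1)$, and the decomposition above identifies $H_k$ with $G_B$, which proves the claim and hence the theorem.

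The only step I expect to require genuine care is the iteration just described: one must check that at each stage $c_j$ is still a vertex of $H_{j-1}$ and that $G_{c_j}$ meets $H_{j-1}$ in nothing beyond $c_j$, since only then does Lemma~\ref{lem:glue} apply, with its safety hypothesis met by the vertex $w_2=c_j$. This is precisely what the structural facts about block trees provide; granting those, every other step is either a verbatim invocation of Lemma~\ref{lem:glue} or a direct appeal to the hypotheses of the theorem, so no further obstacle should arise.
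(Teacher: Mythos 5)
Your overall strategy (repeatedly invoking Lemma~\ref{lem:glue} along the rooted block tree) is the right one, but your proof works bottom-up and, at each block $B$, glues onto it the \emph{entire} subtree graphs $G_{c_1},\dots,G_{c_k}$, invoking Lemma~\ref{lem:glue} with $G_2=G_{c_j}$ and $w_2=c_j$. This step rests on your reading of the hypothesis as ``every cut-vertex $c$ is safe in $G_c$,'' from which you also extract $G_c\in$ ($c$-)\SET$(1)$. That reading begs the question: safety is only defined for graphs already known to be in ($c$-)\SET$(1)$, and the theorem's hypotheses only assert membership for the individual maximal biconnected components. The intended (and, in the paper, the only usable) meaning is that each cut-vertex is safe in each descendant \emph{block} containing it: this is exactly what is verified when the theorem is applied to block graphs (safety inside a clique) and to outerplanar graphs (safety of a vertex of a polygon dissection via Lemma~\ref{lem:glue_cycles} and Theorem~\ref{theo:outerplanar}). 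Under your reading the hypothesis could not be checked from block-level data at all, and the statement would be nearly vacuous when the root of $T$ is a cut-vertex, as your own remark shows.

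Under the correct per-block reading, your key assertion ``$c_j$ is safe in $G_{c_j}$, hence $G_{c_j}\in$ ($c$-)\SET$(1)$'' is unproven -- it is essentially the theorem itself for the subtree rooted at $c_j$, strengthened by a safety claim. To repair the bottom-up argument you would have to run an induction on subtrees whose inductive statement is ``$G_{c}$ admits a realization in which $c$ is safe,'' and then check that the construction in Lemma~\ref{lem:glue} \emph{preserves} safety of the identified vertex (plausible, since the rescaled copy of $\mathcal{R}_2$ is confined within distance $\Delta/2$ of $p_{w_1}$, but this is not part of the lemma's statement and you do not argue it). The paper avoids any such preservation issue by gluing one biconnected component at a time, top-down in BFS order of $T$: in every application of Lemma~\ref{lem:glue} the graph $G_2$ is a single block, so the safety needed is literally the hypothesis of the theorem. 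Either switch to that top-down order, or add the strengthened induction together with a proof that gluing keeps the cut-vertex at the top of the subtree safe.
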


   \begin{proof}
     The proof  follows directly  from Lemma \ref{lem:glue}  by adding
      biconnected  components  of  $G$  in  a  breadth-first  traversal
      BFS (cf.~\cite{golumbic2004algorithmic}) order of $T$. 
      %\qed
   \end{proof}
   
The previous result allows us to constructively obtain a realization of a
  graph by  gluing the realization of its  biconnected components. 
  As a consequence, we obtain the following corollary. 
  \begin{corollary}
   \textsc{Block} graphs, graphs in which all biconnected components
     induce a clique, belong to $c$-\SET$(1)$.
\end{corollary}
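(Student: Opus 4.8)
The plan is to deduce the corollary directly from Theorem~\ref{lem:biconnected}. Since in a block graph every maximal biconnected component is a clique, it suffices to establish two facts: that every complete graph $K_n$ admits a $c$-\SET$(1)$-realization, and that every vertex of $K_n$ is safe in $K_n$. Granted these, the block tree $T$ of an arbitrary block graph $G$ can be rooted in any way whatsoever: each block is then a clique (hence in $c$-\SET$(1)$) and each cut-vertex, being a vertex of the clique-block immediately below it, is safe there; Theorem~\ref{lem:biconnected} applied to this rooting gives $G\in c$-\SET$(1)$.

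For the first fact I would exhibit the explicit central realization that assigns to vertex $i\in\{1,\dots,n\}$ the representative element $\point_i=i$ and the interval $B_i=[i-n,\,i+n]$: since $|\point_i-\point_j|\le n-1<n$ for every pair $i,j$, we have $\point_j\in B_i$ for all $i,j$, so the realization is central and induces exactly $K_n$. (Alternatively one may simply note that cliques are interval graphs and invoke Theorem~\ref{thm:interval_cAND}.) For the second fact I would observe that the safety condition is vacuously met inside a clique: in \emph{any} $c$-\SET$(1)$-realization of $K_n$ every pair of vertices is adjacent, so $\point_v\in B_w$ for all $v,w$, and since $vw\in E(K_n)$ for every $v\neq w$, the condition ``$v\in B_w$ if and only if $v=w$ or $vw\in E$'' holds for each vertex $v$. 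Hence every vertex of $K_n$ is safe in $K_n$.

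I do not expect a genuine obstacle here: the entire content is carried by Theorem~\ref{lem:biconnected} and Lemma~\ref{lem:glue}. The one point meriting a moment's attention is the exact form of the definition of \emph{safe vertex} --- it constrains a box only with respect to the \emph{non}-neighbors of the vertex, which is precisely what makes safety automatic for cliques and what lets block graphs drop out of the block-tree machinery at no extra cost.
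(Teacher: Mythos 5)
Your proof is correct and takes essentially the same route as the paper, which presents the corollary as a direct consequence of Theorem~\ref{lem:biconnected} (gluing the biconnected components along the block tree), relying on the facts that cliques belong to $c$-\SET$(1)$ and that every vertex of a clique is trivially safe. You simply spell out the explicit central realization of $K_n$ and the vacuous safety check that the paper leaves implicit.
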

% The rest of the section  is consecrated to prove the principal results
%   of this section.  First, we give a constructive proof to show that
%   \textsc{Interval} graphs  belong to $c$-\SET$(1)$.  Secondly, in order
%   to prove the containment of \textsc{Outerplanar} graphs, we 
%   first show that cycles are in
%   $c$-\SET$(1)$ and we construct a procedure to glue two
%   different cycles. 

% We    are     interested    in    graph     classes    contained    in
%   $c$-\SET$(1)$.  Particularly, we prove that interval graphs and
%   outerplanar  are in  $c$-\SET$(1)$. 
   % \begin{proposition}\label{prop:unit_interval}
   %   \textsc{Proper interval graphs} $\subsetneq $ $c$-\SET$(1)$.
   % \end{proposition}
   % \begin{proof}
   %   Proper   interval  graphs   are  equivalent   to   unit  interval
   %   graph~\cite{roberts69}. 
   %   Let $G$ be a proper interval graphs and  $\{I_u\}_{u\in V(G)}$ its
   %   unit interval representation.  
   %   We construct a realization as follows: let define 
   %   $\point_u$ as the central point  of the interval $I_u$ and $I'_u$
   %   as the interval of length two and central point at $\point_u$. 
   %   The set $\{(I'_u,\point_u)\}_{u\in V(G)}$ is a
   %   $c$-\SET$(1)$-realization  of   $G$.  In  fact,   $\{u,v\}\in  E(G)
   %   \Leftrightarrow |\point_u-\point_v|<1 \Leftrightarrow \point_u\in
   %   I'_v \wedge v\in I'_u$. 
   %   The inclusion is  proper since cycles $C_n, n>3$ are in
   %   $c$-\SET$(1)$  (Lemma~\ref{lem:cycle}) but  are  not unit  interval
   %   graphs. 
   % \end{proof}

  An analogous result to Lemma~\ref{lem:glue} can be obtained to identify 
  edges in two different cycles:

  \begin{lemma}\label{lem:glue_cycles}
    Given two cycles $C_n,C_m'$ and two edges $uv \in E(C_n)$ and
    $u'v' \in E(C'_m)$, let $G$ be  the graph obtained by identifying $uv$ and
    $u'v'$. Then, $G\in$ $c$-\SET$(1)$.
  \end{lemma}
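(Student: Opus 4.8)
The plan is to leverage the detailed structural description of cycle realizations furnished by Lemma~\ref{lem:cycle} together with the gluing technique of Lemma~\ref{lem:glue}. First I would invoke Lemma~\ref{lem:cycle} to obtain canonical $c$-\SET$(1)$-realizations $\mathcal{R}$ of $C_n$ and $\mathcal{R}'$ of $C'_m$: in each, the representative elements follow a clockwise (or anticlockwise) labeling, the two extreme vertices are adjacent, and (by item~3) the realization order coincides exactly with the cyclic labeling. The key observation is that in such a realization the edge joining the two \emph{extreme} vertices is special: the leftmost vertex $a$ (label $1$) and the rightmost vertex $b$ (label $n$) are adjacent, and after a suitable $\delta$-translation and $\sigma$-scaling (both of which preserve the represented graph) I can make the ``slack'' on the outer sides of $p_a$ and $p_b$ arbitrarily large without creating or destroying any edge. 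In effect, the edge $ab$ behaves like a safe edge for the purpose of attaching external structure on the far left of $a$ and the far right of $b$.

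The main step is then to reduce to a position where the edge to be identified is precisely this extreme edge of each cycle. Given the arbitrary edge $uv\in E(C_n)$, I would relabel $C_n$ (choosing the starting vertex and the orientation of the clockwise labeling) so that $u$ and $v$ receive labels $1$ and $n$; since Lemma~\ref{lem:cycle} guarantees a valid $c$-\SET$(1)$-realization for \emph{every} choice of clockwise starting vertex, this yields a realization of $C_n$ in which $u=\ell_{\mathcal R}$, $v=\rho_{\mathcal R}$, and all of $C_n\setminus\{u,v\}$ lies strictly between $p_u$ and $p_v$. I would do the same for $C'_m$, obtaining a realization in which $u'$ and $v'$ are the extreme vertices and $C'_m\setminus\{u',v'\}$ lies strictly between $p_{u'}$ and $p_{v'}$. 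Now, mimicking the proof of Lemma~\ref{lem:glue}: translate and scale $\mathcal{R}'$ so that $p_{u'}$ coincides with $p_u$ and the entire ``interior'' of the second cycle is compressed into a tiny subinterval of $[p_u,p_v]$ that contains no representative element of the first cycle, and symmetrically compress the interior of the first cycle; then identify $v'$ with $v$ by aligning $p_{v'}$ with $p_v$, taking for the merged vertices $u$ and $v$ intervals long enough to cover all their neighbors in both cycles (which is harmless since $u$ and $v$ were the extreme vertices, so extending their intervals outward to the left of everything and to the right of everything, respectively, adds no edge). One checks directly that the resulting realization induces exactly $E(C_n)\cup E(C'_m)$ with $uv=u'v'$ identified, and that it is still \emph{central} because the construction only translates, scales, and symmetrically extends intervals about their centers.

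The hard part will be verifying that no spurious edges appear at the junction: specifically, that after compression the representative elements of the two cycles' interiors do not fall inside each other's intervals, and that the (possibly lengthened) intervals of $u$ and $v$ do not reach any vertex they should not. This is exactly the safety argument of Lemma~\ref{lem:glue}, and it goes through because choosing the scaling factor $\sigma$ small enough confines each cycle's interior to a window of width less than the minimum gap $\Delta=\min_{w\in\mathcal N(u)\cup\mathcal N(v)}\{|p_u-p_w|,|p_v-p_w|\}$ between the junction points and their genuine neighbors; and because centrality of the extreme vertices in Lemma~\ref{lem:cycle}'s construction lets their intervals be taken as large as needed on the outer side. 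A minor bookkeeping point is the degenerate cases $n=3$ or $m=3$ (a single triangle), where ``interior'' may be a single vertex or empty; these are handled directly by the same compression, or simply absorbed into the base case of Lemma~\ref{lem:cycle}. Hence $G\in c\text{-}\SET(1)$.
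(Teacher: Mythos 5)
Your plan breaks at its main step: you propose to realize \emph{both} cycles with the identified edge as the pair of extreme vertices and then to merge them so that both (compressed) interiors lie strictly between $p_u$ and $p_v$, with $u$ and $v$ the extremes of the final realization. Such a configuration does not exist. Already for $n=m=3$, gluing two triangles along $uv$ gives $K_4$ minus the edge $ab$ (where $a,b$ are the two remaining vertices), and in any vertex order with $u$ and $v$ extreme the quadruplet $u<a<b<v$ (or $u<b<a<v$) has the crossing edges $ub$ and $av$ (resp.\ $ua$ and $bv$) while $ab\notin E$; by Theorem~\ref{thm:four_poitAND} no \SET$(1)$-realization, a fortiori no $c$-\SET$(1)$-realization, of $G$ can have $u$ and $v$ as its extreme vertices. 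The obstruction is not special to triangles: since Lemma~\ref{lem:cycle} forces each interior path into path order between $p_u$ and $p_v$, either some $b_j$ falls between two consecutive $a_i,a_{i+1}$ (or symmetrically), giving the violating quadruplet $(u,a_i,b_j,a_{i+1})$, or one whole interior precedes the other, giving $(u,a_s,b_1,v)$ or $(u,b_t,a_1,v)$. Moreover, the operation you invoke to reach this configuration is not graph-preserving: compressing the interior of a cycle into a tiny subinterval while pinning $p_{u'}$ at $p_u$ and $p_{v'}$ at $p_v$ is neither a $\delta$-translation nor a $\sigma$-scaling, and afterwards the interior neighbor of $v'$ has a tiny interval, centered (by centrality) far from $p_v$, so the edge to the merged vertex $v$ is destroyed and cannot be repaired by a one-sided extension.

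The paper's proof is asymmetric precisely to avoid this: the identified edge is taken extreme in only \emph{one} cycle's realization $\mathcal{R}$, while in the other realization $\mathcal{R}'$ the vertices $u',v'$ are chosen as consecutive \emph{non-extreme} vertices (representatives at distance $1$, intervals reaching only their cycle neighbors). Then $\mathcal{R}$ is uniformly scaled and translated so that $p_u,p_v$ land on $p_{u'},p_{v'}$, the pairs $(B_{u'},p_{u'})$ and $(B_{v'},p_{v'})$ are discarded, and the large extreme intervals of $u,v$ inherited from $\mathcal{R}$ capture exactly the former neighbors of $u',v'$ in $C'_m$, while the shrunken interior of $C_n$ fits in the unit gap without creating mutual containments with $C'_m$. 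In the resulting realization the extremes of $G$ are the extreme vertices of $\mathcal{R}'$, not $u$ and $v$; this asymmetric alignment is the actual content of the lemma and is the idea missing from your proposal.
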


  \begin{proof}
   For  $0<\epsilon<1$ we  construct  two $c$-\SET$(1)$-realizations  
   $\mathcal{R}$ and $\mathcal{R'}$ of
   $C_n$ and $C_m'$ respectively according to the procedure described
   in the proof of Theorem~\ref{lem:cycle}. Furthermore, we suppose 
   that $u$ and $v$ are the extreme vertices of realization $\mathcal{R}$
   but $u'$and $v'$ are not the extreme vertices of $\mathcal{R'}$. 
   We perform  a $1/(m-1)$-scaling and a  translation of $\mathcal{R}$
   so that  the positions  of representative elements  of $u$  and $v$
   equal those of $u'$ and $v'$ in $\mathcal{R'}$. 
    The realization $\mathcal{R}\cup \mathcal{R}' \smallsetminus
    \{(B_{u'},u'),(B_{v'},v')\}$ is  a $c$-\SET$(1)$-realization for the
    graph $G$. Notice that this realization can be done with any
    vertex as an extreme (safe) vertex.
    %\qed
  \end{proof}

   \begin{theorem}\label{theo:outerplanar}
     The set of \textsc{Outerplanar} graphs is a subset of $c$-\SET$(1)$.
   \end{theorem}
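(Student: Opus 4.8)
The plan is to pass, via Theorem~\ref{lem:biconnected}, to the biconnected case, and there to realize a biconnected outerplanar graph face by face along its weak dual tree, attaching each new face by a variant of Lemma~\ref{lem:glue_cycles}. The statement I would prove is the following \emph{Main Lemma}: for every biconnected outerplanar graph $B$ and every edge $e=uv$ of $B$ on its outer face, $B$ admits a $c$-\SET$(1)$-realization in which (i) $u,v$ are the two extreme vertices, (ii) $u$ is safe, and (iii) every outer-face edge of $B$ is either $\{u,v\}$ or a pair of vertices consecutive in the representative order.

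The Main Lemma gives the theorem through Theorem~\ref{lem:biconnected}: every biconnected component of an outerplanar graph is biconnected outerplanar, hence in $c$-\SET$(1)$ by (i). For the rooting hypothesis, fix any root of the block tree; for a cut-vertex $c$ with descendant subgraph $D_c$ (a connected outerplanar graph containing $c$) one shows $c$ is safe in $D_c$ by induction on the number of blocks of $D_c$, realizing each block by the Main Lemma with the pertinent cut-vertex---which lies on the outer face of that block---taken as an extreme, hence safe, vertex, and gluing the blocks by Lemma~\ref{lem:glue}. The translation/scaling of that lemma fixes the representative of the identified vertex and confines every newly added interval to an arbitrarily small neighbourhood of it; as a safe vertex has all non-neighbours' representatives outside its box (hence bounded away), no incidence is created at $c$, so safety of $c$ is preserved at each gluing.

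To prove the Main Lemma I induct on the number $t$ of inner faces of $B$. For $t=1$, $B$ is a cycle and Lemma~\ref{lem:cycle} applies, its construction allowing any edge to be designated the extreme edge, with (ii) being the safety computation already used in that lemma. For $t\ge 2$, the weak dual of $B$ is a tree with at least two leaves; as the non-chord edge $e$ lies on a single inner face, choose a leaf face $F$ not containing $e$, let $f=xy$ be the unique chord separating $F$ from the rest, and let $B'$ be $B$ with the internal vertices of $F$ (those other than $x,y$) deleted. Then $B'$ is biconnected outerplanar with $t-1$ inner faces, $e$ still lies on its outer face, and $f$ is an outer-face edge of $B'$; by induction $B'$ has a realization satisfying (i)--(iii), so $x,y$ are consecutive in the order with no representative strictly between $\point_x$ and $\point_y$. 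I then realize $F=C_{|F|}$ with $f$ as extreme edge and insert a suitably scaled, translated copy of this realization into the gap $(\point_x,\point_y)$: the copy's two extreme vertices are identified with $x$ and $y$ and retain the intervals $B_x,B_y$ of $B'$ (which already cover $[\point_x,\point_y]$ since $x\sim y$), and the other $|F|-2$ vertices of $F$, with their scaled intervals, land strictly inside the gap. Inserting into an interior gap moves neither $u$ nor $v$, so (i) and (ii) survive, and since the outer-face edges of $B$ are those of $B'$ other than $f$ together with the edges of $F$ other than $f$---each of the latter being a consecutive pair inside the gap or an endpoint of $f$ paired with its gap-neighbour---(iii) is restored.

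The point requiring care, and the main obstacle, is justifying this insertion: it is a variant of Lemma~\ref{lem:glue_cycles} adapted to insertion into an arbitrary $c$-realization, and one must verify no edge is created or destroyed. Taking the parameter $\epsilon$ of the cycle construction of $F$ small enough, the inserted vertices adjacent in $C_{|F|}$ to $x$ (resp.\ $y$) have centered intervals that reach $\point_x$ (resp.\ $\point_y$) but stop before the next representative outside the gap; the inserted vertices adjacent to neither have intervals contained strictly in the gap; and the necessarily large intervals of $B'$ that span the whole gap (those of $u$, $v$, and any other extreme-type vertices) acquire no spurious incidences, precisely because every inserted interval is localized in the gap. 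The remaining ingredients---uniqueness of the outerplanar embedding, the weak dual being a tree, a leaf face sharing exactly one chord with the rest, and deleting a leaf face decreasing the inner-face count by one---are standard facts about outerplanar graphs.
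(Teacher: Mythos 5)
Your proof is correct, and at the top level it follows the same route as the paper: realize each biconnected component by assembling cycle realizations along shared edges, then combine blocks through safe cut-vertices via Lemma~\ref{lem:glue} and Theorem~\ref{lem:biconnected}. The difference is in the key lemma for the biconnected case. The paper disposes of it in one line by declaring that polygon dissections belong to $c$-\SET$(1)$ ``by Lemma~\ref{lem:glue_cycles}'', although that lemma as stated only identifies one edge of \emph{two} cycles (and its proof even assumes the identified edge is extreme in one realization and non-extreme in the other), so iterating it over a whole dissection needs an invariant the paper never makes explicit. Your Main Lemma supplies exactly that: induction over the weak dual tree, maintaining that the designated outer edge is the extreme pair, its endpoint is safe, and every outer edge is consecutive in the representative order -- the consecutiveness condition is what lets you insert the next leaf face into the gap left by its separating chord without creating or destroying incidences. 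This buys a self-contained, checkable argument for arbitrary biconnected outerplanar graphs at the cost of some bookkeeping, and your explicit verification of the safety hypothesis of Theorem~\ref{lem:biconnected} (safety of each cut-vertex in its whole descendant subgraph, preserved under repeated gluing) also fills in a step the paper leaves implicit. One small slip in that last verification: safety of $c$ means $\point_c$ lies in no non-neighbour's box, not that non-neighbours' representatives lie outside $B_c$; the preservation claim nevertheless holds in the central setting, since the box of the gluing vertex $c'$ is centered at $\point_{c'}$ with half-length at least $\Delta$, so $\point_c\notin B_{c'}$ (safety, when $c\not\sim c'$) or the definition of $\Delta$ (when $c\sim c'$) keeps $\point_c$ outside the $\Delta/2$-neighbourhood of $\point_{c'}$ into which Lemma~\ref{lem:glue} confines all newly added boxes.
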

   \begin{proof}
     Maximal  biconnected  components  of  an  outerplanar  graph  are
     dissections of a convex polygon, which belong to $c$-\SET$(1)$ by
     Lemma~\ref{lem:glue_cycles}.    The  proof   follows   by  gluing
     biconnected components according to Theorem~\ref{lem:biconnected}. 
     %\qed
   \end{proof}

\section{Differences between \SET$(1)$ and $c$-\SET$(1)$}\label{sec:difference}

In  this  section,  we   show  the  difference  between  \SET$(1)$  and
  $c$-\SET$(1)$ via graphs that belong to \SET$(1)$ but
  which does not belong to $c$-\SET$(1)$.  
We start with  a remark upon the fact that the property of being part
  of \SET$(1)$ or $c$-\SET$(1)$ is hereditary, i.e., if a graph $G$
  belongs to ($c$-)\SET$(1)$ then every induced 
  subgraph of $G$ also belongs to ($c$-)\SET$(1)$. 
  Indeed, if  a graph $G$  has a ($c$-)\SET$(1)$-realization  then the
  same  realization is  also a  ($c$-)\SET$(1)$-realization  for every
  induced  subgraph  of  $G$   when  the  corresponding  vertices  are
  deleted. 
From the  hereditary property,  we  define a
  graph $G$ that does not belong to ($c$-)\SET$(1)$ as \emph{minimal} 
  with respect to ($c$-)\SET$(1)$ if and only if every proper induced
  subgraph of $G$ does belong to ($c$-)\SET$(1)$.
% We use the hereditary property in the opposite way. In other words, we
%   find graphs that do not belong to 
%   $c$-\SET$(1)$, hence any graph that  contains one of them as induced
%   subgraph does not belong to  $c$-\SET$(1)$. Such a finding allows us
%   to determine a full family of graphs that does not belong to 
%   $c$-\SET$(1)$.  On the  other hand,  we  show that  the same  family
%   belongs  to \SET$(1)$.    
 All graphs introduced here that separate 
  \SET$(1)$   and   $c$-\SET$(1)$   are   minimal  with   respect   to
  $c$-\SET$(1)$ and they are based in the following two definitions.

%
% In  this  section  we   show  the  difference  between  \SET$(1)$  and
%   $c$-\SET$(1)$ via a family of graphs that belongs to \SET$(1)$ but
%   which does not belong to $c$-\SET$(1)$.  
% We start with  a remark upon the fact that the  property of being part
%   of \SET$(1)$ or $c$-\SET$(1)$ is hereditary, i.e., if a graph $G$
%   belongs to ($c$-)\SET$(1)$ then every induced 
%   subgraph of $G$ also belongs to ($c$-)\SET$(1)$. 
%   Indeed, if  a graph $G$  has a ($c$-)\SET$(1)$-realization  then the
%   same  realization is  also a  ($c$-)\SET$(1)$-realization  for every
%   induced  subgraph  of  $G$   when  the  corresponding  vertices  are
%   deleted. 
% We use the hereditary property in the opposite way. In other words, we
%   find graphs that do not belong to 
%   $c$-\SET$(1)$, hence any graph that  contains one of them as induced
%   subgraph does not belong to  $c$-\SET$(1)$. Such a finding allows us
%   to determine a full family of graphs that does not belong to 
%   $c$-\SET$(1)$.  On the  other hand,  we  show that  the same  family
%   belongs  to \SET$(1)$.  From the  hereditary property,  we  define a
%   graph $G$ that does not belong to ($c$-)\SET$(1)$ as \emph{minimal} 
% with respect  to ($c$-)\SET$(1)$ if  and only if every  proper induced
% subgraph of $G$ does belong to ($c$-)\SET$(1)$.  
% In  fact,  the graphs  that  we use  to  show  the difference  between
% \SET$(1)$ and $c$-\SET$(1)$ are minimal with respect to $c$-\SET$(1)$.  

%Let us first define the following family of graphs. 
 \begin{definition}
Let $H^{l_x,l_y,l_z}$ be a finite graph that consists of two not neighboring vertices, say vertices $a$ and $b$, 
together with three vertex disjoint paths that connect vertex $a$ with vertex $b$.
%where each path has edge-length at least two. 
The three paths that connect vertex $a$ with vertex $b$ follow: path 
$X = \{a=x_0,x_1,x_2,\ldots,x_{l_x-1},x_{l_x}=b\}$, path $Y = \{a=y_0,y_1,y_2,\ldots,y_{l_y-1},y_{l_y}=b\}$ and path $Z = \{a=z_0,z_1,z_2,\ldots,z_{l_z-1},z_{l_z}=b\}$, 
where the edge-length of the paths, denoted by $l_x$, $l_y$, and $l_z$, are larger or equal than $2$.  
%Therefore, the graph $H^{l_x,l_y,l_z} $ has vertex set $$V(H^{l_x,l_y,l_z} )=\{a,b, x_1,x_2,\ldots,x_{l_x-1},y_1,y_2,\ldots,y_{l_y-1},z_1,z_2,\ldots,z_{l_z-1} \}$$
%and edge set: 
%\begin{eqnarray*}
%E(H^{l_x,l_y,l_z} )=\{(a,x_1),(x_1,x_2),\ldots,(x_{l_x-2},x_{l_x-1}),(x_{l_x-1},b),\\(a,y_1),(y_1,y_2),\ldots,(y_{l_y-2},y_{l_y-1}),(y_{l_y-1},b),\\(a,z_1),(z_1,z_2),\ldots,(z_{l_z-2},z_{l_z-1}),(z_{l_z-1},b) \}.
%\end{eqnarray*}
A graphic representation of $H^{l_x,l_y,l_z}$ is shown in Figure \ref{fig:hxyz}.
  \begin{figure}[t!]
    \centering
 \resizebox{0.8\linewidth}{!}{\begin{tikzpicture}[node distance=1.2cm, label distance=-4]

\begin{scope}
     \node[label=left:$a$](A){$\bullet$};
     \node[right of= A,label=below:$x_1$](Y1){$\bullet$};
     \node[right of= Y1,label=below:$x_{l_x-1}$](Yl){$\bullet$};
     \node[right of= Yl,label=right:$b$](B){$\bullet$};
     \node[above of= Y1,label=above:$y_1$](X1){$\bullet$};
     \node[right of= X1,label=above:$y_{l_y-1}$](Xl){$\bullet$};
     \node[below of= Y1,label=below:$z_1$](Z1){$\bullet$};
     \node[right of= Z1,label=below:$z_{l_z-1}$](Zl){$\bullet$};

     \node [above of=A,font=\Huge,label=above:$\huge{H^{l_x,l_y,l_z}}$](H1){};

	 \draw (A.center) edge[bend left] (X1.center);
	 \draw (A.center) edge (Y1.center);
	 \draw (A.center) edge[bend right] (Z1.center);

	 \draw (Xl.center) edge[bend left] (B.center);
	 \draw (Yl.center) edge (B.center);
	 \draw (Zl.center) edge[bend right] (B.center);
	
     \draw (X1.center) edge[dashed]  (Xl.center);
     \draw (Y1.center) edge[dashed]  (Yl.center);
     \draw (Z1.center) edge[dashed]  (Zl.center);

  \end{scope}

\begin{scope}[xshift=180]

     \node[label=left:$a$](A){$\bullet$};
     \node[right of= A,label=below:](Y1){};
     \node at (1.75,0) [label=below:$x_{1}$](Yl){$\bullet$};
     \node at (3.5,0) [label=right:$b$](B){$\bullet$};
     \node[above of= Y1,label=above:$y_1$](X1){$\bullet$};
     \node[right of= X1,label=above:$y_{l_y-1}$](Xl){$\bullet$};
     \node[below of= Y1,label=below:$z_1$](Z1){$\bullet$};
     \node[right of= Z1,label=below:$z_{l_z-1}$](Zl){$\bullet$};

     \node [above of =A,font=\Huge,label=above:$\huge{H^{2,l_y,l_z}}$](H1){};

	 \draw (A.center) edge[bend left] (X1.center);
	 \draw (A.center) edge (Y1.center);
	 \draw (A.center) edge[bend right] (Z1.center);

	 \draw (Xl.center) edge[bend left] (B.center);
	 \draw (Yl.center) edge (B.center);
	 \draw (Zl.center) edge[bend right] (B.center);
	
     \draw (X1.center) edge[dashed]  (Xl.center);
     \draw (Y1.center) edge  (Yl.center);
     \draw (Z1.center) edge[dashed]  (Zl.center);	

%     \tikzset{style = {bend left}}
\end{scope}

\begin{scope}[xshift=360]
     \node[label=left:$a$](A){$\bullet$};
     \node[right of= A,label=below:$x_1$](Y1){$\bullet$};
     \node[right of= Y1,label=below:$x_{2}$](Yl){$\bullet$};
     \node[right of= Yl,label=right:$b$](B){$\bullet$};
     \node[above of= Y1,label=above:$y_1$](X1){$\bullet$};
     \node[right of= X1,label=above:$y_{l_y-1}$](Xl){$\bullet$};
     \node[below of= Y1,label=below:$z_1$](Z1){$\bullet$};
     \node[right of= Z1,label=below:$z_{l_z-1}$](Zl){$\bullet$};

     \node  [above of=A,font=\Huge,label=above:$\huge{H^{3,l_y,l_z}}$](H1){};

	 \draw (A.center) edge[bend left] (X1.center);
	 \draw (A.center) edge (Y1.center);
	 \draw (A.center) edge[bend right] (Z1.center);

	 \draw (Xl.center) edge[bend left] (B.center);
	 \draw (Yl.center) edge (B.center);
	 \draw (Zl.center) edge[bend right] (B.center);
	
%     \tikzset{style = {bend left}}
     \draw (X1.center) edge[dashed]    (Xl.center);
     \draw (Y1.center) edge  (Yl.center);
     \draw (Z1.center) edge[dashed]  (Zl.center);
\end{scope}

  \end{tikzpicture}}
    \caption{This  figure  shows,  from   left  to  right,  a  graphic
      representation  of   (a  general)  $H^{l_x,l_y,l_z}$,   and  the
      particular cases of $H^{2,l_y,l_z}$ and $H^{3,l_y,l_z}$.} 
    \label{fig:hxyz}
  \end{figure}
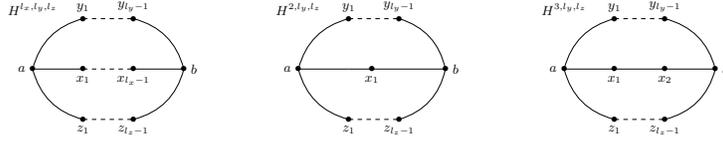
  \end{definition}

  \begin{lemma}\label{lem:h3notand}
 Any  $H^{l_x,l_y,l_z}$ graph such that $l_z \geq l_y \geq l_x > 3$ does not belong to \SET$(1)$.
 % $l_x$, $l_y$ and $l_z$ are strictly larger than $3$, does not belong to \SET$(1)$.  
  \end{lemma}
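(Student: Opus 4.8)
The plan is to use the combinatorial characterization of Theorem~\ref{thm:four_poitAND}: it suffices to show that \emph{no} ordering of $V(H^{l_x,l_y,l_z})$ satisfies the four point condition for \SET$(1)$ when $l_z \geq l_y \geq l_x > 3$. I would argue by contradiction: suppose $<_\pi$ is such an ordering. The first observation is to locate the two non-adjacent ``hub'' vertices $a$ and $b$ relative to the three paths. Since $a$ is adjacent to exactly one internal vertex of each path (namely $x_1,y_1,z_1$) and likewise $b$ is adjacent to exactly one ($x_{l_x-1},y_{l_y-1},z_{l_z-1}$), and the three paths are otherwise vertex-disjoint and induce no chords, I would first show that in $<_\pi$ the vertices $a$ and $b$ cannot both be ``interior'' with respect to all three paths without forcing a forbidden configuration.

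\textbf{Key steps.} First I would establish a ``no long straddling path'' lemma: if $u <_\pi v$ are non-adjacent and $P$ is an induced path from $u$ to $v$ all of whose internal vertices lie strictly between $u$ and $v$ in $<_\pi$, then $P$ has edge-length at most $3$. This is because if $P = u, w_1, \dots, w_k, v$ with $k \geq 3$, one picks the $<_\pi$-leftmost internal vertex $w_i$ and the $<_\pi$-rightmost internal vertex $w_j$; using that $u$ is adjacent to some internal vertex to the right and $v$ to some internal vertex to the left, repeated application of the four point condition forces chords in $P$, contradicting that $P$ is induced (this is essentially the argument already used in the proof of Lemma~\ref{lem:cycle} and in Corollary~\ref{cor:two-universal-no-and}). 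Second, I would show that at least one of the three paths $X,Y,Z$ must be ``straddled'' by the pair $\{a,b\}$ — i.e., have all its internal vertices between $a$ and $b$ in $<_\pi$. Indeed, if $a <_\pi b$ (wlog), consider the positions of the internal vertices: a path whose internal vertices all lie to the left of $a$ or all to the right of $b$ would itself be a path from $a$ (or $b$) back to itself-ish structure that I would rule out, and a path with some internal vertices on both sides of the interval $[a,b]$ would contain, together with $a$ and $b$, a four-point violation (some $x_i <_\pi a <_\pi b <_\pi x_j$ with $a x_j, x_i b$ realized along the path after contracting, forcing $ab \in E$, contradiction). Third, combining: each straddled path has edge-length $\leq 3$ by the first lemma; but since at least one (in fact, by a counting/pigeonhole refinement, all but possibly the few shortest) of the three paths is straddled, and the shortest path $X$ already has $l_x > 3$, we reach a contradiction with $l_x \geq 4$.

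\textbf{Main obstacle.} The delicate point is the second step — proving that at least one path is fully straddled by $\{a,b\}$, and handling the case analysis of where internal vertices of a path can sit relative to the interval $[a,b]$ in $<_\pi$. A path could in principle have its internal vertices scattered left of $a$, inside $[a,b]$, and right of $b$; I need to show that any such configuration either produces a direct four-point violation (via the path providing the two crossing edges $a x_j$ and $x_i b$ in the ``witness'' sense) or can be reduced to the straddled case. I expect this to require carefully chasing, along each path, the first vertex that enters the interval $[a,b]$ and the last one that leaves it, and invoking the four point condition on consecutive such ``entry/exit'' vertices together with $a$ and $b$. Once that structural claim is in place, the length bound $\leq 3$ on straddled induced paths and the hypothesis $l_x > 3$ finish the proof immediately.
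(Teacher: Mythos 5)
Your proposal's first key lemma --- that an induced path between non-adjacent vertices $u<_\pi v$ whose internal vertices all lie strictly between $u$ and $v$ must have edge-length at most $3$ --- is false, and the whole strategy rests on it. The four point condition is only triggered by \emph{crossing} pairs of edges, and the edges $uw_1$ and $w_kv$ of a straddled path need not cross: the path can ``fold'' inside the interval $[u,v]$. Concretely, for the induced path $u,w_1,w_2,w_3,v$ (edge-length $4$) the ordering $u<_\pi w_2<_\pi w_1<_\pi w_3<_\pi v$ satisfies the four point condition (the only crossing pair is $uw_1$ with $w_2w_3$, and the required edge $w_1w_2$ is present), yet $u,v$ are non-adjacent and all internal vertices lie strictly between them. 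Even more to the point, the paper's own Lemma~\ref{lem:threepaths-1-in-and} gives a counterexample inside the very family under discussion: in the ordering $a,z_1,\dots,z_{l_z-1},b,x,y_{l_y-1},\dots,y_1$ of $H^{2,l_y,l_z}$, which satisfies the four point condition, the induced path $Z$ joins the non-adjacent vertices $a$ and $b$, all of its internal vertices sit strictly between them, and its edge-length $l_z$ is arbitrarily large. Hence no length bound on a single straddled path can hold, and the concluding pigeonhole step (``some path is straddled, so its length is at most $3$, contradicting $l_x>3$'') cannot be repaired in this form; the obstruction is genuinely a phenomenon of two or three internally disjoint paths interacting, not of one path.

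This is also where your route diverges from the argument that actually works. The paper does not bound lengths of straddled paths; it exploits induced cycles. Lemma~\ref{lem:cycle} shows that in any \SET$(1)$-realization of a cycle the two extreme vertices must be adjacent and the realization order agrees with the cyclic order up to swaps of consecutive vertices. Applied to $H^{l_x,l_y,l_z}$, the two extremes of a hypothetical realization form an edge lying on a single path, say $X$; the two induced cycles $X\cup Y$ and $X\cup Z$ through that edge are then forced to be traversed in opposite senses, which pins down the relative positions of $a$, $b$, $y_2$ and $z_2$; finally the third induced cycle $Y\cup Z$ admits no legal pair of adjacent extreme vertices, a contradiction. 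The hypotheses $l_x,l_y,l_z>3$ enter through these cycle order constraints, not through any length bound on a single $a$--$b$ path. If you want to argue directly from the four point condition of Theorem~\ref{thm:four_poitAND}, you would at a minimum have to analyze pairs of the three paths simultaneously, i.e.\ re-derive the structure of cycle realizations, which is essentially the paper's proof.
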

   \begin{proof}
     The proof is by contradiction. Let $\mathcal{R}$ be an \SET$(1)$-realization of $H^{l_x,l_y,l_z}$. 
 %For   the  sake   of  simplicity,   we  denote   by  $X$   the  path
  % $x_0,x_1,x_2,\ldots,x_{l_x-1},x_{l_x}$     where     $x_0=a$    and
  % $x_{l_x}=b$.  We define $Y$ and $Z$ in the same way.  
   By Lemma~\ref{lem:cycle}, Condition \ref{item:cycleExt}, the extreme vertices of
   the realization must be neighbors. Then, both extremes belong to
   the same path.   Without loss of generality, we  assume that vertex
   $a$ is placed before than vertex $b$ in the realization ($a<_\mathcal{R} b$)
   and that both extremes belong to path $X$, say $x_k$ and
   $x_{k+1}$ with $k \in \{0,\ldots,l_x-1\}$.  Therefore, according to
   Lemma~\ref{lem:cycle}, the  induced cycles  $X\cup Y$ and  $X\cup Z$
   must be oriented clockwise $$(x_k,\ldots,x_0,y_1,\ldots,
   y_{l_y-1},b,\ldots,x_{l_x-1},\ldots,x_{k-1})$$ and anti-clockwise 
   $$(x_k,\ldots,x_0,z_1,\ldots,
   z_{l_y-1},b,\ldots,x_{l_x-1},\ldots,x_{k-1}),$$ respectively. 

On the  other hand, from Condition \ref{item:cycleAND} of Lemma~\ref{lem:cycle}, it
   holds  that  a  labeling  $l$  of an  induced  cycle  satisfies  the
   following  property:  if  for   two  vertices  $u,v$   it  holds
   $l(u)<l(v-1)$ then $\pi(u)<\pi(v)$, i.e., $u<_\mathcal{R}v$.  
Thus,    in   the    cycle   $X\cup    Y$,    $y_2<_\mathcal{R}$   and
   $a<_\mathcal{R}y_j$ for all $j>2$. 
   Symmetrically, for the cycle $X\cup Z$ , $a<_\mathcal{R}z_j$ for all
   $j>2$ and $z_2<_\mathcal{R} b$. 

Finally,  in the  induced  realization  of cycle  $Y\cup  Z$, the  only
  possible  pairs of  extreme vertices  are $(a,y_1),  (y_1,y_2)$ and
  $(a,z_1), (z_1,z_2)$.  If the  extremes are $(a,y_1)$ or $(y_1,y_2)$,
   the cycle $Y\cup Z$ is oriented anti-clockwise and 
  $b<_\mathcal{R}y_2$ which is a contradiction. 
Otherwise, if $(a,z_1)$ or  $(z_1,z_2)$ are the extremes, cycle $Y\cup
  Z$ is oriented clockwise and $b<_\mathcal{R}z_2$ which is also
  a contradiction. 
    \end{proof}

%From Section \ref{sec:order}, we know that $H^{l_x,l_y,l_z}$ do not belong to \SET$(1)$ when $l_x$, $l_y$ and $l_z$ are larger than $3$. 
%
%\begin{remark}
%  A weaker result of previous theorem was independently proved  in
%  \cite{hixon13} by Hixon.  The author prove that  graph $H^{4,4,4}$ do
%  not belong to \SET$(1)$.
%\end{remark}

We shall see  now that, indeed, the smaller  cases $H^{2,l_y,l_z}$ and
$H^{3,l_y,l_z}$  are  minimal  graphs  that  separate  \SET$(1)$  from
$c$-\SET$(1)$. 
%Let us first define the following graph.
%\begin{definition}
%Let $H^{l_x,l_y,l_z}$ be a finite graph that consists of two not neighboring vertices, say vertices $a$ and $b$, 
%and three vertex disjoint not-connected paths that connect $a$ with $b$,
%where each path has vertex-length at least one. The three paths that connect vertex $a$ with vertex $b$ follow: path 
%$x_1,x_2,\ldots,x_{l_x}$, path $y_1,y_2,\ldots,y_{l_y}$ and path $z_1,z_2,\ldots,z_{l_z}$, where the length of the paths $l_x$, $l_y$, and $l_z$ are larger or equal to $1$.  
%Therefore, the graph $H^{l_x,l_y,l_z} $ has vertex set $$V(H^{l_x,l_y,l_z} )=\{a,b, x_1,x_2,\ldots,x_{l_x},y_1,y_2,\ldots,y_{l_y},z_1,z_2,\ldots,z_{l_z} \}$$
%and edge set 
%\begin{eqnarray*}
%E(H^{l_x,l_y,l_z} )=\{(a,x_1),(x_1,x_2),\ldots,(x_{l_x-1},x_{l_x}),(x_{l_x},b),\\(a,y_1),(y_1,y_2),\ldots,(y_{l_y-1},y_{l_y}),(y_{l_y},b),\\(a,z_1),(z_1,z_2),\ldots,(z_{l_z-1},z_{l_z}),(z_{l_z},b) \}.
%\end{eqnarray*}
%A graphic representation of $H^{l_x,l_y,l_z}$ is shown in Figure \ref{fig:hxyz}.
%\end{definition}
%  \begin{figure}[t!]
%    \centering
%    \includegraphics[width=\textwidth]{images/hxyz}
%    \caption{This figure shows, from left to right, a graphic representation of $H^{l_x,l_y,l_z}$, $H^{1,l_y,l_z}$ and $H^{2,l_y,l_z}$.}
%    \label{fig:hxyz}
%  \end{figure}
  %
\begin{lemma}\label{lem:threepath-notin-cand}
Any  $H^{l_x,l_y,l_z}$ graph such that  $l_z \geq l_y \geq l_x > 2$ does not belong to $c$-\SET$(1)$. 
\end{lemma}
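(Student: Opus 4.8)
The plan is to reprise the proof of Lemma~\ref{lem:h3notand}, but to replace its use of Condition~\ref{item:cycleAND} of Lemma~\ref{lem:cycle} (which controls the order on a cycle only up to an error of $1$) by the sharper Condition~\ref{item:cyclecAND}, valid for central realizations; this is precisely what lets the bound on the shortest path drop from $l_x>3$ to $l_x>2$. Suppose, for contradiction, that $H=H^{l_x,l_y,l_z}$ with $l_z\ge l_y\ge l_x>2$ has a $c$-\SET$(1)$-realization $\mathcal{R}$, with induced order $<_{\mathcal{R}}$. Write $X\cup Y$, $X\cup Z$, $Y\cup Z$ for the three cycles of $H$ formed by the union of two of its paths. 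Since $a\not\sim b$ and every edge of $H$ lies on exactly one of the paths $X$, $Y$, $Z$ (here we use $l_x,l_y,l_z\ge2$), each of $X\cup Y$, $X\cup Z$, $Y\cup Z$ is a \emph{chordless} cycle, and every vertex of $H$ lies on at least two of them ($a$ and $b$ on all three; an internal vertex of a path on the two cycles through its path). As the restriction of a ($c$-)\SET$(1)$-realization to an induced subgraph is again a ($c$-)\SET$(1)$-realization of that subgraph, Lemma~\ref{lem:cycle} applies to the restriction of $\mathcal{R}$ to the vertex set of each of these three cycles.

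First I would show that the two $<_{\mathcal{R}}$-extreme vertices of $\mathcal{R}$, say $p$ and $q$, are adjacent. Each of $p,q$ lies on at least two of the three cycles, so they share one, say $C$; being the global extremes, $p$ and $q$ are also the extreme vertices of the restriction of $\mathcal{R}$ to $C$, so Condition~\ref{item:cycleExt} of Lemma~\ref{lem:cycle} (extreme vertices of a cycle realization are adjacent) gives $p\sim q$ in $C$, hence in $H$. Therefore $\{p,q\}$ is an edge of one of the three paths; after renaming the paths (harmless, since the hypothesis only asks that all three lengths exceed $2$) we may assume $\{p,q\}=\{x_k,x_{k+1}\}$ for some $0\le k\le l_x-1$, where $x_0=a$ and $x_{l_x}=b$.

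The heart of the proof is to locate the internal vertices of $Y$ and of $Z$ in $<_{\mathcal{R}}$. Consider the cycle $X\cup Y$. Its $<_{\mathcal{R}}$-extreme vertices are $p$ and $q$ (they are global extremes lying on $X\subseteq V(X\cup Y)$), so by Condition~\ref{item:cyclecAND} of Lemma~\ref{lem:cycle} the cyclic labeling of $X\cup Y$ agrees with the order ranks; hence, listed in increasing $<_{\mathcal{R}}$-order, the vertices of $X\cup Y$ trace out the unique Hamiltonian path of the cycle $X\cup Y$ whose endpoints are the adjacent vertices $x_k$ and $x_{k+1}$, namely the path obtained by deleting the edge $x_kx_{k+1}$. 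Since $x_kx_{k+1}$ belongs to the part of $X\cup Y$ coming from $X$, its removal leaves the whole $Y$-part $b\,y_{l_y-1}\cdots y_1\,a$ intact as a subpath; consequently $a,y_1,\dots,y_{l_y-1},b$ occur consecutively in $<_{\mathcal{R}}$ (in this order or its reverse), so in particular \emph{every internal vertex of $Y$ lies strictly between $a$ and $b$ in $<_{\mathcal{R}}$}. This is independent of the value of $k$: for $k=0$ (resp.\ $k=l_x-1$) the vertex $a$ (resp.\ $b$) is merely an endpoint of this contiguous block. Running the identical argument on $X\cup Z$ shows that every internal vertex of $Z$ likewise lies strictly between $a$ and $b$.

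Finally I would extract the contradiction from the cycle $Y\cup Z$: its vertices are $a$, $b$, and the internal vertices of $Y$ and $Z$, all of the latter lying strictly between $a$ and $b$ by the previous step. Hence $a$ and $b$ are exactly the two $<_{\mathcal{R}}$-extreme vertices of $Y\cup Z$, so Condition~\ref{item:cycleExt} of Lemma~\ref{lem:cycle} forces them to be adjacent in $Y\cup Z$, contradicting $a\not\sim b$. The only points requiring care are the orientation bookkeeping in the Hamiltonian-path step together with the degenerate cases $k\in\{0,l_x-1\}$ (where $p$ or $q$ coincides with $a$ or $b$), and checking that the restriction of the central realization to each of the three induced cycles is still central, so that Condition~\ref{item:cyclecAND} is legitimately available there; past that the argument is routine. (Note that $l_x>2$ is used only to keep the paths non-trivial: nothing above requires the path lengths to exceed $2$ rather than merely be $\ge2$.)
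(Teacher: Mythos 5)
Your proof is correct and follows essentially the same route as the paper's: extremes of the realization are adjacent (hence an edge of one path, WLOG $X$), Condition 3 of Lemma~\ref{lem:cycle} applied to the central realizations of the induced cycles $X\cup Y$ and $X\cup Z$ forces all internal vertices of $Y$ and $Z$ strictly between $a$ and $b$, and then the cycle $Y\cup Z$ yields the contradiction via Condition 1. You merely make explicit some points the paper leaves implicit (the extremes sharing a cycle, chordlessness, heredity of centrality) and phrase the final contradiction as $a,b$ being non-adjacent extremes rather than the right extreme being $y_1$ or $z_1$.
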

The proof of this lemma follows the same ideas of the proof of Lemma \ref{lem:h3notand}. 

\begin{proof} %[Proof of Lemma \ref{lem:threepath-notin-cand}]
    The  proof   is  by   contradiction.  Let  $\mathcal{R}$   be  an
    $c$-\SET$(1)$-realization of $H^{l_x,l_y,l_z}$.  
    Paths $X$, $Y$ and $Z$ are defined as the previous proof. 
    % the path
  %$x_0,x_1,x_2,\ldots,x_{l_x-1},x_{l_x}$     where     $x_0=a$    and
  %$x_{l_x}=b$.  We define $Y$ and $Z$ in the same way.  
  Since the extreme vertices of
  the         realization          must         be         neighbors
  (Lemma~\ref{lem:cycle}, Condition \ref{item:cycleExt}),   then  both  extremes
  belong to the same path.  W.l.o.g., we assume that vertex
  $a$ is placed before than vertex $b$ in the realization ($a<_\mathcal{R} b$)
  and that both extremes belong to $X$, says $x_k$ and
  $x_{k+1}$ with $k \in \{0,\ldots,l_x-1\}$.  Therefore, according to
  Lemma~\ref{lem:cycle}, the  induced cycles  $X\cup Y$ and  $X\cup Z$
  must be oriented clockwise and anti-clockwise, respectively. That is:
  $$x_k,\ldots,x_0,y_1,\ldots,
  y_{l_y-1},b,\ldots,x_{l_x-1},\ldots,x_{k-1}$$ and  $$x_k,\ldots,x_0,z_1,\ldots,
  z_{l_y-1},b,\ldots,x_{l_x-1},\ldots,x_{k-1},$$ respectively.  Thus,
  $y_1<_\mathcal{R} b$ and $z_1<_\mathcal{R} b$.

 Consider the  induced  realization  of cycle  $Y\cup  Z$. By  the
 previous discussion, we conclude that $a$ is the left extreme vertex of
 the induced  realization. Thus,  right extreme have  to be  $y_1$ or
 $z_1$. Then, either $b<_\mathcal{R} y_1$ or $b<_\mathcal{R} z_1$ which
 both are contradictions.
  %\qed
  \end{proof}

With the previous Lemma we have presented an infinite family of graphs that do not belong $c$-\SET$(1)$. 
Nevertheless, some of these graphs do belong to \SET$(1)$.% which is presented in the following two lemmas.

\begin{lemma}\label{lem:threepaths-1-in-and}
Graphs $H^{2,l_y,l_z}$ and $H^{3,l_y,l_z}$ belong to \SET$(1)$ for any $l_y$ and $l_z \geq 2$ and  $l_y$ and $l_z \geq 3$, respectively. 
%the graph $H^{2,l_y,l_z}$ (see Figure \ref{fig:hxyz}) belongs to \SET$(1)$.
\end{lemma}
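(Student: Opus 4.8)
The plan is to invoke the combinatorial characterization of Theorem~\ref{thm:four_poitAND}: it is enough to exhibit, for each of $H^{2,l_y,l_z}$ and $H^{3,l_y,l_z}$, a linear order of the vertices satisfying the four point condition for \SET$(1)$. For $H^{2,l_y,l_z}$ I would take
$$\pi_2:\quad y_{l_y-2},\,y_{l_y-3},\,\ldots,\,y_1,\ a,\ z_1,\,z_2,\,\ldots,\,z_{l_z-1},\ b,\ x_1,\ y_{l_y-1}$$
(the initial run of $y$'s is empty when $l_y=2$, so $\pi_2$ then begins with $a$), and for $H^{3,l_y,l_z}$ the order
$$\pi_3:\quad y_{l_y-2},\,\ldots,\,y_1,\ x_1,\ a,\ z_1,\,\ldots,\,z_{l_z-1},\ b,\ x_2,\ y_{l_y-1}.$$
Informally: lay the path $Z$ down monotonically between $a$ and $b$, ``unfold'' the path $Y$ so that its edge $y_{l_y-2}y_{l_y-1}$ spans the whole order (making $y_{l_y-2}$ and $y_{l_y-1}$ the two extreme vertices), and slip the short path $X$ in next to $a$ and $b$. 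Once the four point condition is checked, Theorem~\ref{thm:four_poitAND} gives that the canonical realization built from $\pi_2$ (resp.\ $\pi_3$) is the desired \SET$(1)$-realization.

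For the verification, first note that $y_{l_y-2}$ is the first vertex of the order and $y_{l_y-1}$ the last; hence any non-adjacent pair containing one of them satisfies the four point condition vacuously, since there is nothing before the first vertex nor after the last. So one only has to handle non-edges among the ``interior'' vertices, and these fall into a short list of families: pairs inside the monotone $Z$-block; pairs inside the $Y$-prefix; $a$ against the later $z_i$ and $y_i$; $b$ (and $x_1$, resp.\ $x_1,x_2$) against earlier vertices; and $y_i$ against $z_j$. For each such pair one reads off from the explicit positions that either the left endpoint has all its neighbours at or to the left of the right endpoint, or symmetrically the right endpoint has all its neighbours at or to the right of the left endpoint; that is exactly the four point condition for that pair. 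This part is routine bookkeeping.

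The one real point is the small number of ``long'' edges: $ax_1$ and $by_{l_y-1}$ in $\pi_2$, and $x_1x_2$, $ay_1$, $by_{l_y-1}$ in $\pi_3$. These are the only edges that cross other edges, so the four point condition must be inspected there directly. The crucial observation, and the only place the hypothesis $l_x\in\{2,3\}$ is used, is that each such crossing is harmless because its two inner endpoints are adjacent: in $\pi_2$ the edge $ax_1$ crosses only $by_{l_y-1}$, and in the quadruple $a,b,x_1,y_{l_y-1}$ the inner pair is $\{b,x_1\}$, which is an edge since $l_x=2$ forces $x_1\sim b$; in $\pi_3$ the edge $x_1x_2$ crosses only $ay_1$ and $by_{l_y-1}$, with inner pairs $\{x_1,a\}$ and $\{b,x_2\}$, edges since $l_x=3$ gives $x_1\sim a$ and $x_2\sim b$. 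As $uv\in E$ in each of these quadruples, the condition holds, and the lemma follows. I expect this last step, pinning down exactly which edges each long edge crosses and checking that the inner pair is always an edge, to be the only delicate part; it also explains why the statement stops at $l_x\le 3$, in accordance with Lemma~\ref{lem:h3notand}: for $l_x\ge 4$ the first interior vertex of $X$ is no longer adjacent to $a$, so the analogous crossing becomes a genuine violation.
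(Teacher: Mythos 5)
Your proposal is correct and follows essentially the same route as the paper: exhibit an explicit vertex ordering for each of $H^{2,l_y,l_z}$ and $H^{3,l_y,l_z}$, observe that the only crossing edge pairs are $ax_1$ with $by_{l_y-1}$ (resp.\ $x_1x_2$ with $ay_1$ and with $by_{l_y-1}$) and that their inner endpoints are adjacent precisely because $l_x\in\{2,3\}$, and conclude via the four point condition of Theorem~\ref{thm:four_poitAND}. The only difference is cosmetic: you place the remaining $Y$-vertices at the front of the order instead of at the end as the paper does, which changes nothing in the crossing analysis.
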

The proof of this lemma follows by giving orderings of the set of vertices of $H^{2,l_y,l_z}$ and $H^{3,l_y,l_z}$ that satisfy the four point condition for 
\SET$(1)$. Figure \ref{fig:proof-h1yz} shows graphically such orders. 
%Nevertheless, a detailed proof of this lemma can be found in the Appendix.

  \begin{proof}
  Consider any $H^{2,l_y,l_z}$ graph. In this case, the first  path has length $2$, therefore, we denote its vertex by $x$ without subindex. 
In order to prove the Lemma, we give an ordering of the vertices of $H^{2,l_y,l_z}$ that satisfies the four point condition for \SET$(1)$. 
Consider the following ordering for $V(H^{2,l_y,l_z})$: $$a,z_1,z_2,\ldots,z_{l_z-1},b, x, y_{l_y-1},y_{l_y-2}\ldots,y_1.$$
For the $i$th vertex, we define $p_i$ to be equal to $i$. 
The intervals are defined  as follow: $I_a = [p_a,p_{y_1}]$; $I_{z_i}
=  [p_{z_i}-1,p_{z_i}+1]$;  $I_b=[p_b,p_{l_y}]$;  $I_x=[p_a,p_x]$;
$I_{y_1}=[p_b,p_{y_1}]$;              $I_{y_i}=[p_{y_i}-1,p_{y_i}+1]$;
$I_{y_1}=[p_a,p_{y_1}]$.

  \begin{figure}[t!]
    \centering
%    \includegraphics[width=\textwidth]{images/proof-h1yz}
         % \resizebox{.42\linewidth}{!}{\input{images/H2yz}}
         % \resizebox{.45\linewidth}{!}{\input{images/H3yz}}         
          \resizebox{\linewidth}{!}{\begin{tikzpicture}[node distance=1.2cm, font=\LARGE]
\begin{scope}
     \node[label=below:$a$](A){$\bullet$};
     \node[right of= A,label=below:$z_1$](Z1){$\bullet$};
     \node[right of =Z1,label=below:$z_2$](Z2){$\bullet$};
     \node[right of =Z2,label=below:$z_{l_z-1}$](Zl){$\bullet$};
     \node[right of= Zl,label=below:$b$](B){$\bullet$};
     \node[right of= B,label=below:$x$](X){$\bullet$};
     \node[right of= X,label=below:$y_{l_v-1}$](Yl){$\bullet$};
     \node[right of= Yl,label=below:$y_1$](Y1){$\bullet$};

	 \draw (A.center) edge (Z1.center);
	 \draw (Z1.center) edge (Z2.center);
	 \draw (Z2.center) edge[dashed] (Zl.center);
	 \draw (Zl.center) edge (B.center);
	 \draw (B.center) edge (X.center);
	 \draw (Yl.center) edge[dashed] (Y1.center);	 
	
     \tikzset{style = {bend left}}
     \draw (A.center) edge[out=30,in=150]  (Y1.center);
     \draw (A.center) edge[out=25,in=155]  (X.center);
     \draw (B.center) edge[out=60,in=120]  (Yl.center);
%     \draw(Z1.center) edge[out=60,in=120] (Zl.center);
\end{scope}

\begin{scope}[font=\LARGE, xshift=11cm]
     \node[label=below:$y_1$](Y1){$\bullet$};
     \node[right of= Y1,label=below:$x_1$](X1){$\bullet$};
     \node[right of= X1,label=below:$a$](A){$\bullet$};
     \node[right of= A,label=below:$z_1$](Z1){$\bullet$};
     \node[right of =Z1,label=below:$z_{l_z-1}$](Zl){$\bullet$};
     \node[right of= Zl,label=below:$b$](B){$\bullet$};
     \node[right of= B,label=below:$x_2$](X2){$\bullet$};
     \node[right of= X2,label=below:$y_{l_v-1}$](Yl){$\bullet$};
     \node[right of= Yl,label=below:$y_2$](Y2){$\bullet$};

	 \draw (X1.center) edge (A.center);
	 \draw (A.center) edge (Z1.center);
	 \draw (Z1.center) edge[dashed] (Zl.center);
	 \draw (Zl.center) edge (B.center);
	 \draw (B.center) edge (X2.center);
	 \draw (Yl.center) edge[dashed] (Y2.center);	 
	
     \tikzset{style = {bend left}}
     \draw (Y1.center) edge[out=30,in=150]  (A.center);
     \draw (X1.center) edge[out=25,in=155]  (X2.center);
     \draw (B.center) edge[out=45,in=135]  (Yl.center);
    \draw(Y1.center) edge[out=30,in=150] (Y2.center);
\end{scope}

  \end{tikzpicture}}         
\caption{This figure shows a graphic representation of $H^{2,l_y,l_z}$
  and $H^{3,l_y,l_z}$ where vertices are ordered according to the ordering
  given in the proof of Lemma \ref{lem:threepaths-1-in-and}.} % and \ref{lem:threepaths-2-in-and}.}
    \label{fig:proof-h1yz}
  \end{figure}
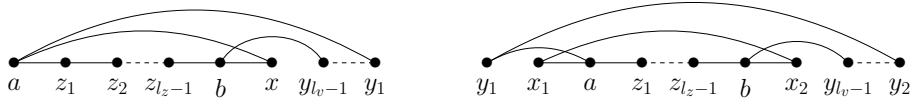  

% \begin{figure}%
% \centering
% \subfloat{\label{fig:proof-h1yz}\resizebox{.45\linewidth}{!}{\input{images/H2yz}}}\quad
% \subfloat{\label{fig:proof-h2yz}\resizebox{.45\linewidth}{!}{\input{images/H3yz}}}
% \caption{This figure shows a graphic representation of $H^{2,l_y,l_z}$
%  and $H^{3,l_y,l_z}$ where vertices are ordered according to the ordering
%  given in the proof of Lemma \ref{lem:threepaths-1-in-and} and \ref{lem:threepaths-2-in-and}.}
% \label{3figs}
% \end{figure}

% \begin{figure}%
% \centering
% \subfigure[First.]{\resizebox{\linewidth}{!}{\input{images/H2yz}} \label{fig:h2yz}}\qquad
% \subfigure[Second figure.]{ \resizebox{\linewidth}{!}{\input{images/H3yz}} \label{fig:h3yz}}
% \caption{Three subfigures.}
% \label{3figs}
% \end{figure}
%
% A graphic representation of $H^{2,l_y,l_z}$ where the vertices are ordered according the above described ordering is shown in Figure \ref{3figs}. 

% In order to check that this ordering satisfies the four point condition for  \SET$(1)$, we have to verify that for every two edges that crosses one to each other, the central vertices are connected. 
As it can be seen in Figure \ref{fig:proof-h1yz}, according to this ordering of the vertices the only pair of edges 
that crosses one to another are edges $ax$ and $by_{l_y-1}$. Since vertices $b$ and $x$ are connected by an edge in 
$H^{2,l_y,l_z}$, the four point condition is satisfied. Therefore, the graph $H^{2,l_y,l_z}$ belongs to \SET$(1)$.
  
Consider any $H^{3,l_y,l_z}$ graph, we give an ordering of $V(H^{3,l_y,l_z})$ that satisfies the four point condition for \SET$(1)$. 
Consider the following  order for $V(H^{3,l_y,l_z})$: $$y_1,x_1,a,z_1,z_2,\ldots,z_{l_z-1},b,x_2,y_{l_y-1}, y_{l_y-2},\ldots,y_2.$$  
For the $i$th vertex, we define $p_i$ to be equal to $i$. 
The intervals are defined as follow: $I_{y_1} = [p_{y_1},p_{y_2}]$; $I_{x_1}=[p_{x_1},p_{x_2}]$; $I_a=[p_{y_1},p_a+1]$; $I_{z_i} = [p_{z_i}-1,p_{z_i}+1]$; $I_b=[p_b-1,p_{y_{l_y}}]$; $I_{y_i}=[p_{y_i}-1,p_{y_i}+1]$; $I_{y_2}=[p_{y_1},p_{y_2}]$.

%  \begin{figure}[t!]
%     \centering
% %    \includegraphics[width=\textwidth]{images/proof-h2yz}
%          \resizebox{\linewidth}{!}{\input{images/H3yz}}
%     \caption{This figure shows a graphic representation of $H^{3,l_y,l_z}$.  In this drawing of $H^{3,l_y,l_z}$, the vertices are ordered linearly according to the ordering given in the proof of Lemma \ref{lem:threepaths-2-in-and}.}
%     \label{fig:proof-h2yz}
%   \end{figure}
% 
% A graphic representation of $H^{3,l_y,l_z}$ where the vertices are ordered according the above described ordering is shown in Figure \ref{3figs}. 

In order to finish the proof, we have to check that this ordering satisfies the four point condition for \SET$(1)$.
As it can be seen in Figure \ref{fig:proof-h1yz}, there are two pair of edges that crosses one to each other. One pair is composed by edges $y_1a$ and $x_1x_2$. Since vertices $a$ and $x_1$ are neighbors, the condition holds. 
The second pair is composed by edges $x_1x_2$ and $by_{l_y-1}$. Since vertices $x_2$ and $b$ are neighbors, the condition holds. 
Therefore, any graph $H^{3,l_y,l_z}$ belongs to \SET$(1)$.
%\qed
\end{proof}

We consider important to stress the complete bipartite graph $K_{2,3}$ as a particular case of Lemma \ref{lem:threepath-notin-cand} and Lemma \ref{lem:threepaths-1-in-and}, i.e., 
 $K_{2,3}$ belongs to \SET$(1)$ but it does not belong to $c$-\SET$(1)$. Such an importance comes from the fact that $K_{2,3}$ is 
 %the biggest complete bipartite graph that
  %belong to \SET$(1)$ and 
  the smallest complete bipartite graph that does not belong to $c$-\SET$(1)$.
%
%
%\begin{lemma}\label{lem:k23-notin-can
%The bipartite graph $K_{2,3}$ does not belong to $c$-\SET$(1)$. 
%\end{lemma}
%\begin{proof}
%The proof is by contradiction. Let us assume that there exists a $c$-\SET$(1)$-realization for $K_{2,3}$. 
%Let us denote by $a$ and $b$ the vertices in the partition with two vertices, and $x$, $y$, $z$ the vertices in the partition with three vertices. 
%Since vertices $a$ and $b$ are not connected in  $K_{2,3}$, there is at least one of the two representative elements that does not belong to the other's interval. 
%Without loss of generality, let us assume that $p_a$ does not belong to  $I_b$. %The construction of the contradiction starts from $b$. 
%Since $b$ is neighbor with $x$, $y$ and $z$ , one half of $I_b$ (either left or right half) must contain two representative elements among $p_x$, $p_y$ or $p_z$. 
%Without loss of generality, let us assume that the left half of $I_b$ contains representative elements  $p_x$ and $p_y$. 
%Let us assume (w.o.l.g. again) that $p_x < p_y$, hence $p_y$ is placed in between $p_x$ and $p_b$. 
%Therefore, since $p_b$ belongs to $I_x$ it holds that  $p_y \in I_x$. Since $x$ and $y$ are not neighbors, $I_y$ does not contain $p_x$. Therefore, $I_y$ is completely contained in $I_b$.
%On the other hand, vertex $y$ and vertex $a$ are neighbors. Therefore, $p_a$ must belong to $I_y$. In consequence, it also belongs to $I_b$ which is a contradiction. 
%\end{proof}
%
As a consequence of Lemma \ref{lem:threepath-notin-cand} and the fact that the property of belonging to $c$-\SET$(1)$ is hereditary, 
we can say that any graph that contains a $H^{l_x,l_y,l_z}$ as an induced subgraph 
does not belong to $c$-\SET$(1)$. 
%Particularly, any complete bipartite graph $K_{n,m}$, where $n \geq 2$ and $m \geq 3$, does not belong to $c$-\SET$(1)$,
On the other hand, from Lemma \ref{lem:threepaths-1-in-and} %and Lemma \ref{lem:threepaths-2-in-and},  
we know that some of these graphs do belong to \SET$(1)$. 
\section{Future work}\label{sec:future-work}
% In this document we introduced a new graph family named
%  \SET$(d)$ and a subfamily  $c$-\SET$(d)$.  
%   We  studied the one dimensional case of both families. 
  %In this context graph realization
  %can be seen as an ordering of vertices.
%
% We gave a combinatorial characterization for the \SET$(1)$
%  and an intersection characterization for both \SET$(1)$ and
%  $c$-\SET$(1)$ families. 
%  These characterizations allowed us to show the
%  relation with other well known graph classes. These results are graphical expressed in Figure \ref{fig:results}. 
%  Moreover, we provided algorithms to construct \SET$(1)$-realizations for \textsc{Rooted directed path}, \textsc{Bipartite convex}, \textsc{Interval}, \textsc{Block} and
%  \textsc{Outerplanar} graphs. 
%  We also provided an infinite set of minimal graphs that separates \SET$(1)$ and $c$-\SET$(1)$.
Our results are graphical expressed in Figure \ref{fig:results}.
On the other hand, our work suggests several  directions for future research. In our opinion, the
  most natural  question is  to find a  combinatorial characterization
  for the $c$-\SET$(1)$ family.  Another interesting open
  problem  concerns  to determine  the  complexity  of the recognition
  problem for  both \SET$(1)$ and
 $c$-\SET$(1)$ families.
The  study  of higher  dimensions of  the
  families is an alternative way to continue this research. 
  Another interesting question is the study of the 
  family of graphs   generated  when points  are embedded in a  different
  metric space, for instance the $d$-dimensional torus. 
   
   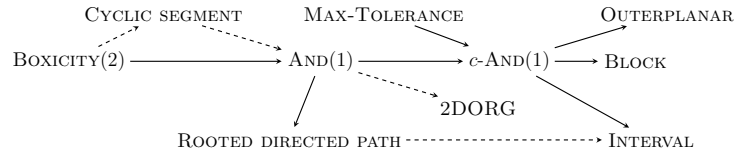
\begin{figure}[t!]
  \begin{center}
     \resizebox{0.8\linewidth}{!}{\begin{tikzpicture}[scale=.8,
  thick,main node/.style={rectangle,rounded corners=.8ex}, font=\Large, arr/.style={>=stealth,->, shorten >=1pt }]
	
\tikzstyle{dot}=[color=black!50,thick, dotted];

\begin{scope}[thick]
	\node[main node] (0)  at (-10,0) {\textsc{Boxicity}(2)};
   \node[main node] (1)  at (-2,0) {\textsc{And}(1)};
  \node[main node] (2)  at (4,0) {$c$-\textsc{And}(1)}; 
  \node[main node] (3)  at (8.5,-2.5) {\textsc{Interval}};  
  \node[main node] (4)  at (9,1.5) {\textsc{Outerplanar}};  
  \node[main node] (5)  at (8,0) {\textsc{Block}};  
  \node[main node] (6)  at (0,1.5) {\textsc{Max-Tolerance}};
  \node[main node] (7)  at (-3,-2.5) {\textsc{Rooted directed path}};
%  \node[main node] (8) at (1, -1.5){\textsc{Interval bigraphs}};
  \node[main node] (9) at (3, -1.5) {\textsc{2DORG}};
  \node[main node] (10) at (-7,1.5) {\textsc{Cyclic segment}};

  \draw
    (0) edge[arr,style=dashed] (10)
    (1) edge[arr] (2)
    (2) edge[arr] (3)
    (2) edge[arr] (4)
    (2) edge[arr] (5)
    (6) edge[arr] (2)
    (1) edge[arr] (7)
%    (1) edge[arr] (8)
    (1) edge[arr,style=dashed] (9) 
    (7)edge[arr,style=dashed] (3)
    (10) edge[arr,style=dashed] (1)
    (0) edge [arr] (1)
;

\end{scope}

\end{tikzpicture}}
  \end{center}
     \caption{Relation    between   the    graph   classes    in   the
       document. Arrows point from the
       superclass  to the  subclass. Dotted  lines represent  previous
       results and solid lines represent results proved in this document.} 
     \label{fig:results}
\end{figure}

%%% Local Variables: 
%%% mode: latex
%%% TeX-master: "Full-Version"
%%% End: 

%
\paragraph{Acknowledgements}
 The authors would like to  thank Antonio Fern\'andez Anta and Marcos
 Kiwi  because they  are strongly  involved  in the  origins of  this
 study, even more, they contributed with enlightening talks and ideas. 
%
%\nocite{*}
%\bibliographystyle{abbrvnat}
% use the following instead if you encounter problems 
%\bibliographystyle{alpha}
%\bibliography{citation}

\begin{thebibliography}{FCFM09}

\bibitem[Bil92]{bilski92}
T.~Bilski.
\newblock Embedding graphs in books: A survey.
\newblock {\em IEE Proceedings-E}, 139(2):134--138, March 1992.

\bibitem[BK79]{Bernhart1979320}
F.~Bernhart and P.~C. Kainen.
\newblock The book thickness of a graph.
\newblock {\em Journal of Combinatorial Theory, Series B}, 27(3):320 -- 331,
  1979.

\bibitem[BLS99]{brandstdt1999graph}
A.~Brandst{\"a}dt, V.B. Le, and J.P. Spinrad.
\newblock {\em Graph Classes: A Survey}.
\newblock Siam Monographs on Discrete Mathematics and Applications. Society for
  Industrial and Applied Mathematics, 1999.

\bibitem[BM07]{bondy2007graph}
J.J.A. Bondy and U.S.R. Murty.
\newblock {\em Graph Theory}.
\newblock Graduate Texts in Mathematics Series. Springer London, 2007.

\bibitem[CCH{\etalchar{+}}]{CCH13}
D.~Cantanzaro, S.~Chaplick, B.~Halld\'orsson, M.~Halld\'orsson, and J.~Stacho.
\newblock Max point-tolerance graphs.

\bibitem[EGP64]{Erdos64}
P.~Erd{\"o}s, A.~W. Goodman, and L.~Posa.
\newblock The representation of a graph by set intersections, 1964.

\bibitem[FCFM09]{DBLP:journals/talg/Farach-ColtonFM09}
M.~Farach-Colton, R.~J. Fernandes, and M.~A. Mosteiro.
\newblock Bootstrapping a hop-optimal network in the weak sensor model.
\newblock {\em ACM Transactions on Algorithms}, 5(4), 2009.

\bibitem[Feu]{LF13}
L.~Feuilloley.
\newblock Personal communication.

\bibitem[GM82]{GoliumbicMonma82}
M.~C. Golumbic and C.~L. Monma.
\newblock {A generalization of interval graphs with tolerances}.
\newblock {\em Combinatorica}, 1982.

\bibitem[Gol04]{golumbic2004algorithmic}
M.~C. Golumbic.
\newblock {\em Algorithmic Graph Theory and Perfect Graphs}.
\newblock Annals of Discrete Mathematics. Elsevier, 2004.

\bibitem[GT04]{golumbic2004tolerance}
M.~C. Golumbic and A.~N. Trenk.
\newblock {\em Tolerance Graphs}.
\newblock Cambridge Studies in Advanced Mathematics. Cambridge University
  Press, 2004.

\bibitem[Gy{\"o}84]{gyori1984minimax}
E.~Gy{\"o}ri.
\newblock A minimax theorem on intervals.
\newblock {\em Journal of Combinatorial Theory, Series B}, 37(1):1--9, 1984.

\bibitem[Hix13]{hixon13}
T.~S. Hixon.
\newblock Hook graphs and more: Some contributions to geometric graph theory.
\newblock Master's thesis, Technische Universit{\"a}t Berlin, 2013.

\bibitem[HK01]{hlineny01}
P.~Hlineny and J.~Kratochv\'il.
\newblock Representing graphs by disks and balls (a survey of
  recognition-complexity results).
\newblock {\em Discrete Mathematics}, 229(1 - 3):101 -- 124, 2001.

\bibitem[KKLS06]{kaufmann06}
M.~Kaufmann, J.~Kratochv\'{\i}l, K.~A. Lehmann, and A.~R. Subramanian.
\newblock Max-tolerance graphs as intersection graphs: cliques, cycles, and
  recognition.
\newblock In {\em Proceedings of the seventeenth annual ACM-SIAM symposium on
  Discrete algorithm}, SODA '06, pages 832--841, New York, NY, USA, 2006. ACM.

\bibitem[Koe36]{koebe36}
P.~Koebe.
\newblock Kontaktprobleme der konformen abbildung.
\newblock {\em Berichte {\"u}ber die Verhand-lungen de S{\"a}chsischen, Akad.
  Wiss. Leipzig, Math.-Phys.}, 88:141--164, 1936.

\bibitem[KWZ08]{springerlink:10.1007/s11276-007-0045-6}
F.~Kuhn, R.~Wattenhofer, and A.~Zollinger.
\newblock Ad hoc networks beyond unit disk graphs.
\newblock {\em Wireless Networks}, 14:715--729, 2008.

\bibitem[Lub91]{lubiw1991weighted}
A.~Lubiw.
\newblock A weighted min-max relation for intervals.
\newblock {\em Journal of Combinatorial Theory, Series B}, 53(2):151--172,
  1991.

\bibitem[MM99]{mckee1999topics}
T.A. McKee and F.R. McMorris.
\newblock {\em Topics in Intersection Graph Theory}.
\newblock Siam Monographs on Discrete Mathematics and Applications. Siam, 1999.

\bibitem[Ola91]{Olariu199121}
S.~Olariu.
\newblock An optimal greedy heuristic to color interval graphs.
\newblock {\em Information Processing Letters}, 37(1):21 -- 25, 1991.

\bibitem[Rob69a]{roberts69}
F.~S. Roberts.
\newblock Indifference graphs.
\newblock In F.~Harary, editor, {\em Proof Techniques in Graph Theory}, pages
  139--146. Academic Press, New York, 1969.

\bibitem[Rob69b]{Roberts1969b}
F.~S. Roberts.
\newblock On the boxicity and cubicity of a graph.
\newblock In W.~T. Tutte, editor, {\em Recent progress in combinatorics}, pages
  301--310. Academic press, New York, 1969.

\bibitem[SMS45]{SzpilrajnSur1945}
S.-M. and E.~Sur.
\newblock Sur deux propri\'et\'es des classes d'ensembles. (french).
\newblock {\em Fundamenta Mathematicae}, 33(1):303--307, 1945.

\bibitem[Spi03]{spinrad2003efficient}
J.P. Spinrad.
\newblock {\em Efficient Graph Representations.: The Fields Institute for
  Research in Mathematical Sciences.}
\newblock Fields Institute Monographs, 19. American Mathematical Society, 2003.

\bibitem[Tuc72]{tucker1972structure}
A.~Tucker.
\newblock A structure theorem for the consecutive 1's property.
\newblock {\em Journal of Combinatorial Theory, Series B}, 12(2):153--162,
  1972.

\bibitem[WG86]{springerlink:10.1007/BF02460022}
M.~Waterman and J.~Griggs.
\newblock Interval graphs and maps of dna.
\newblock {\em Bulletin of Mathematical Biology}, 48:189--195, 1986.

\end{thebibliography}
%
\newcommand{\etalchar}[1]{$^{#1}$}

\end{document}